\documentclass[12pt,a4paper]{article}

\usepackage{amsfonts, mathrsfs, amsmath, amssymb, amsthm, color, 
enumitem, float, graphicx, lipsum, longtable, lscape, mathrsfs, mathtools, multirow, nccmath, pdflscape, rotating, tocloft}
\usepackage{algorithm}
\usepackage[top=1.4in, bottom=1.4in, left=1.35in, right=1.35in]{geometry}
\usepackage{natbib}
\usepackage{url}
\usepackage[pdftex,colorlinks=true,hypertexnames=false]{hyperref}
\usepackage[doublespacing]{setspace}
\usepackage[table]{xcolor}
\usepackage{tikz}
\usetikzlibrary{arrows,decorations.pathmorphing,backgrounds,positioning,fit,matrix}
\usepackage{subcaption}
\usepackage{palatino,eulervm}
\usepackage[T1]{fontenc}
\usepackage{bm}
\usepackage{booktabs}
\usepackage{caption}
\usepackage{algorithm, algorithmic}%
\usepackage[normalem]{ulem}
\usepackage{cancel, soul}

\usepackage{xr-hyper}
\definecolor{darkblue}{rgb}{0,0,.6}
\hypersetup{citecolor=darkblue,linkcolor=darkblue,urlcolor=darkblue}
\def\be{\begin{equation}}
\def\ee{\end{equation}}
\def\bea{\begin{eqnarray}}
\def\eea{\end{eqnarray}}
\def\mE{{\sf E}}

\providecommand{\U}[1]{\protect\rule{.1in}{.1in}}

\DeclareMathOperator*{\argmin}{\arg\!\min}
\DeclareMathOperator*{\argmax}{\arg\!\max}

\DeclareMathOperator*{\diag}{\normalfont\textrm{diag}}

\newcommand{\pr}[1]{\mathbb{P} \left( #1 \right)}

\newcommand{\var}[1]{\mathrm{Var} \left( #1 \right)}
\newcommand{\cov}[2]{\mathrm{Cov} \left( #1, \, #2 \right)}

\usepackage{caption}
\usepackage{float}
\usepackage{algorithm}
\usepackage{algorithmic}


\begin{document}

\newtheorem{corollary}{Corollary}
\newtheorem{definition}{Definition}
\newtheorem{lemma}{Lemma}
\newtheorem{proposition}{Proposition}
\newtheorem{remark}{Remark}
\newtheorem{theorem}{Theorem}
\newtheorem{example}{Example}
\newtheorem{assumption}{Assumption}
\newtheorem{prop}{Proposition}

\numberwithin{corollary}{section}
\numberwithin{definition}{section}
\numberwithin{equation}{section}
\numberwithin{lemma}{section}
\numberwithin{proposition}{section}
\numberwithin{remark}{section}
\numberwithin{theorem}{section}

\onehalfspacing

\newtheorem{Proof}{Proof}
\newtheorem{Lemma}{Lemma}
\newtheorem{Mth}{Main Theorem}
\newtheorem{Res}{\underline{\bf Result}}
\newtheorem{Def}{Definition}
\newtheorem{Rem}{\underline{\bf Remark}}
\newtheorem{Qes}{Question}
\newtheorem{Aim}{Aim}
\newtheorem{Pro}{Proposition}
\newtheorem{Lem}{\underline{\bf Lemma}}
\newtheorem{Cor}{\underline{\bf Corollary}}
\newtheorem{Ex}{Example}
\newtheorem{Eq}{Equation}
\newtheorem{condition}{Assumption}
\renewcommand{\thecondition}{\arabic{condition}}
\newtheorem{conditionnew}{Condition}
\newcommand{\calW}{{\mathscr W}}
\newcommand{\calD}{{\mathscr D}}
\def\lm{\lambda_{\tiny\max}}
\def\lmin{\lambda_{\tiny\min}}
\def\bse{\begin{eqnarray*}}
	\def\ese{\end{eqnarray*}}
\def\be{\begin{eqnarray}}
	\def\ee{\end{eqnarray}}
\def\bsq{\begin{equation*}}
	\def\esq{\end{equation*}}
\def\bq{\begin{equation}}
	\def\eq{\end{equation}}

\def\blue{\color{blue}}
\def\top{{^\intercal}}
\def\Y{{\bf Y}}
\def\X{{\bf X}}
\def\Z{{\bf Z}}
\def\w{{\bf w}}
\def\bb{\boldsymbol{\beta}}
\def\ba{\boldsymbol {\alpha}}
\def\bg{{\boldsymbol\gamma}}
\def\btheta{{\boldsymbol\theta}}
\def\beps{{\boldsymbol\epsilon}}
\def\bmu{{\boldsymbol\mu}}
\def\bPi{{\boldsymbol \Pi}}
\def\m{{{(m)}}}
\def\sQ{{\sf Q}}
\def\sN{{\sf N}}
\def\sT{{\sf T}}
\def\sK{{\sf K}}
\def\sP{{\sf P}}
\def\sM{{\sf M}}
\def\supw{\sup_{\w\in{\mathscr W}}}
\def\argmin{\mbox{argmin}}
\def\argmax{\mbox{argmax}}
\def\wh{\widehat}
\def\wt{\widetilde}
\def\n{\nonumber}
\def\calS{\mbox{$\mathcal{S}$}}

\def\bveps{{\boldsymbol\varepsilon}}
\def\hb{\widehat{\bb}}
\def\he{\widehat{\varepsilon}}
\def\cs{\calS_{Y\mid\x}}
\def\cms{\calS_{E(Y\mid\x)}}
\def\defby{\stackrel{\mbox{\textrm{\tiny def}}}{=}}
\def\0{{\bf 0}}
\def\A{{\bf A}}
\def\cA{\mathcal{A}}
\def\a{{\bf a}}
\def\B{{\bf B}}
\def\c{{\bf c}}
\def\D{{\bf D}}
\def\FV{\text{FV}}
\def\e{{\bf e}}
\def\H{{\bf H}}
\def\V{{\bf V}}
\def\g{{\bf g}}
\def\r{{\bf r}}
\def\f{{\bf f}}
\def\l{{\bf l}}
\def\L{{\bf L}}
\def\h{{\bf h}}
\def\b{{\bf b}}
\def\bM{{\bf M}}
\def\M{\mbox{ $\mathcal{M}$}}
\def\BB{\mbox{ $\mathcal{B}$}}
\def\N{\mbox{ $\mathcal{N}$}}
\def\K{{\bf K}}
\def\t{{\bf t}}
\def\T{{\bf T}}
\def\G{{\bf G}}
\def\bP{{\bf P}}
\def\bQ{{\bf Q}}
\def\bV{{\bf V}}
\def\hQ{{\widehat \bQ}}
\def\U{{\bf U}}
\def\S{{\bf S}}
\def\u{{\bf u}}
\def\v{{\bf v}}
\def\W{{\bf W}}
\def\bO{{\bf O}}
\def\w{{\bf w}}
\def\X{{\bf X}}
\def\x{{\bf x}}
\def\eps{\epsilon}
\def\I{{\bf I}}
\def\J{{\bf J}}
\def\tx{{\widetilde \x}}

\def\Ybar{{\overline{Y}}}
\def\xbar{{\overline{\x}}}
\def\Wbar{{\overline{\W}}}
\def\wbar{{\overline{\w}}}
\def\bphi{{\boldsymbol \phi}}
\def\bPhi{{\bf \Phi}}
\def\bUps{{\bf \Upsilon}}
\def\bSig{{\bf \Sigma}}
\def\bOme{{\bf \Omega}}
\def\bDel{{\bf \Delta}}
\def\bdel{{\boldsymbol \delta}}
\def\bbeta{{\boldsymbol \beta}}
\def\bphi{{\boldsymbol \phi}}
\def\bPhi{{\bf \Phi}}
\def\bpsi{{\boldsymbol \psi}}
\def\bPsi{{\bf \Psi}}
\def\tDel{{\wt{{\bf \Delta}}}}
\def\tdel{{\wt{{\boldsymbol \delta}}}}
\def\tphi{{\wt{{\boldsymbol \phi}}}}
\def\tPhi{{\wt{{\bf \Phi}}}}

\def\bLam{{\bf \Lambda}}
\def\diag{\hbox{diag}}
\def\bq{\begin{equation}}
	\def\eq{\end{equation}}
\def\pr{\hbox{pr}}
\def\wt{\widetilde}
\def\diag{\hbox{diag}}
\def\overp{\stackrel{p}\longrightarrow}
\def\log{\hbox{log}}
\def\bias{\hbox{bias}}
\def\Siuu{\boldSigma_{i,uu}}
\def\squarebox#1{\hbox to #1{\hfill\vbox to #1{\vfill}}}

\def\balpha{{\boldsymbol \alpha}}
\def\boeta{{\boldsymbol \eta}}
\def\bpi{{\boldsymbol \pi}}

\def\bx{{\bf x}}
\def\F{{\bf F}}
\def\vec{\mathrm{vec}}
\def\mA{\mathcal{A}}
\def\mB{\mathcal{B}}
\def\mC{\mathcal{C}}
\def\mH{\mathcal{H}}
\def\my{\mathcal Y}
\def\cov{\hbox{cov}}
\def\corr{\hbox{corr}}
\def\trace{\hbox{trace}}
\def\bse{\begin{eqnarray*}}
	\def\ese{\end{eqnarray*}}
\def\be{\begin{eqnarray}}
	\def\ee{\end{eqnarray}}
\def\bsq{\begin{equation*}}
	\def\esq{\end{equation*}}
\def\bq{\begin{equation}}
	\def\eq{\end{equation}}
\def\pr{\hbox{pr}}
\def\wt{\widetilde}
\def\diag{\hbox{diag}}
\def\log{\hbox{log}}
\def\bias{\hbox{bias}}
\def\Siuu{\boldSigma_{i,uu}}
\def\whT{\widehat{\Theta}}
\def\diag{\hbox{diag}}
\def\s{{(s)}}
\def\m{{(m)}}
\def\sums{\sum\nolimits_{s=1}^{{S}}}
\newcommand{\calR}{{\cal R}}
\def\Normal{\hbox{Normal}}
\def\blue{\color{blue}}
\def\red{\color{red}}
\def\th{{\mbox{\tiny\rm th}}}
\def\MMA{\text{MMA}}
\def\CV{\text{CV}}
\def\FVL{\text{FVL}}
\def\FV{\text{FV}}
\def\CM{\text{CM}}
\def\bXi{{{{\boldsymbol \Xi}}}}

\def\maxi{\mathop{\max}\limits_{i}}
\def\mini{\mathop{\min}\limits_{i}}
\def\maxj{\mathop{\max}\limits_{j}}

\def\maxs{\mathop{\max}\limits_{s}}

\def\sumi{\mathop{\sum}\nolimits_{i=1}^n}
\def\sumj{\mathop{\sum}\nolimits_{j=1}^n}
\def\full{{\mbox{\tiny\rm full}}}

\def\var{\hbox{var}}


\def\cp{\stackrel{P}\rightarrow}
\def\bth{\boldsymbol{\theta}}

\allowdisplaybreaks[4]

\title{Time-Varying Model Averaging of Multi-layer Network Vector Autoregressions}
\author{{\normalsize Degui Li}\thanks{Faculty of Business Administration, Asia-Pacific Academy of Economics and Management and Department of Economics, University of Macau. Li's research is supported by the National Natural Science Foundation of China (72525015) and the CPG and SRG grants funded by the University of Macau.},\ \  {\normalsize Yuying Sun}\thanks{Academy of Mathematics and Systems Science, Chinese Academy of Sciences. Sun's research is supported by the National Natural Science Foundation of China (72322016, 72073126, 72091212) and Young Elite Scientists Sponsorship Program by CAST (2020QNRC001). },\ \ {\normalsize Boyao Wu}\thanks{China School of Banking and Finance, University of International Business and Economics. Wu's research is supported by the National Natural Science Foundation of China (72403048 and 72271055)}\\
{\small\em University of Macau,\ \ Chinese Academy of Sciences,\ \ University of International Business and Economics}}
\date{{\small Version: \today}}
\maketitle
\thispagestyle{empty}



\begin{abstract}

In this paper, we introduce a flexible time-varying multi-layer network vector autoregression (VAR) model framework for large-scale time series, allowing agents in dynamic systems to interact through multiple channels and incorporating multiple adjacency matrices to capture network spillover effects. We propose a penalized model averaging method to determine a time-varying optimal combination of multi-layer network VAR candidate models whose number may be divergent. Under some regularity conditions, the asymptotic properties such as asymptotic optimality and convergence rates of the proposed time-varying weight estimation are derived in the contexts of both the in-sample fitting and out-of-sample prediction. In addition, we extend the conformal prediction method to construct prediction bands for locally stationary time series. Monte-Carlo simulation studies and an empirical application to forecast CPI inflation by combining multiple network information are given to illustrate reliable finite-sample estimation and predictive performance of the developed methodology.

\bigskip

\noindent{\em Keywords}: asymptotic optimality, conformal prediction, model averaging, multi-layer network, time-varying VAR.

\end{abstract}

\setcounter{page}{1}
\newpage


\section{Introduction}\label{sec1}
\renewcommand{\theequation}{1.\arabic{equation}} \setcounter{equation}{0}

The vector autoregression (VAR) has become a standard econometric tool to model multivariate time series since the seminal work by \cite{Si80}. In the classic setting with a fixed number of time series variables, \cite{Lu06} and \cite{KL17} provide a comprehensive review of various VAR-based model estimation, inference and forecasting techniques. However, the conventional OLS or MLE estimation and forecasting methods often perform poorly when the number of variables diverges with the time series length. To address this issue in the high-dimensional time series setup, a commonly-used approach is to impose some structural restriction on transition matrices such as the sparsity or reduced-rank assumption and subsequently adopt regularized estimation methods \citep[e.g.,][]{NW11, KC15, DZZ16, BLM19, MPS23}. However, these models and methods often neglect potential inter-connection or network structure among a large number of variables, a feature which commonly exists for large-scale dynamic macroeconomic or financial systems. The network VAR introduced by \cite{ZPLLW17} whose model formulation is virtually similar to the so-called global VAR \citep{PSW04}, provides a flexible autoregressive framework for large-scale time series to jointly incorporate the network spillover (cross-lag) effect, momentum (own-lag) effect and nodal effect.

\smallskip

Network models are an effective tool to tackle interconnected systems with applications to various fields such as economics, finance and social networks. There have been extensive studies on estimation and inference of both the static network \citep[e.g.,][]{DY14, S17, DDLY18, BB19} and time-varying dynamic network \citep[e.g.,][]{KSAX10, ZLW10, CLLL25}. The existing literature on network VAR models typically assumes that the network structure is available and incorporates an observed adjacency matrix in VAR to capture the network spillover effect \citep[e.g.,][]{ZPLLW17, CFZ23, YSM23, YSM26, LPTW26}. They often limit attention to the assumption of a single adjacency matrix to measure linkages between nodes. This assumption may be restrictive for modern social or economic networks as agents in dynamic systems often interact through multiple channels, resulting in the so-called multi-layer network \citep{KABGMP14}. \cite{BCM25} introduce a factor network VAR model and propose a tensor decomposition of multiple adjacency matrices which may vary over time to achieve dimension reduction in the subsequent estimation procedure. In this paper, we propose an alternative approach for multi-layer network VAR, adopting the model averaging method to construct an optimal combination of candidate time-varying multi-layer network VAR models with different linkage structures (to capture network spillover effects), varying lagged orders of dependent variables and node-specific predictors.

\smallskip

Model averaging tackles model uncertainty by averaging over a number of candidate predictive models and serves as an alternative to model selection which aims to select one model according to a proper criterion. Model averaging methods can be broadly classified into two categories: Bayesian model averaging with prior information \citep{HMRV99}, and frequentist model averaging, which focuses on determining optimal weights for candidate models. Commonly-used frequentist model averaging approaches include the Mallows model averaging \citep{H07,LT20} and jackknife model averaging \citep{HR12, LZZZ19, YZL25}. These classic methods aim to select time-invariant combination weights. Their numerical performance may be unstable for dynamic environments with structural changes. \cite{SHLWZ21} is among the first to propose time-varying jackknife model averaging to accommodate smooth structural changes, with subsequent extensions to high-dimensional penalized time-varying model averaging \citep{SHWZ23}, factor-augmented regression \citep{CHL24} and quantile regression \citep{TW25}, and time-varying VAR \citep{SCG25}. The aforementioned works, however, do not account for interconnections between time series variables and potential multi-layer network structures. Our paper aims to fill in this gap by studying time-varying model averaging of multi-layer network VAR for large-scale time series with smooth structural changes.

\smallskip

Meanwhile, there has been increasing attention in recent years on time-varying VAR models, accommodating smooth structural changes over a long time span. \cite{GPY24} study estimation and inference of time-varying VAR models including the impulse response analysis for finite-dimensional time series. The time-varying VAR is naturally connected to the locally stationary time series model framework, which has been systematically studied in the literature since \cite{D97}. In the high-dimensional time-varying VAR model setup, \cite{DQC17} combine the $\ell_1$-regularised method with kernel smoothing to estimate transition matrices; \cite{XCW20} study the multiple break structure and estimate smooth time-varying covariance and precision matrices between the break points; and \cite{CLLL25} estimate dual network structures via directed Granger causality and undirected partial correlation linkages. A recent paper by \cite{LPTW26} further introduces a general time-varying network VAR model framework with a latent group structure, allowing both structural changes and structural breaks on model coefficients. However, they limit attention to a single-layer network structure to capture the network spillover effect.

\smallskip

In this paper, we propose a penalized time-varying model averaging (PTVMA) method to estimate the optimal weights for candidate multi-layer network VAR models whose time-varying coefficients are estimated via the classic local linear smoothing. The candidate models are constructed via distinct adjacency matrices in network spillover effects, varying lags in momentum effects and different subsets of node-specific exogenous predictors. In particular, we allow all the candidate models to be misspecified, enhancing the applicability of the proposed model and methodology. Under some regularity conditions, we establish the asymptotic optimality and convergence rates of the proposed time-varying weight estimates in the contexts of both the in-sample fitting and out-of-sample prediction. In particular, we show that the PTVMA criterion achieves the so-called ``selection consistency" when there exist correctly-specified candidate models, i.e., the sum of the estimated time-varying weights for correctly-specified candidate models approaches one asymptotically.

\smallskip

In addition to point forecasts, we further construct the prediction interval, accounting for forecast uncertainty. Specifically, we propose a PTVMA-based conformal prediction method to construct interval forecasts without requiring the number of candidate models or predictors to be fixed. The conformal prediction method is first introduced by \cite{VGS05} as a sequential approach to form prediction intervals and has been extended by \cite{LW14} and \cite{LGRTW18} to accommodate the nonparametric regression framework. In this paper, we make a further extension by combining the conformal prediction with the developed PTVMA method to construct prediction bands in the context of large-scale and locally stationary network time series. The proposed PTVMA-based conformal prediction algorithm is easy to implement and can guarantee the coverage probability without requiring the asymptotic distribution theory of the model averaging estimation.

\smallskip

Monte Carlo simulations demonstrate that the proposed PTVMA method delivers nearly identical in-sample fit and superior out-of-sample prediction relative to the benchmark, even when all candidate models are misspecified. By assigning time-varying weights to candidate models to accommodate parameter instability, the PTVMA criterion achieves selection consistency when correctly specified candidates exist, and concentrates weight on the best-approximating models when all candidates are misspecified. The proposed conformal prediction intervals exhibit robust finite-sample performance, with empirical coverage probabilities close to the nominal level, particularly as the sample size increases. The developed model and methodology are further applied to analyze the monthly CPI inflation in 36 economies from January 2007 to December 2024. The empirical results reveal multiple channels of CPI inflation transmission across economies, with the production, equity, trade, and policy networks alternately dominating global inflation transmission. Our model and method consistently outperform a broad set of competing models, including the static and time-varying network VAR models and state-of-the-art machine learning methods, highlighting the advantage of the proposed time-varying multi-layer network VAR and PTVMA.

\smallskip

The rest of the paper is organized as follows. Section \ref{sec2} introduces the model framework and the PTVMA methodology. Section \ref{sec3} gives the asymptotic properties of the developed weight estimation together with technical assumptions. Section \ref{sec4} proposes the PTVMA-based conformal prediction interval construction. Sections \ref{sec5} and \ref{sec6} report the simulation and empirical studies, respectively. Section \ref{sec7} concludes the paper. All the theoretical proofs are contained in the appendix. 


\section{Model and methodology}\label{sec2}
\renewcommand{\theequation}{2.\arabic{equation}} \setcounter{equation}{0}

In this section, we introduce the time-varying multi-layer network VAR model framework and propose the PTVMA method to determine a time-varying optimal combination of candidate models.

\subsection{Time-varying multi-layer network VAR}\label{sec2.1}

Consider $\sN$-dimensional vectors of time series observations $\Y_t=(y_{1,t},y_{2,t},\ldots,y_{\sN, t})^{\top}$, $1\leq t\leq {\sf T}$, with $\sN$ being the number of nodes in the large-scale network, and $\sQ$-dimensional node-specific vectors $\X_i=(x_{i,1},\ldots,x_{i,\sQ})^{\top}$, $1\leq i\leq \sN$. Throughout the paper, $\sN$ is allowed to diverge to infinity while $\sQ$ is a fixed positive integer.  Let ${\mathbf W}^{(k)}=(\omega_{ij}^{(k)})_{\sN\times \sN}$ denote the $k$-th adjacency matrix, which is observed and non-stochastic, $\omega_{ii}^{(k)}=0$ and $\omega_{ij}^{(k)}=1$ if $i$ follows $j$, $k=1,\ldots,\sK$. These adjacency matrices capture economic linkages, geographic distances, or other forms of connectedness. All the adjacency matrices in this paper can be either directed or undirected. We consider the following data generating process:
\begin{eqnarray}
y_{i,t}&=&\mu_{i,t}+\epsilon_{i,t}\notag\\
&=&\sum_{k=1}^{\sK}\beta_{tk}\sum_{j\neq i} \widetilde \omega_{ij}^{(k)} y_{j,t-1}+\sum_{p=1}^{\sP}\alpha_{tp}y_{i,t-p}+\sum_{q=1}^{\sQ}\gamma_{tq}x_{i,q}+\epsilon_{i,t}\notag\\
&=&\widetilde\Y_{i,t-1}^\top\bb_t+\overline\Y^{\top}_{i,t-1}\ba_t+\X_i^{\top}\bg_t+\epsilon_{i,t}\notag\\
&=:&\Z_{i,t-1}^\top\btheta_t+\epsilon_{i,t},\quad 1\leq i\leq \sN,\label{eq2.1}
\end{eqnarray}
where $\widetilde \omega_{ij}^{(k)}=\omega_{ij}^{(k)}/n_i^{(k)}$ with $n_i^{(k)}=\sum_{j\neq i} \omega_{ij}^{(k)}$ denoting the total number of nodes that $i$ follows according the $k$-th linkage, $\bb_t=(\beta_{t1},\ldots,\beta_{t\sK})^{\top}$, $\ba_t= (\alpha_{t1}, \ldots,\alpha_{t\sP} )^{\top}$ and $\bg_t=(\gamma_{t1},\ldots,\gamma_{t\sQ})^{\top}$ are unknown time-varying parameter vectors to be estimated, $\Z_{i,t}$ denotes the predictor vector with dimension $\sK+\sP+\sQ$, i.e.,
\begin{eqnarray}
&&\Z_{i,t}=\left(\widetilde\Y_{i,t}^\top,\ \overline\Y^{\top}_{i,t},\ \X_i^{\top}\right)^\top,\notag\\
&&\widetilde\Y_{i,t}=\left(\sum_{j\neq i} \widetilde \omega_{ij}^{(1)} y_{j,t},\ldots, \sum_{j\neq i} \widetilde \omega_{ij}^{(\sK)} y_{j,t}\right)^\top,\notag\\
&&\overline\Y_{i,t}=(y_{i,t},\ldots,y_{i,t-\sP+1})^\top,\quad \X_i=(x_{i,1},\ldots,x_{i,\sQ})^\top,\notag
\end{eqnarray}
$\btheta_t=(\bb_t^\top,\ba_t^\top, \bg_t^\top)^\top$, and $\epsilon_{i,t}=\sigma_i(t/T)u_{i,t}$ with $\sigma_i^2(\cdot)$ being a time-varying variance function and $u_{i,t}$ being independent and identically distributed (i.i.d.) with zero mean and unit variance.

\smallskip

The time-varying VAR in (\ref{eq2.1}) is split into multiple network spillover effects (or cross-lag effects), the momentum effects with order $\sP$ (or own-lag effects) and the nodal effects, all of which are allowed to evolve smoothly over time. Here we only include the first lag of network spillover effects for simplicity and the extension to higher-order cross-lag effects is straightforward. 


\subsection{Penalized time-varying model averaging }\label{sec2.2}

Unlike the model selection which aims to select one time-varying network VAR model, to reduce model estimation and prediction uncertainty caused by selection of adjacency matrices, order of lagged dependent variables and subset of node-specific exogenous predictors, we next propose the PTVMA procedure, seeking to construct a dynamic optimal combination of candidate time-varying network VAR models. Consider a sequence of candidate network VAR models indexed by $m=1,\ldots, \sM$, which may be misspecified for the underlying data generating process. Specifically, we formulate the $m$-th candidate model as
\begin{eqnarray}
y_{i,t}&=&\mu_{i,t}^{(m)}+\epsilon_{i,t}^{(m)}\notag\\
&=&\sum_{k\in {\cal I}_{\sK}^{(m)}}\beta_{tk}^{(m)}\sum_{j\neq i} \widetilde \omega_{ij}^{(k)} y_{j,t-1}+\sum_{p\in {\cal I}_{\sP}^{(m)}}\alpha_{tp}^{(m)}y_{i,t-p}+ \sum_{q\in  {\cal I}_{\sQ}^{(m)}}\gamma_{tq}^{(m)}x_{i,q} +\epsilon_{i,t}^{(m)},\nonumber\\
&=:& \Z_{i,t-1}^{(m)^\top}\btheta_t^{(m)}+\epsilon_{i,t}^{(m)}, \quad 1\leq i\leq \sN, \notag
\end{eqnarray}
where ${\cal I}_{\sK}^{(m)}=\{i_1,\ldots,i_{k_m}\}$ is a subset of $\{1,2,\ldots,\sK\}$, ${\cal I}_{\sP}^\m=\{i_1,\ldots,i_{p_m}\}$ is a subset of $\{1,2,\ldots,\sP\}$, ${\cal I}_{\sQ}^\m=\{i_1,\ldots,i_{q_m}\}$ is a subset of $\{1,2,\ldots,\sQ\}$,
$$\btheta_t^{(m)}=\left(\bb_t^{(m)\top},\ba_t^{(m)\top},\bg_t^{(m)\top}\right)^{\top}$$
with
\begin{eqnarray}
\bb_t^{(m)}=(\beta_{ti_1},\ldots,\beta_{ti_{k_m}})^{\top}=\left[\beta_{ti_1}(t/\sT),\ldots,\beta_{ti_{k_m}}(t/\sT)\right]^{\top},\notag\\
\ba_t^{(m)}=(\alpha_{ti_1},\ldots,\alpha_{ti_{p_m}})^{\top}=\left[\alpha_{ti_1}(t/\sT),\ldots,\alpha_{ti_{p_m}}(t/\sT)\right]^{\top},\notag\\
\bg_t^{(m)}=(\gamma_{ti_1},\ldots,\alpha_{ti_{q_m}})^{\top}=\left[\gamma_{ti_1}(t/\sT),\ldots,\gamma_{ti_{q_m}}(t/\sT)\right]^{\top},\notag
\end{eqnarray}
and $\Z_{i,t}^{(m)}$ denotes the predictor vector with dimension $k_m+p_m+q_m$. All the time-varying parameters are defined as smooth functions of scaled times. Let
$$\bar{\rho}= (\sK+\sP+\sQ)\vee \max \{ \rho_1, \rho_2, \cdots, \rho_\sM\} 
\ \ \ \text{with}\ \  \rho_m=k_m+p_m+q_m,$$
$\bmu_t^{(m)}=(\mu_{1,t}^\m,\ldots,\mu_{\sN, t}^\m)^\top$ and $\beps_t^\m=(\epsilon_{1,t}^\m,\ldots,\epsilon_{\sN, t}^\m)^\top$. Throughout the paper, we allow candidate time-varying network VAR models to be non-nested and misspecified, and both $\bar{\rho}$ and $\sM$ may diverge to infinity.

\smallskip

For any given $t_0$, we define the local linear estimate of ${\btheta}_{t_0}^{(m)}$ as
\begin{equation}\label{eq2.2}
\widehat{\btheta}_{t_0}^{(m)}=\left({\mathbf I}_{\rho_m},\ {\mathbf O}_{\rho_m}\right)\left(\sum_{t=1}^{\sT}\widetilde{\mathbf Z}_{t-1}^{(m)\top}\widetilde{\mathbf Z}_{t-1}^{(m)}K_{t,t_0}\right)^{-1}\left(\sum_{t=1}^{\sT}\widetilde{\mathbf Z}_{t-1}^{(m)\top}\Y_tK_{t,t_0}\right),
\end{equation}
where ${\mathbf I}_{\rho_m}$ and ${\mathbf O}_{\rho_m}$ are the identity and zero matrices, respectively, with size $\rho_m
\times \rho_m$,
\[
\widetilde{\mathbf{Z}}_{t-1}^{(m)}=\left[{\mathbf Z}_{t-1}^{(m)\top},\ \left(\frac{t-t_0}{Th}\right){\mathbf Z}_{t-1}^{(m)\top}\right],\quad \Z_{t}^{(m)}=(\Z_{1,t}^{(m)},\ldots,\Z_{\sN,t}^{(m)}),\quad K_{t,t_0}=K\left(\frac{t-t_0}{Th}\right)
\]
with $K(\cdot)$ being a kernel function and $h$ being a bandwidth.

\smallskip

Let
\[
{\mathscr W}=\left\{\w=(w^1,w^2,\ldots,w^\sM)^{\top}\in[0,1]^\sM:\ \sum_{m=1}^\sM w^m=1\right\}.
\]
With a given vector $\w=(w^1,w^2,\ldots,w^\sM)^{\top}\in{\mathscr W}$, we construct the model averaging estimators of time-varying parameters and  conditional mean as follows
\begin{eqnarray}
\widehat{\btheta}_t(\w)&=&\sum_{m=1}^\sM w^m\bPi^{\m\top}\widehat{\btheta}_t^\m,\label{eq2.3}\\
\widehat{\bmu}_{t}(\w)&=&\left[\widehat{\mu}_{1,t}(\w),\ldots,\widehat{\mu}_{\sN, t}(\w)\right]^{\top}=\sum_{m=1}^\sM w^m{\mathbf Z}_{t-1}^{(m){\top}}\widehat\btheta_t^{(m)},\label{eq2.4}
\end{eqnarray}
where $\bPi^\m$ is a matrix of size $\rho_m\times(\sK+\sP+\sQ)$ mapping $\Z_{i,t}$ to $\Z_{i,t}^\m$ and
\[
\widehat{\mu}_{i,t}(\w)=\sum_{m=1}^\sM w^m\Z_{i,t-1}^{(m)^{\top}}\widehat{\btheta}_t^{(m)}.
\]
We propose the PTVMA criterion:
\be\label{eq2.5}
{\sf PTVMA}_{t}(\w)=\sum_{s=1}^{\sT}\left[\Y_{s}-\widehat\bmu_s(\w)\right]^{\top}\left[\Y_s-\widehat\bmu_s(\w)\right]K_{s,t}+\lambda\sum_{m=1}^\sM w^m\rho_m,
\ee
where $\lambda$ is a user-specified tuning parameter. The optimal weights are determined by
\begin{equation}\label{eq2.6}
\widehat \w_t=\left( \widehat w_t^1,\widehat w_t^2,\ldots, \widehat w_t^\sM\right)^{\top}=\argmin_{\w\in {\mathscr W}}{\sf PTVMA}_{t}(\w).
\end{equation}
Finally the one-step ahead forecast combination is constructed by
\be\label{eq2.7}
\widehat{\Y}_{\sT+1}(\widehat{\w}_\sT)=\left[\widehat y_{1,\sT+1}(\widehat{\w}_\sT),\ldots,\widehat y_{\sN,\sT+1}(\widehat{\w}_\sT)\right]^{\top}\quad \text{with}\quad \widehat{y}_{i,\sT+1}(\widehat{\w}_\sT)=\sum_{m=1}^\sM \widehat w_\sT^m\Z_{i,\sT}^{(m)^{\top}}\widehat{\btheta}_\sT^{(m)}.
\ee


\section{Asymptotic theory}\label{sec3}
\renewcommand{\theequation}{3.\arabic{equation}} \setcounter{equation}{0}

In this section, we give some regularity conditions and present the asymptotic properties of the PTVMA estimation developed in Section \ref{sec2.2} including the asymptotic optimality and convergence rates. Section \ref{sec3.1} considers the in-sample fitting, i.e., $\widehat\w_t$ for the interior point $t$ in the sense that $0<t/\sT<1$, whereas Section \ref{sec3.2} tackles the out-of-sample prediction, i.e., $\widehat\w_\sT$ for any forecast time point $\sT+1$.

\subsection{In-sample asymptotic theory}\label{sec3.1}

To establish the asymptotic optimality of the proposed PTVMA method for in-sample fitting, we define the following infeasible locally weighted (in-sample) quadratic error loss:
\begin{equation}\label{eq3.1}
L_{t}(\w)=\sum_{s=1}^\sT[\bmu_s-\widehat{\bmu}_s(\w)]^\top[\bmu_s-\widehat{\bmu}_s(\w)]K_{s,t},
\end{equation}
where $\bmu_s=(\mu_{1,s},\ldots,\mu_{\sN, s})^\top$, $ \mu_{i,s} = \Z_{i,s-1}^{\top} \btheta_{s} $, $t$ is a fixed interior point such that $0<t/\sT<1$, in which case the local kernel smoothing utilizes the time series sample information on both sides of $t$. Denote
$$\bmu_t^*(\w)=\sum_{m=1}^\sM w^m\bmu_{t,\ast}^{(m)}=\left[\mu_{1,t}^{*}(\w),\ldots,\mu_{\sN,t}^{*}(\w)\right]^\top,$$
where $\bmu_{t,\ast}^{(m)}=(\mu_{1,t}^{(m)*},\ldots,\mu_{\sN,t}^{(m)*})^\top$ and $\mu_{i,t}^{(m)*}=\Z_{i,t-1}^{(m)\top}\btheta_{t,\ast}^{(m)}$ with $\btheta_{t,\ast}^{(m)}$ being the pseudo time-varying coefficient vector for the $m$-th candidate model to be defined in Assumption \ref{ass:1} below. Write
\begin{equation}\label{eq3.2}
L_t^*(\w)=\sum_{s=1}^\sT[\bmu_s-{\bmu}^*_s(\w)]^\top[\bmu_s-{\bmu}^*_s(\w)]K_{s,t}\quad \text{and}\quad \xi_{t}=\inf_{\w\in\calW}L_t^*(\w).
\end{equation}
The following assumptions are required to derive the in-sample asymptotic optimality.

\smallskip

\begin{condition}\label{ass:1}

(i) For the $m$-th candidate model, there exists a limit time-varying vector $\btheta_{s,\ast}^{(m)}$ such that
$$\left\|\widehat{\btheta}_s^{(m)}-\btheta_{s,\ast}^{(m)}\right\|=O_P\left(\bar{\rho}^{1/2}\zeta\right)$$
uniformly over $m$ and $s$, where $\zeta$ is a typical nonparametric uniform convergence rate. In addition, elements of $\btheta_{t,\ast}^{\m}$ are bounded over $t$ and $m$.

(ii) The second moments for each element of $\Z_{i,t}$ exist and are uniformly bounded.

(iii) Both true and pseudo time-varying coefficients $ \btheta_s $ and $ \btheta_{s,\ast}^{\m} $ are Lipschitz continuous over $ [0, 1] $ and their difference $ \| \btheta_s - \bPi^{\m\top} \btheta_{s,\ast}^{\m} \| = O(\bar{\rho}^{1/2}) $ uniformly over $ m $ and $ s $.

\end{condition}

\smallskip

\begin{condition}\label{ass:2}

(i)\ The kernel $K(\cdot)$ is positive, symmetric and continuous with a bounded support $[-1,1]$.

(ii)\ $\bar{\rho}\zeta(\sM\sN\sT h)^{1/2}$ is larger than a positive number, and
$$\bar{\rho}\sM\zeta \rightarrow0,\quad \left(\lambda+ \bar{\rho}\zeta\sM\sN\sT h\right)\bar{\rho}\xi_t^{-1}=o_P(1).$$

\end{condition}

\smallskip

\noindent{\bf Remark 3.1}.\ \ Assumption \ref{ass:1}(i) imposes a high-level condition on uniform convergence of local linear smoothing for misspecified candidate models, which is comparable to those adopted by some recent works on time-varying model averaging \citep[e.g.,][]{SHWZ23, SCG25}. With standard techniques for locally stationary time series and nonparametric kernel-weighted smoothing \citep[e.g.,][]{LPTW26}, we may derive an explicit rate: $\zeta=\sqrt{\log \sT/(\sN\sT h)}+h^2$ under some sufficient conditions which require higher moment restriction than Assumption \ref{ass:1}(ii). With this explicit rate for $\zeta$, assuming that both $\bar\rho$ and $\sM$ are bounded, $\sN\sT h^5=O(1)$ and $\lambda=O((\sN \sT h\cdot\log \sT)^{1/2})$, we may verify Assumption \ref{ass:2}(ii) if
\[
(\sN\sT h\cdot \log \sT)^{1/2}/\xi_t=o_P(1).
\]
The latter indicating that {\em all the candidate models must be misspecified}, otherwise $\xi_t\equiv0$.

\smallskip

Theorem \ref{thm:3.1} below establishes the point-wise asymptotic optimality of the PTVMA estimator, similar to Theorem 1 in \cite{SHWZ23} and Theorem 3.3 in \cite{CHL24}.

\smallskip

\begin{theorem}\label{thm:3.1}

Suppose that Assumptions \ref{ass:1} and \ref{ass:2} are satisfied. For any given time $t$ such that $0<t/\sT<1$, the proposed PTVMA estimator satisfies the asymptotic optimality, i.e.,
$$\frac{L_{t}(\widehat{\w}_t)}{\inf_{\w\in \calW}L_{t}(\w)}\stackrel{P}\longrightarrow 1.$$

\end{theorem}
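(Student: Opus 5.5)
The plan follows the standard Li (1987)/Hansen (2007) route, adapted to the locally weighted, penalized criterion as in \cite{SHWZ23}. First I decompose the criterion. Writing $\Y_s=\bmu_s+\beps_s$ with $\beps_s=(\epsilon_{1,s},\ldots,\epsilon_{\sN,s})^\top$, we have
\begin{equation*}
{\sf PTVMA}_t(\w)=L_t(\w)+2\sum_{s=1}^{\sT}\beps_s^\top[\bmu_s-\widehat\bmu_s(\w)]K_{s,t}+\lambda\sum_{m=1}^\sM w^m\rho_m+\sum_{s=1}^{\sT}\beps_s^\top\beps_sK_{s,t}.
\end{equation*}
The last term does not depend on $\w$. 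By the usual argument, asymptotic optimality then follows from the two uniform statements
\begin{equation*}
\text{(a)}\ \supw\frac{\bigl|\sum_{s}\beps_s^\top[\bmu_s-\widehat\bmu_s(\w)]K_{s,t}\bigr|+\lambda\sum_m w^m\rho_m}{L_t^*(\w)}=o_P(1),\qquad
\text{(b)}\ \supw\left|\frac{L_t(\w)}{L_t^*(\w)}-1\right|=o_P(1).
\end{equation*}
Indeed, (a) and (b) give ${\sf PTVMA}_t(\w)-\sum_s\beps_s^\top\beps_sK_{s,t}=L_t(\w)\{1+o_P(1)\}$ uniformly in $\w$. Comparing $\widehat\w_t$ with the minimizer of $L_t$ then yields $L_t(\widehat\w_t)/\inf_{\w}L_t(\w)\le 1+o_P(1)$. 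The reverse inequality is trivial.

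For (b), I write $\widehat\bmu_s(\w)-\bmu^*_s(\w)=\sum_m w^m\Z_{s-1}^{(m)\top}(\widehat\btheta_s^\m-\btheta_{s,*}^\m)$. Expanding $L_t(\w)=L_t^*(\w)-2\,\text{cross}+\text{remainder}$, I bound the remainder by
\begin{equation*}
\sum_s\Bigl\|\sum_m w^m\Z^{(m)\top}_{s-1}(\widehat\btheta_s^\m-\btheta_{s,*}^\m)\Bigr\|^2K_{s,t}\le \max_{m,s}\|\widehat\btheta_s^\m-\btheta_{s,*}^\m\|^2\sum_s\sum_m\|\Z^{(m)}_{s-1}\|_F^2K_{s,t}.
\end{equation*}
Assumption \ref{ass:1}(i) controls the first factor by $O_P(\bar\rho\zeta^2)$. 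Assumption \ref{ass:1}(ii) and the kernel support in Assumption \ref{ass:2}(i) control the second by $O_P(\sM\sN\sT h\bar\rho)$. So the remainder is $O_P(\bar\rho^2\zeta^2\sM\sN\sT h)$. The cross term is handled by Cauchy--Schwarz as $\le L_t^*(\w)^{1/2}\cdot\text{remainder}^{1/2}$. Dividing by $\xi_t\le L_t^*(\w)$ and using Assumption \ref{ass:2}(ii), i.e.\ $\bar\rho\zeta\sM\sN\sT h\,\bar\rho/\xi_t=o_P(1)$ together with $\bar\rho\zeta\to0$, gives (b). Here the lower bound $\bar\rho\zeta(\sM\sN\sT h)^{1/2}\gtrsim1$ is what lets the $\zeta^2$ bound be absorbed into the $\zeta$ rate appearing in the assumption.

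For (a), the penalty term is at most $\lambda\bar\rho$, which is $o_P(\xi_t)$ by Assumption \ref{ass:2}(ii). I split the noise term into $\sum_s\beps_s^\top[\bmu_s-\bmu_s^*(\w)]K_{s,t}$ and $\sum_s\beps_s^\top[\bmu_s^*(\w)-\widehat\bmu_s(\w)]K_{s,t}$.

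The first piece is linear in $\w$, so its supremum is bounded by $\max_m|\sum_s\beps_s^\top(\bmu_s-\bmu_{s,*}^\m)K_{s,t}|$. Since $u_{i,s}$ is i.i.d.\ and $\Z_{s-1}$ is predetermined, each summand sequence is a martingale difference. Its variance is $O(\sN\sT h\bar\rho)$ by Assumption \ref{ass:1}(ii)--(iii). A union bound over $m$ therefore gives $O_P((\sM\sN\sT h\bar\rho)^{1/2})$, which is $o_P(\xi_t)$ under Assumption \ref{ass:2}(ii) and the lower bound there.

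The second piece is bounded by $\max_{m,s}\|\widehat\btheta_s^\m-\btheta_{s,*}^\m\|\sum_s\sum_m|\beps_s^\top\Z_{s-1}^{(m)\top}|K_{s,t}$. This is $O_P(\bar\rho^{1/2}\zeta\cdot\sM\sN\sT h\,\bar\rho^{1/2})=O_P(\bar\rho\zeta\sM\sN\sT h)$, again $o_P(\xi_t)$ by assumption.

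I expect the main obstacle to be this second piece of (a). The estimation error $\widehat\btheta_s^\m-\btheta^\m_{s,*}$ depends on $\beps$, so a martingale argument is unavailable. One therefore has to accept the crude bound via absolute values and the uniform rate in Assumption \ref{ass:1}(i). Its usefulness depends on checking that the orders match exactly the term $\bar\rho\zeta\sM\sN\sT h$ in Assumption \ref{ass:2}(ii). I would verify the bookkeeping of $\bar\rho$ powers carefully there, using $\bar\rho\sM\zeta\to0$ where an extra factor appears.
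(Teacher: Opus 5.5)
Your proposal is correct and is essentially the paper's proof. The paper also writes ${\sf PTVMA}_t(\w)-L_t(\w)$ as a $\w$-free term plus $2\Psi_{t1}(\w)$ plus the penalty, and $L_t(\w)=L_t^*(\w)+\Psi_{t2}(\w)$. It then shows each piece is $o_P(L_t^*(\w))$ uniformly on $\calW$, using a second-moment bound summed over $m$ for the noise term linear in $\w$, and Assumption \ref{ass:1}(i) with Cauchy--Schwarz for the pieces involving $\widehat\btheta_s^{(m)}-\btheta_{s,\ast}^{(m)}$.

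Your variants stay within Assumption \ref{ass:2}(ii): Cauchy--Schwarz against $L_t^*(\w)^{1/2}$ in (b), and a cruder absolute-value bound for $\sum_s\beps_s^\top[\bmu_s^*(\w)-\widehat\bmu_s(\w)]K_{s,t}$. Note that absorbing $\zeta^2$ into $\zeta$ in (b) only needs $\zeta\to0$. The lower bound $\bar\rho\zeta(\sM\sN\sT h)^{1/2}\gtrsim 1$ is what controls the linear noise term. That term's variance is really $O(\bar\rho^2\sN\sT h)$, since both $\mE\|\Z_{i,s-1}\|^2$ and $\|\btheta_s-\bPi^{(m)\top}\btheta_{s,\ast}^{(m)}\|^2$ are $O(\bar\rho)$; the lower bound still covers that rate.
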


\smallskip

The following extra condition is required to derive an explicit in-sample convergence rate of the PTVMA weight estimation.

\smallskip

\begin{condition}\label{ass:3}

(i) There exists a positive constant $\underline{\kappa}$ such that
$$
\lambda_{\min}\left(\frac{1}{\sN}{\sf E}\left[\widetilde{\bmu}^*_{s}\widetilde{\bmu}_{s}^{*\top}\right]\right)> \underline{\kappa}\quad \text{with}\quad \widetilde\bmu_{s}^* = \left(\bmu_{s,\ast}^{(1)},\ldots,\bmu_{s,\ast}^{(\sM)}\right)^\top
$$
for any $s$ such that $|s-t|\leq \sT h$.

(ii) For any given time point $t$,
\[
\left\Vert\sum_{s=1}^\sT\sum_{i=1}^\sN K_{s,t}\left(\Z_{i,s}\Z_{i,s}^\top-{\sf E}\left[\Z_{i,s}\Z_{i,s}^\top\right]\right)\right\Vert_{\sf F}=O_P\left(\bar\rho(\sN \sT h)^{1/2}\right)
\]
and
\[
\sum_{s=1}^\sT\sum_{i=1}^\sN \sigma_{i}^{2}(s/\sT)(u_{is}^2-1)K_{s,t}=O_P\left((\sN\sT h)^{1/2}\right).
\]

(iii) Let the following rate condition hold:
$$\left(\lambda+\zeta\sM\sN\sT h\right)\bar{\rho}(\sN\sT h)^{-1}+\bar\rho^2(\sN\sT h)^{-1/2}\rightarrow0$$

\end{condition}

\smallskip

\noindent{\bf Remark 3.2.}\ \ Assumption \ref{ass:3}(i) is comparable to some similar restriction on predictor vectors or the so-called ``hat matrix". Assumption \ref{ass:3}(ii) implies weak temporal dependence as well as cross-sectional correlation for predictors as in \cite{LPTW26}. It is easy to verify Assumption \ref{ass:3}(iii) when $\zeta=\sqrt{\log \sT/(\sN\sT h)}+h^2$, $\lambda=O((\sN\sT h \log\sT)^{1/2})$ and both $\bar\rho$ and $\sM$ are bounded.

\smallskip

\begin{theorem}\label{thm:3.2}

Suppose Assumptions \ref{ass:1}--\ref{ass:3} are satisfied. For any given time point $t$ such that $0<t/\sT<1$, $\widehat{\w}_t$ is a consistent in-sample estimate of $\w_t^*$, satisfying
\begin{equation}\label{eq3.3}
\left\|\widehat{\w}_t-\w_t^*\right\|=O_P\left((\lambda+\zeta\sM\sN\sT h)^{1/2}\bar{\rho}^{1/2}(\sN\sT h)^{-1/2}+\bar\rho(\sN\sT h)^{-1/4}\right),
\end{equation}
where $\w_t^*=\argmin_{\w\in\calW} {\sf E}[L_{t}(\w)]$.

\end{theorem}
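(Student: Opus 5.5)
The approach is to treat the weight estimation as a convex quadratic program on the simplex. I will compare the sample criterion with a deterministic quadratic that has strong curvature, then invoke a standard argmin-perturbation argument.

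First, expand ${\sf PTVMA}_t(\w)$. Write $\Y_s-\widehat\bmu_s(\w)=\beps_s+[\bmu_s-\bmu_s^*(\w)]+[\bmu_s^*(\w)-\widehat\bmu_s(\w)]$. Then, up to terms not depending on $\w$,
\[
{\sf PTVMA}_t(\w)=L_t^*(\w)+2\sum_{s=1}^{\sT}\beps_s^\top[\bmu_s-\bmu_s^*(\w)]K_{s,t}+R_t(\w)+\lambda\sum_{m=1}^{\sM}w^m\rho_m .
\]
Here $R_t(\w)$ collects all terms involving $\widehat\bmu_s(\w)-\bmu_s^*(\w)=\sum_m w^m\Z_{s-1}^{(m)\top}(\widehat\btheta_s^{(m)}-\btheta_{s,\ast}^{(m)})$. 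By Assumption \ref{ass:1}(i)--(ii), Assumption \ref{ass:1}(iii) and Cauchy--Schwarz, this gives $\sup_{\w}|R_t(\w)|=O_P(\bar\rho\zeta\sM\sN\sT h)$. The main contribution comes from the cross term with $\bmu_s-\bmu^*_s(\w)$, whose size is $O(\bar\rho^{1/2})$ per node, summed over roughly $\sN\sT h$ kernel-weighted terms.

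The penalty is bounded by $\lambda\bar\rho$. The noise cross term is linear in $\w$, and each coordinate is a martingale-difference sum, hence of order $O_P(\bar\rho^{1/2}(\sN\sT h)^{1/2})$. Consequently, for $\w$ differing from $\w^*_t$,
\[
\bigl|[{\sf PTVMA}_t(\w)-{\sf PTVMA}_t(\w_t^*)]-[L_t^*(\w)-L_t^*(\w_t^*)]\bigr|\le O_P\bigl((\lambda+\zeta\sM\sN\sT h)\bar\rho\bigr)+O_P\bigl(\bar\rho^{1/2}(\sN\sT h)^{1/2}\bigr)\|\w-\w_t^*\|.
\]

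Next comes the curvature step. I replace $L_t^*(\w)$ by its expectation, which up to $\widehat\btheta$-versus-$\btheta_\ast$ errors (already absorbed in the first remainder) is the same as ${\sf E}L_t(\w)$, minimized at $\w_t^*$. This expectation is a quadratic in $\w$ with Hessian
\[
2\sum_s K_{s,t}\,{\sf E}\bigl[\widetilde\bmu_s^*\widetilde\bmu_s^{*\top}\bigr].
\]
By Assumption \ref{ass:3}(i), the Hessian's smallest eigenvalue is at least $c\,\underline\kappa\,\sN\sT h$. Because $\w_t^*$ minimizes a convex function over the convex set $\calW$, the first-order optimality condition yields
\[
{\sf E}L_t(\w)-{\sf E}L_t(\w_t^*)\ge c\,\sN\sT h\,\|\w-\w_t^*\|^2 \quad\text{for all }\w\in\calW .
\]
The deviation $L_t^*(\w)-{\sf E}L_t^*(\w)$ is a quadratic form in $\w$ whose coefficient matrix is a kernel-weighted centered sum of products of $\Z_{i,s}$ (using Assumption \ref{ass:1}(iii) to write $\bmu^{(m)}_{s,\ast}$ as linear in $\Z$), plus the noise-squared centering. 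Assumption \ref{ass:3}(ii) gives this a Frobenius norm of $O_P(\bar\rho(\sN\sT h)^{1/2})$. Since $\|\w\|\le1$, the fluctuation is $O_P(\bar\rho(\sN\sT h)^{1/2})$ uniformly in $\w$. Assumption \ref{ass:3}(iii) ensures that this does not destroy the curvature, because $\bar\rho^2(\sN\sT h)^{-1/2}\to0$.

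Finally, evaluate at $\widehat\w_t$, using ${\sf PTVMA}_t(\widehat\w_t)\le{\sf PTVMA}_t(\w_t^*)$. Writing $\delta=\|\widehat\w_t-\w_t^*\|$, this gives
\[
c\,\sN\sT h\,\delta^2\le O_P\bigl((\lambda+\zeta\sM\sN\sT h)\bar\rho\bigr)+O_P\bigl(\bar\rho(\sN\sT h)^{1/2}\bigr)+O_P\bigl(\bar\rho^{1/2}(\sN\sT h)^{1/2}\bigr)\delta .
\]
Solving the quadratic inequality yields
\[
\delta=O_P\Bigl((\lambda+\zeta\sM\sN\sT h)^{1/2}\bar\rho^{1/2}(\sN\sT h)^{-1/2}+\bar\rho^{1/2}(\sN\sT h)^{-1/4}+\bar\rho^{1/2}(\sN\sT h)^{-1/2}\Bigr),
\]
which is dominated by (\ref{eq3.3}), since $\bar\rho\ge\bar\rho^{1/2}$ and the last term is smaller.

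The main obstacle is the uniform-in-$\w$ control of $L_t^*(\w)-{\sf E}L_t(\w)$. This requires expressing the pseudo means $\bmu^{(m)}_{s,\ast}$ through $\Z_{i,s-1}$, with Lipschitz coefficients varying within the kernel window, so that Assumption \ref{ass:3}(ii) can be applied. It also requires handling the replacement of $\btheta_s$ by its value at $t$ via Lipschitz continuity, at cost $O(h)$ per term, which is absorbed since $h\lesssim\zeta$. I would do this first, by writing everything in terms of the matrix $\sum_s\sum_i K_{s,t}\Z_{i,s-1}\Z_{i,s-1}^\top$ sandwiched by bounded coefficient vectors.
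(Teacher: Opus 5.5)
Your argument has the same skeleton as the paper's proof: a basic inequality at $\widehat\w_t$, quadratic growth of the expected loss around $\w_t^*$ from Assumption 3(i), and Assumption 3(ii) for the fluctuation of $L_t^*$. It is organized differently, though. The paper never localizes. It bounds the whole uniform deviation $\sup_{\w\in\calW}\sN^{-1}|{\sf PTVMA}_t^*(\w)-(\sT h)^{-1}{\sf E}L_t(\w)|$ through $\Pi_1,\ldots,\Pi_6$, uses it for consistency via argmin continuity, and then uses it in $\frac{\underline\kappa}{2}\|\widehat\w_t-\w_t^*\|^2\le 2\sup_{\w}(\cdots)$, so the rate is the square root of that supremum. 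You instead keep the noise cross term as the linear functional $2\sum_m (w^m-w_t^{*m})a_m$, with $a_m=\sum_s K_{s,t}\beps_s^\top(\bmu_s-\bmu^{(m)}_{s,\ast})$. Its contribution to $\delta$ is then its size divided by $\sN\sT h$, rather than a term under a square root. This is sharper than the paper's handling, which bounds the term uniformly by $O_P((\bar\rho\sM\sN\sT h)^{1/2})$ and folds it into $\Pi_4=o_P(\bar\rho\zeta\sM)$ using the lower bound on $\zeta$ in Assumption 2(ii). Because your inequality holds globally on $\calW$, you also need no separate consistency step. In exchange, the paper's route yields one uniform-deviation bound that it reuses in Theorems 3.3 and 3.5.

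Four steps need repair.

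(i) Bounding $|\sum_m (w^m-w_t^{*m})a_m|$ by a multiple of $\|\w-\w_t^*\|$ requires the Euclidean norm of $(a_1,\ldots,a_\sM)$. That norm is $O_P((\sM\bar\rho\sN\sT h)^{1/2})$, not the per-coordinate order. The resulting term $(\sM\bar\rho)^{1/2}(\sN\sT h)^{-1/2}$ stays inside the first term of (3.3) provided $\zeta\gtrsim(\sN\sT h)^{-1}$, as holds for the rate in Remark 3.1.

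(ii) The fluctuation is not controlled by $\|\w\|\le 1$. With $\bdel_s(\w)$ frozen at $s=t$, it equals $\bdel_t(\w)^\top D\,\bdel_t(\w)$, where $D=\sum_s\sum_i K_{s,t}(\Z_{i,s-1}\Z_{i,s-1}^\top-{\sf E}[\Z_{i,s-1}\Z_{i,s-1}^\top])$ and $\bdel_t(\w)=\btheta_t-\sum_m w^m\bPi^{(m)\top}\btheta^{(m)}_{t,\ast}$. By Assumption 1(iii), $\|\bdel_t(\w)\|=O(\bar\rho^{1/2})$, so the uniform bound is $O_P(\bar\rho^2(\sN\sT h)^{1/2})$. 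This is exactly what produces the $\bar\rho(\sN\sT h)^{-1/4}$ term in (3.3). Your $\bar\rho^{1/2}(\sN\sT h)^{-1/4}$ is an artifact of dropping $\|\bdel_t(\w)\|^2$.

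(iii) Your freezing step relies on $h\lesssim\zeta$, which fails in general for $\zeta=\sqrt{\log\sT/(\sN\sT h)}+h^2$; for example, $h\asymp(\sN\sT)^{-1/5}$ gives $\zeta\ll h$. A crude Lipschitz bound on the freezing error is $O_P(\bar\rho^2 h\,\sN\sT h)$. That would add $\bar\rho h^{1/2}$ to the rate, a term not present in (3.3). The paper does no freezing at all: it reads Assumption 3(ii) as directly giving $\Pi_2=O_P(\bar\rho^2(\sN\sT h)^{-1/2})$ for the $\bdel_s(\w)$-weighted sum. Either adopt that reading, or add a weak-dependence variance bound showing the mean-zero freezing error is $O_P(h\bar\rho^2(\sN\sT h)^{1/2})$.

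(iv) The Hessian you compute, $2\sum_s K_{s,t}{\sf E}[\widetilde\bmu_s^*\widetilde\bmu_s^{*\top}]$, belongs to ${\sf E}L_t^*$, whose minimizer need not be $\w_t^*$. Your first-order argument at $\w_t^*$ needs the curvature of ${\sf E}L_t$ itself. The paper obtains this by eigenvalue perturbation: $\lambda_{\min}({\sf E}\boldsymbol\Delta_t)\ge\underline\kappa-O(\bar\rho\zeta\sM)\ge\underline\kappa/2$. Alternatively, center your argument at $\arg\min_{\w}{\sf E}L_t^*(\w)$. Its squared distance to $\w_t^*$ is $O(\bar\rho\zeta\sM)$, which lies within (3.3).
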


\smallskip

\noindent{\bf Remark 3.3.}\ \ Theorem \ref{thm:3.2} above reveals that the point-wise convergence rate of the in-sample time-varying weight estimation $\widehat{\w}_t$ relies on the number of candidate models $\sM$, the maximum dimension of potential predictors $\bar\rho$, the tuning parameter in penalization $\lambda$, the bandwidth $h$ as well as $(\sN, \sT)$. With the sufficient conditions discussed in Remark 3.2 to verify Assumption \ref{ass:3}(iii), we may simplify the rate in (\ref{eq3.3}) to
\[
\left\|\widehat{\mathbf w}_t-{\mathbf w}_t^*\right\|=O_P\left(h+(\log \sT)^{1/4}(\sN\sT h)^{-1/4}\right).
\]

\smallskip

We next consider the scenario when there exist correctly-specified candidate time-varying multi-layer network VAR models. Let $\calD$ denote the index set of correctly-specified models, i.e., $m\in\calD$ if the $m$-th candidate model is correctly specified. Similar to $\xi_t$ in (\ref{eq3.2}), we define
$$\wt\xi_{t}=\inf_{\w\in\calW(\calD)} L_{t}^*(\w),\quad \calW(\calD)=\{\w=(w^1,w^2,\ldots,w^\sM)^{\top}\in\calW:\ w^m=0, \ m\in\calD\}.$$
Theorem \ref{thm:3.3} below shows that the proposed PTVMA criterion achieves the in-sample selection consistency (of correctly-specified candidate models) as the conventional model selection.

\smallskip

\begin{theorem}\label{thm:3.3}

Suppose Assumptions \ref{ass:1}, \ref{ass:2} and \ref{ass:3}(ii) are satisfied but with the condition $\left(\lambda+\zeta\sM\sN Th\right)\bar{\rho}\xi_t^{-1}=o_P(1)$ in Assumption \ref{ass:2}(ii) replaced by
\begin{equation}\label{eq3.4}
\left[\bar\rho(\lambda+ \zeta\sM\sN Th)+\bar\rho^2(\sN Th)^{1/2}\right]\wt\xi_t^{-1}=o_P(1),
\end{equation}
and $\calD$ is not empty. Then, $\sum_{m\in \mathscr{D}} \widehat{w}_t^m\stackrel{P}\longrightarrow 1$ for any interior time point $t$.

\end{theorem}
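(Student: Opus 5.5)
Following the standard route for selection consistency in (penalized) model averaging, I will argue by contradiction via a basic inequality. Write $\widehat{\tau}_t=\sum_{m\notin\calD}\widehat w_t^m=1-\sum_{m\in\calD}\widehat w_t^m$; the goal is $\widehat\tau_t=o_P(1)$.

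Fix any correctly specified $m_0\in\calD$ and let $\w^0$ be the corresponding unit vector. For $m\in\calD$ the pseudo-true coefficient coincides with the true one after embedding, so $\bmu^{(m)}_{s,\ast}=\bmu_s$. For a general $\w$ this gives
\[
\bmu_s-\bmu_s^*(\w)=\sum_{m\notin\calD}w^m\bigl(\bmu_s-\bmu_{s,\ast}^{(m)}\bigr).
\]
If $\widehat\tau_t>0$, define $\bar\w$ as the renormalization of $\widehat\w_t$ restricted to the misspecified models. Then $\bmu_s-\bmu_s^*(\widehat\w_t)=\widehat\tau_t\,[\bmu_s-\bmu_s^*(\bar\w)]$, and by the definition of $\wt\xi_t$ (with $\calW(\calD)$ read as weights supported off $\calD$),
\[
L_t^*(\widehat\w_t)=\widehat\tau_t^2\,L_t^*(\bar\w)\ge\widehat\tau_t^2\,\wt\xi_t .
\]
The key is therefore to show that $L_t^*(\widehat\w_t)=o_P(\wt\xi_t)$.

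Step one is the basic inequality ${\sf PTVMA}_t(\widehat\w_t)\le{\sf PTVMA}_t(\w^0)$. Writing $\Y_s=\bmu_s+\beps_s$ and expanding both sides as in the proof of Theorem \ref{thm:3.1}, this yields
\[
L_t(\widehat\w_t)\le L_t(\w^0)+2\sum_s\beps_s^\top\bigl[\widehat\bmu_s(\widehat\w_t)-\widehat\bmu_s(\w^0)\bigr]K_{s,t}+\lambda\bar\rho .
\]

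Step two bounds each piece uniformly in $\w$.

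\begin{itemize}
\item Decompose $\widehat\bmu_s(\w)=\bmu_s^*(\w)+\sum_m w^m\Z_{s-1}^{(m)\top}(\widehat\btheta_s^{(m)}-\btheta_{s,\ast}^{(m)})$. By Assumption \ref{ass:1}(i),(ii), the second part has kernel-weighted squared norm $O_P(\bar\rho^2\zeta^2\sN\sT h)$.
\item Hence $L_t(\w^0)=O_P(\bar\rho^2\zeta^2\sN\sT h)$. By Cauchy–Schwarz, $|L_t(\w)-L_t^*(\w)|\le O_P(\bar\rho\zeta(\sN\sT h)^{1/2})L_t^*(\w)^{1/2}+O_P(\bar\rho^2\zeta^2\sN\sT h)$ uniformly over $\calW$.
\item For the cross term $\sum_s\beps_s^\top[\bmu_s^*(\w)-\bmu_s]K_{s,t}$, linearity in $\w$ reduces it to a maximum over $m$ of $\sM$ martingale-type sums. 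Their variances are bounded using Assumption \ref{ass:1}(iii), giving $O_P(\bar\rho(\sM\sN\sT h)^{1/2})$. The term involving $\widehat\btheta-\btheta_\ast$ contributes $O_P(\bar\rho\zeta\sM\sN\sT h)$ via Assumption \ref{ass:2}(ii).
\item The term $\sum_s\sigma_i^2(u^2-1)$ appears only through the expansion of the residual sum of squares, which cancels between the two sides, and Assumption \ref{ass:3}(ii) controls it where needed.
\end{itemize}

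Step three combines these bounds. Using $2ab\le a^2/2+2b^2$ to absorb the $L_t^*(\widehat\w_t)^{1/2}$ factors, one gets
\[
L_t^*(\widehat\w_t)\le O_P\bigl(\bar\rho(\lambda+\zeta\sM\sN\sT h)+\bar\rho^2(\sN\sT h)^{1/2}\bigr)+o_P(1)L_t^*(\widehat\w_t).
\]
Assumption \ref{ass:2}(ii) ensures that $\bar\rho\zeta(\sM\sN\sT h)^{1/2}$ is bounded below, which lets the smaller stochastic orders be dominated by these leading terms. Dividing by $\wt\xi_t$ and invoking (\ref{eq3.4}) gives $\widehat\tau_t^2\le L_t^*(\widehat\w_t)/\wt\xi_t=o_P(1)$, which completes the argument.

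The main obstacle I anticipate is the uniform (over $\w\in\calW$, with $\sM$ and $\bar\rho$ divergent) control of the noise cross term $\sum_s\beps_s^\top[\widehat\bmu_s(\w)-\bmu_s]K_{s,t}$. The difficulty is that the noise is correlated with the estimated coefficients and with the lagged regressors. I would first try splitting it into the $\btheta_\ast$ part, treated as a martingale difference array in $s$ given the $\Z_{s-1}$, and the estimation-error part, bounded crudely by Cauchy–Schwarz with Assumption \ref{ass:1}(i). This is where the $\bar\rho^2(\sN\sT h)^{1/2}$ term in (\ref{eq3.4}) should arise.
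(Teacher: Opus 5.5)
Your proposal is correct and is essentially the paper's argument: a basic inequality against a weight vector supported on $\calD$, the scaling identity $L_t^*(\widehat\w_t)=\widehat\tau_t^2L_t^*(\bar\w)\geq\widehat\tau_t^2\wt\xi_t$ (the paper's (\ref{eqA.20})), and a remainder of order $\bar\rho(\lambda+\zeta\sM\sN\sT h)+\bar\rho^2(\sN\sT h)^{1/2}$ that (\ref{eq3.4}) makes negligible relative to $\wt\xi_t$. The differences are organizational, plus one shared bookkeeping caveat. The paper first imports the uniform approximation (\ref{eqA.18}) of ${\sf PTVMA}_t^*$ by $L_t^*$ from the proof of Theorem \ref{thm:3.2} and then compares with $\check\w$, whereas you expand the basic inequality directly, so $\sum_s\|\beps_s\|^2K_{s,t}$ cancels exactly and Assumption \ref{ass:3}(ii) plays no role. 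Your absorption of the $O_P(\bar\rho(\sM\sN\sT h)^{1/2})$ noise cross term into the leading terms via Assumption \ref{ass:2}(ii) is loose by a power of $\bar\rho$ unless $\bar\rho$ stays bounded, exactly as the paper's own bound (\ref{eqA.13}) on that term is.
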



\subsection{Out-of-sample asymptotic theory}\label{sec3.2}

We next consider the asymptotic properties of $\widehat\w_\sT$ which has been adopted for constructing out-of-sample prediction combination. In this case, the local linear smoothing and ${\sf PTVMA}_{T}(\w)$ essentially use one-sided local sample information, which would subsequently slow down the weight estimation convergence. Define the out-of-sample prediction risk:
\[
R_{\sT+1}(\w)={\sf E}\left\{\left[\Y_{\sT+1}-\widehat{\Y}_{\sT+1}(\w)\right]^\top\left[\Y_{\sT+1}-\widehat{\Y}_{\sT+1}(\w)\right]\right\}-\frac{1}{\sT h}\sum_{i=1}^\sN\sum_{s=1}^\sT\sigma_{i}^{2}(s/\sT)K_{s,\sT},
\]
where $\widehat{\Y}_{\sT+1}(\w)$ is defined similarly to $\widehat{\Y}_{\sT+1}(\widehat\w_\sT)$ defined in (\ref{eq2.7}) but with $\widehat\w_\sT$ replaced by $\w$. Note that the second term of $R_{\sT+1}(\w)$ is independent of $\w$. In the out-of-sample prediction, we expect the PTVMA estimate $\widehat\w_\sT$ to asymptotically minimize $R_{\sT+1}(\w)$ over $\w\in{\calW}$. Furthermore, we let
\[
R_{\sT+1}^*(\w)={\sf E}\left\{\left[\Y_{\sT+1}-\Y_{\sT+1}^*(\w)\right]^\top\left[\Y_{\sT+1}-\Y_{\sT+1}^*(\w)\right]\right\}-\frac{1}{\sT h}\sum_{i=1}^\sN\sum_{s=1}^\sT\sigma_{i}^{2}(s/\sT)K_{s,\sT},
\]
and
\[
\xi_{\sT+1}^*=\inf_{\w\in \calW}R_{\sT+1}^*(\w),
\]
where $\Y_{\sT+1}^\ast(\w)=\sum_{m=1}^\sM w^m\Y_{\sT+1}^{(m)\ast}$ and $\Y_{\sT+1}^{(m)\ast}=(y_{1,\sT+1}^{(m)*},\ldots,y_{\sN,\sT+1}^{(m)*})^\top$ with $y_{i,\sT+1}^{(m)*}=\Z_{i,\sT}^{(m)\top}\btheta_{\sT,\ast}^{(m)}$. Let $\beps_{t}^{(m)*}=(\epsilon_{1,t}^{(m)*},\ldots,\epsilon_{\sN,t}^{(m)*})^\top$ with $\epsilon_{i,t}^{(m)*}=y_{i,t}-\mu_{i,t}^{(m)*}$, $1\leq t\leq \sT$, and $\beps_{\sT+1}^{(m)*}=(\epsilon_{1,\sT+1}^{(m)*},\ldots,\epsilon_{\sN, \sT+1}^{(m)*})^\top$ with $\epsilon_{i, \sT+1}^{(m)*}=y_{i, \sT+1}-y_{i, \sT+1}^{(m)*}$ for the out-of-sample period.

\smallskip

\begin{condition}\label{ass:4}

(i) $\sigma_i(\cdot)$, $1\leq i\leq \sN$, are uniformly bounded and Lipschitz continuous.

(ii) There exists $\delta_1>0$ such that
\begin{equation}\label{eq3.5}
\sup_\sT{\sf E}\left[\xi_{\sT+1}^{*-1/2}\left(\max_{1\leq m\leq \sM}\left\|\widehat{\Y}_{\sT+1}^\m-\Y_{\sT+1}^{(m)*}\right\|+\max_{1\leq m,m'\leq \sM}\left\|\widehat{\Y}_{\sT+1}^{(m)}-\Y_{\sT+1}^{(m)*}\right\|^{1/2} \left\|\beps_{\sT+1}^{(m')*}\right\|^{1/2}\right)\right]^{2+\delta_1}<\infty,
\end{equation}
and in addition, $\sup_\sT\xi_{\sT+1}^*/\sN<\infty$.

(iii) Let the following conditions hold:
\[
\xi_{\sT+1}^{\ast-1}\bar\rho \left[\sM\sN \zeta+\lambda(\sT h)^{-1}\right]\rightarrow0,\quad
\xi_{\sT+1}^{\ast-1}\left[\bar{\rho} \sN  h+\sM^2\sN^{1/2}(\sT h)^{-1/2}\right]\rightarrow0.
\]

(iv) For any given time point $t$ such that $\sT- \sT h \leq t\leq \sT$,
$$
\left\| \frac{1}{\sT h} \sum_{t=1}^{\sT} K_{t,\sT} \mE \left[\Z_{i,t-1}\Z_{i,t-1}^\top\right] - \mE \left[\Z_{i,\sT}\Z_{i,\sT}^\top\right] \right\|_{\sf op} %
= O_P \left(h\right),
$$
where $\|\cdot\|_{\sf op}$ denotes the operator norm of a square matrix.


\end{condition}

\smallskip

\noindent{\bf Remark 3.4.}\ \ Assumption \ref{ass:4}(i) imposes a mild smoothness restriction on heterogeneous variance functions $\sigma_i^2(\cdot)$ and is automatically satisfied for the case of homoskedasticity. The condition (\ref{eq3.5}) requires the potential best prediction (over all the candidate models) has finite risk in large-scale networks. A similar condition can be found in Assumptions 3(iii) and 5(iii) of \cite{ZZ23}. In fact, it provides a sufficient condition to derive the uniform integrability (over $\w$) of $\xi_{\sT+1}^{*-1}\left|\|\Y_{\sT+1}-\widehat{\Y}_{\sT+1}(\w)\|^2-\|\Y_{\sT+1}-\Y^*_{\sT+1}(\w)\|^2\right|$, which is required in the proof of Theorem \ref{thm:3.4} in the appendix. If, in addition, $\sup_\sT\xi_{\sT+1}^*/\sN$ is strictly larger than a positive constant, we may replace $\xi_{\sT+1}^\ast$ by $\sN$ in (\ref{eq3.5}) and further show that all the candidate models are misspecified. In fact, if the $m_0$-th candidate model is correctly specified, $\bPi^{(m_0)\top}\widehat\btheta_\sT^{(m_0)}$ converges to the true coefficient vector $\btheta_\sT$. Consequently, $\Y_{\sT+1}^{(m_0)*}=\bmu_{\sT+1}(1+o_P(h))$ and
\[
 \xi_{\sT+1}^{*}=\inf_{\w\in\calW}R_{\sT+1}^*(\w)\leq {\sf E}\left[\left(\Y_{\sT+1}-\Y_{\sT+1}^{(m_0)*}\right)^\top\left(\Y_{\sT+1}-\Y_{\sT+1}^{(m_0)*}\right)\right]-\frac{1}{\sT h}\sum_{i=1}^\sN\sum_{s=1}^\sT\sigma_{i}^{2}(s/\sT)K_{s,\sT}=o(\sN),
\]
which leads to contradiction.  Assumption \ref{ass:4}(iii) is comparable to Assumption \ref{ass:2}(ii) and can be easily justified by using the sufficient conditions discussed in Remarks 3.1 and 3.2. Assumption \ref{ass:4}(iv) contains a high-level condition for locally stationary $\Z_{i,t}$.

\smallskip

\begin{theorem}\label{thm:3.4}

Suppose that Assumptions \ref{ass:1}, \ref{ass:2}(i) and \ref{ass:4} are satisfied. For any forecast time point $\sT+1$,
$$\frac{R_{\sT+1}(\widehat{\w}_\sT)}{\inf_{\w\in \calW}R_{\sT+1}(\w)}\stackrel{P}\rightarrow 1.$$

\end{theorem}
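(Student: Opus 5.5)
The proof follows the usual ``pseudo-risk'' route from the time-varying model averaging literature, e.g.\ \cite{SHWZ23} and \cite{ZZ23}. It has three steps: compare the feasible risk $R_{\sT+1}(\w)$ with the infeasible $R^*_{\sT+1}(\w)$ uniformly in $\w$; show that $\widehat\w_\sT$ asymptotically minimizes $R^*_{\sT+1}(\w)$; then transfer the conclusion back to $R_{\sT+1}$.

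\textbf{Step 1: feasible versus pseudo risk.} Write
\[
\|\Y_{\sT+1}-\widehat\Y_{\sT+1}(\w)\|^2-\|\Y_{\sT+1}-\Y^*_{\sT+1}(\w)\|^2 = \|\widehat\Y_{\sT+1}(\w)-\Y^*_{\sT+1}(\w)\|^2 - 2\,[\widehat\Y_{\sT+1}(\w)-\Y^*_{\sT+1}(\w)]^\top \beps^*_{\sT+1}(\w),
\]
where $\beps^*_{\sT+1}(\w)=\sum_m w^m\beps^{(m)*}_{\sT+1}$. By convexity, both terms are bounded by the maxima over $m,m'$ that appear in (\ref{eq3.5}).

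\begin{itemize}
\item Assumption \ref{ass:1}(i) and (ii) give $\max_m\|\widehat\Y^{(m)}_{\sT+1}-\Y^{(m)*}_{\sT+1}\|^2=O_P(\sN\bar\rho^2\zeta^2)$.
\item The cross term is $O_P(\sN\bar\rho\zeta)$ up to $\sM$ factors.
\item Assumption \ref{ass:4}(iii) makes both terms $o_P(\xi^*_{\sT+1})$ uniformly in $\w$.
\item Assumption \ref{ass:4}(ii) gives uniform integrability, so the convergence in probability upgrades to expectations:
\[
\sup_{\w\in\calW}|R_{\sT+1}(\w)-R^*_{\sT+1}(\w)|/R^*_{\sT+1}(\w)=o(1),
\]
using $R^*_{\sT+1}(\w)\ge\xi^*_{\sT+1}$.
\end{itemize}

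\textbf{Step 2: the criterion approximates the pseudo risk.} Expand $(\sT h)^{-1}{\sf PTVMA}_\sT(\w)$ around $\Y_s=\bmu^*_s(\w)+\beps^*_s(\w)$. The expansion gives
\[
(\sT h)^{-1}{\sf PTVMA}_\sT(\w) = (\sT h)^{-1}\sum_s K_{s,\sT}\|\Y_s-\bmu^*_s(\w)\|^2 + \Delta_\sT(\w) + \lambda(\sT h)^{-1}\textstyle\sum_m w^m\rho_m .
\]
\begin{itemize}
\item The remainder $\Delta_\sT(\w)$ collects estimation-error terms. It is $O_P(\bar\rho\sM\sN\zeta)$ uniformly, by the same argument as in Theorem \ref{thm:3.1}.
\item The penalty is at most $\lambda\bar\rho(\sT h)^{-1}$.
\item Both are $o_P(\xi^*_{\sT+1})$ by Assumption \ref{ass:4}(iii).
\end{itemize}
Next, the one-sided kernel average of the pseudo loss must be matched with $R^*_{\sT+1}(\w)+(\sT h)^{-1}\sum_{i,s}\sigma_i^2(s/\sT)K_{s,\sT}$; this is exactly why the normalizing constant appears in the definition of $R$. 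The matching has three parts:
\begin{itemize}
\item Lipschitz continuity of $\btheta_s$ and $\btheta^{(m)}_{s,*}$ (Assumption \ref{ass:1}(iii)) lets us replace $\btheta^{(m)}_{s,*}$ by $\btheta^{(m)}_{\sT,*}$ for $s\in[\sT-\sT h,\sT]$, at cost $O(\bar\rho\sN h)$.
\item Assumption \ref{ass:4}(iv) replaces the kernel average of second moments of $\Z_{i,s-1}$ by ${\sf E}[\Z_{i,\sT}\Z_{i,\sT}^\top]$, at cost $O(\bar\rho\sN h)$.
\item The centered fluctuations, namely the quadratic forms in $\Z$ minus their means and the $\beps$ cross terms, are of order $\sM^2\sN^{1/2}(\sT h)^{-1/2}$ after taking the maximum over the $\sM^2$ pairs $(m,m')$. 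Since the criterion is quadratic in $\w$, uniformity over $\calW$ reduces to the finitely many pairwise inner products.
\end{itemize}
Assumption \ref{ass:4}(iii) makes all of these $o_P(\xi^*_{\sT+1})$. Together these give
\[
\sup_{\w\in\calW}\bigl|(\sT h)^{-1}{\sf PTVMA}_\sT(\w)-c_\sT-R^*_{\sT+1}(\w)\bigr|/R^*_{\sT+1}(\w)=o_P(1),
\]
where $c_\sT$ does not depend on $\w$.

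\textbf{Step 3: conclusion.} The standard argument (as in Theorem \ref{thm:3.1}) then yields $R^*_{\sT+1}(\widehat\w_\sT)/\inf_{\w}R^*_{\sT+1}(\w)\to_P1$. Step 1 transfers this to $R_{\sT+1}$, giving the result.

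The main obstacle is Step 2's bridge between in-sample and out-of-sample quantities. The in-sample kernel-weighted loss uses only one-sided information on $[\sT-\sT h,\sT]$, whereas $R^*$ is an expectation at $\sT+1$; controlling this uniformly over $\w$ requires combining local stationarity with the Lipschitz drift of the pseudo-coefficients. A second difficulty is that $\widehat\w_\sT$ is random, so $R_{\sT+1}(\widehat\w_\sT)$ must be read with the weights plugged in after taking expectations. I would handle this by proving all the bounds uniformly over $\calW$ and then substituting $\widehat\w_\sT$.
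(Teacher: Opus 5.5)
Your proposal is correct and is essentially the paper's proof. Step~1 is (\ref{eqA.23}), obtained as in the paper by upgrading the in-probability bound through the uniform integrability in Assumption~\ref{ass:4}(ii); Step~2 is (\ref{eqA.24}), split exactly as (\ref{eqA.28})--(\ref{eqA.30}); and Step~3 is the ratio argument the paper imports as Lemma~1 of \cite{GZWCZ19}. The one bound you leave unsourced is the centered-fluctuation rate in Step~2, and it does not follow from Assumptions~\ref{ass:1}, \ref{ass:2}(i) and \ref{ass:4} alone: nothing there controls $\sum_{s}\sum_i K_{s,\sT}(\Z_{i,s-1}\Z_{i,s-1}^\top-{\sf E}[\Z_{i,s-1}\Z_{i,s-1}^\top])$ or $\sum_s\sum_i K_{s,\sT}\sigma_i^2(s/\sT)(u_{is}^2-1)$. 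The paper in fact invokes the concentration condition of Assumption~\ref{ass:3}(ii) at this point, even though it is not among the theorem's hypotheses, and uses Assumption~\ref{ass:4}(i) for the drift of $\sigma_i^2(\cdot)$ in (\ref{eqA.30}); you should cite both explicitly.
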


\smallskip

\noindent{\bf Remark 3.5.}\ \ Theorem \ref{thm:3.4} above shows that the PTVMA estimator is out-of-sample asymptotically optimal in the sense that its out-of-sample prediction risk is asymptotically equivalent to that of the infeasible optimal weight estimator $\w_\sT^\ast:=\inf_{\w\in \calW}R_{\sT+1}(\w)$. A similar property is also derived by \cite{ZL23} and \cite{ZZ23}. A combination of Theorems \ref{thm:3.1} and \ref{thm:3.4} demonstrates that our proposed PTVMA estimator achieves not only the minimum of in-sample estimation error at the interior time point but also the minimum of out-of-sample prediction risk.

\smallskip

To derive an explicit out-of-sample convergence rate of $\widehat{\w}_\sT$, we require the following assumption which is analogous to Assumption \ref{ass:3}(i)(iii) in Section \ref{sec3.1}.

\smallskip

\begin{condition}\label{ass:5}

(i) There exists a positive constant $\underline{\kappa}_\ast$ such that
\[
\lambda_{\min}\left(\frac{1}{\sN}{\sf E}\left[\wt{\Y}^*_{\sT+1}\wt{\Y}_{\sT+1}^{*\top}\right]\right)> \underline{\kappa}_\ast,\quad \wt\Y_{\sT+1}^* = \left(\Y_{\sT+1,\ast}^{(1)},\ldots,\Y_{\sT+1,\ast}^{(\sM)}\right)^\top.
\]

(ii) Let the following condition hold:
\[
\left(\lambda+\zeta\sM\sN\sT h\right)\bar{\rho}(\sN\sT h)^{-1}+\sM^2(\sN\sT h)^{-1/2}\rightarrow0.
\]

\end{condition}

\smallskip

\smallskip

\begin{theorem}\label{thm:3.5}

Suppose that Assumptions \ref{ass:1}, \ref{ass:2}(i), \ref{ass:3}(ii)(iii), \ref{ass:4}(i) and \ref{ass:5} are satisfied, and $\w_\sT^*$ uniquely minimizes $R_{\sT+1}(\w)$ in $\calW$. Then, for any forecast time point $\sT+1$, $\widehat{\w}_\sT$ is a consistent estimate of $\w_\sT^\ast$, satisfying
\begin{equation}\label{eq3.6}
\left\Vert\widehat{\w}_\sT-\w_\sT^*\right\Vert=O_P\left( (\lambda+\zeta\sM\sN\sT h)^{1/2}\bar{\rho}^{1/2}(\sN\sT h)^{-1/2}+\sM(\sN\sT h)^{-1/4}+ (\bar{\rho} h)^{1/2 } \right).
\end{equation}

\end{theorem}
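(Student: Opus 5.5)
The plan follows the in-sample argument for Theorem \ref{thm:3.2}: treat the feasible criterion ${\sf PTVMA}_\sT(\w)$ as a perturbed quadratic form in $\w$, and compare it with a population quadratic form whose minimizer is $\w_\sT^*$. Let $\widehat{\bmu}_s(\w)=\widetilde{\bmu}_s(\w)+\{\widehat{\bmu}_s(\w)-\bmu^*_s(\w)\}$ with $\widetilde\bmu_s(\w)=\bmu_s^*(\w)$. Expanding,
\[
{\sf PTVMA}_\sT(\w)=\w^\top\widehat{\bf A}_\sT\w-2\w^\top\widehat{\b}_\sT+c_\sT+\lambda\sum_{m}w^m\rho_m,
\]
where $\widehat{\bf A}_\sT$ is the $\sM\times\sM$ matrix with entries $\sum_s(\Y_s-\widehat\bmu_s^{(m)})^\top(\Y_s-\widehat\bmu_s^{(m')})K_{s,\sT}$, after recentering by $\Y_s$ and using $\sum_m w^m=1$. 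On the population side, $(\sT h)^{-1}$ times the pseudo version of this quadratic form should approximate $R_{\sT+1}(\w)$ up to a $\w$-free constant. This approximation uses Assumption \ref{ass:4}(i), the Lipschitz property of $\sigma_i$, and the one-sided local stationarity in Assumption \ref{ass:4}(iv) to replace the kernel-weighted averages near $\sT$ by moments at $\sT+1$. I expect this replacement to produce the bias term $(\bar\rho h)^{1/2}$ in (\ref{eq3.6}).

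The key steps, in order, are as follows.

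\textbf{Step 1 (curvature).} Write $R_{\sT+1}(\w)=\w^\top{\bf A}^*\w-2\w^\top\b^*+\text{const}+r(\w)$, where ${\bf A}^*={\sf E}[\wt\Y^*_{\sT+1}\wt\Y^{*\top}_{\sT+1}]$ up to the $\w$-free recentering. Assumption \ref{ass:5}(i) gives $\lambda_{\min}({\bf A}^*)\ge\sN\underline\kappa_*$. Since $\w^*_\sT$ is the unique minimizer over the convex set $\calW$, the first-order condition gives
\[
R_{\sT+1}(\w)-R_{\sT+1}(\w^*_\sT)\ge \sN\underline\kappa_*\|\w-\w^*_\sT\|^2-|r(\w)-r(\w^*_\sT)|.
\]
Here $r$ collects the estimation error of $\widehat\Y^{(m)}_{\sT+1}$, bounded via Assumption \ref{ass:1}(i) by $O_P(\bar\rho\zeta\sM\sN)$.

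\textbf{Step 2 (uniform deviation).} Bound $\sup_{\w\in\calW}|(\sT h)^{-1}{\sf PTVMA}_\sT(\w)-R_{\sT+1}(\w)-C|$ by splitting into four pieces.
\begin{enumerate}[label=(\alph*)]
\item Local linear estimation error terms are bounded via Assumption \ref{ass:1}(i) and Cauchy--Schwarz by $O_P(\bar\rho\zeta\sM\sN)$.
\item Cross terms between $\epsilon$ and $\bmu^{(m)}_*$ are linear in $\w$ and are controlled using Assumption \ref{ass:3}(ii) and a martingale-difference bound. This gives $O_P(\sM^2\sN^{1/2}(\sT h)^{-1/2})$ after a union over the $\sM^2$ entries, which is where the factor $\sM$ in the $\sM(\sN\sT h)^{-1/4}$ term enters.
\item Fluctuations of sample second moments around expectations are bounded via Assumption \ref{ass:3}(ii) by $O_P(\bar\rho\sN^{1/2}(\sT h)^{-1/2})$.
\item The penalty is at most $\lambda\bar\rho/(\sT h)$.
\item The locality bias from Assumption \ref{ass:4}(iv) and Assumption \ref{ass:1}(iii) is $O(\bar\rho\sN h)$.
\end{enumerate}

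\textbf{Step 3 (conclusion).} Since $\widehat\w_\sT$ minimizes ${\sf PTVMA}_\sT$, the standard argmin argument gives $R_{\sT+1}(\widehat\w_\sT)-R_{\sT+1}(\w^*_\sT)\le 2\sup_\w|\cdot|$. Dividing by $\sN\underline\kappa_*$ and taking square roots yields the three rates in (\ref{eq3.6}). Assumption \ref{ass:5}(ii) and Assumption \ref{ass:3}(iii) ensure the bound vanishes, which gives consistency.

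The main obstacle is Step 2(b)--(e) at the boundary. The kernel is one-sided at $\sT$, so there is no bias cancellation from symmetry. The object approximated is also an out-of-sample expectation at $\sT+1$ rather than a kernel-weighted in-sample loss. I would first try to handle this by inserting ${\sf E}[\Z_{i,\sT}\Z_{i,\sT}^\top]$ and applying Assumption \ref{ass:4}(iv) in operator norm, together with the Lipschitz continuity of $\btheta$, $\btheta^{(m)}_*$ and $\sigma_i$. I expect this to give only an $O(\bar\rho h)$ error in the quadratic form. That rate, rather than $h^2$, explains why the final term is $(\bar\rho h)^{1/2}$ and not $h$.
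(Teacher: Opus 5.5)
Your proposal follows essentially the same route as the paper's proof. It gets curvature of the quadratic $R_{\sT+1}(\w)$ from Assumption \ref{ass:5}(i), and bounds $\sN^{-1}\sup_{\w\in\calW}|{\sf PTVMA}^*_\sT(\w)-R_{\sT+1}(\w)|$ uniformly through the same split (estimation error, sampling fluctuation, one-sided locality bias of order $\bar\rho h$, penalty). It then finishes with the argmin inequality and a square root.

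The one step to fix is the inequality in Step 1. The first-order condition at $\w_\sT^*$ holds for $R_{\sT+1}$ itself, whose quadratic part ${\bf A}={\sf E}[\wt\Y_{\sT+1}\wt\Y_{\sT+1}^\top]$ is built from the estimated $\widehat\Y^{(m)}_{\sT+1}$. It does not hold for the ${\bf A}^*$-quadratic, which $\w_\sT^*$ need not minimize. So the correction term should be $|(\w-\w_\sT^*)^\top({\bf A}-{\bf A}^*)(\w-\w_\sT^*)|\le 8\sup_{\calW}|r|$, not $|r(\w)-r(\w_\sT^*)|$. The paper instead shows directly that $\lambda_{\min}(\sN^{-1}{\bf A})\ge\underline\kappa_*/2$; either way the rate is unchanged.
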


\smallskip

\noindent{\bf Remark 3.6.}\ \ As in Theorem \ref{thm:3.2}, the convergence rate of $\widehat{\w}_\sT$ relies on $\sM$, $\bar\rho$, $\lambda$, $h$ and $(\sN, \sT)$. In fact, the rate in (\ref{eq3.6}) is slightly slower than that in (\ref{eq3.3}) due to the extra term $h^{1/2}$, which is caused by the well-known boundary effect of kernel smoothing. Furthermore, as discussed in Remark 3.3, under some sufficient conditions, we may simplify the rate in (\ref{eq3.6}) to
\[
\left\|\widehat{\mathbf w}_\sT-{\mathbf w}_\sT^*\right\|=O_P\left(h^{1/2}+(\log \sT)^{1/4}(\sN \sT h)^{-1/4}\right).
\]

\smallskip

We finally consider the case when there exist correctly-specified candidate models. Let $\calD$ and $\calW(\calD)$ be defined as in Section \ref{sec3.1} and
$$\wt\xi_{\sT+1}^{*}=\inf_{\w\in\calW(\calD)}R_{\sT+1}^*(\w),$$
denoting the minimum out-of-sample prediction risk over misspecified models. The following theorem is an extension of Theorem \ref{thm:3.3} from in-sample fitting to out-of-sample prediction.

\smallskip

\begin{theorem}\label{thm:3.6}

Suppose that Assumptions \ref{ass:1}, \ref{ass:2}(i), \ref{ass:3}(ii)(iii) and \ref{ass:4}(i) are satisfied. In addition,
\begin{equation}\label{eq3.7}
\widetilde\xi_{\sT+1}^{\ast-1}\bar\rho \left[\sM\sN \zeta+\lambda(\sT h)^{-1}\right]+
\wt\xi_{\sT+1}^{\ast-1}\left[\bar{\rho} \sN h+\sM^2\sN^{1/2}(\sT h)^{-1/2}\right]=o_P(1),
\end{equation}
and $\calD$ is not empty. Then, for any forecast time point $\sT+1$, we have $\sum_{m\in {\calD}} \widehat{w}_\sT^m\stackrel{P}\rightarrow 1$.

\end{theorem}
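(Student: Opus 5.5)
We argue by contradiction, in the same way as for Theorem \ref{thm:3.3}, but now with the out-of-sample risk $R_{\sT+1}$ and the one-sided criterion ${\sf PTVMA}_\sT(\w)$. Split any weight vector as $\w=\w_{\calD}+\w_{\calD^c}$, and put $\bar w_{\calD^c}:=\sum_{m\notin\calD}w^m$. If $\bar w_{\calD^c}>0$, set $\bar\w:=\w_{\calD^c}/\bar w_{\calD^c}\in\calW(\calD)$. Because $\calD$ is nonempty, fix some $m_0\in\calD$ and the vertex $\w^{(m_0)}$, i.e.\ the unit vector with weight one on model $m_0$. It suffices to show that, with probability tending to one, the criterion value at any $\w$ with $\bar w_{\calD^c}\ge\epsilon$ exceeds ${\sf PTVMA}_\sT(\w^{(m_0)})$. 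That forces $\bar{\widehat w}_{\sT,\calD^c}<\epsilon$ for every fixed $\epsilon>0$, which gives $\sum_{m\in\calD}\widehat w_\sT^m\to1$.

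\textbf{Step 1: reduce the criterion to the risk.} Expand ${\sf PTVMA}_\sT(\w)$ around its population analogue. Concretely, write $\Y_s-\widehat\bmu_s(\w)=\beps_s+[\bmu_s-\bmu_s^*(\w)]+[\bmu_s^*(\w)-\widehat\bmu_s(\w)]$. The term $\sum_s\|\beps_s\|^2K_{s,\sT}$ does not depend on $\w$. Its centred part is controlled by the second display of Assumption \ref{ass:3}(ii), and its mean produces the subtracted term in $R^*_{\sT+1}$.

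\textbf{Step 2: bound the remainders uniformly in $\w$.} There are three remainders.
\begin{enumerate}
\item The estimation error $\bmu^*_s-\widehat\bmu_s$ is of order $\bar\rho\sM\sN\sT h\zeta$, using Assumption \ref{ass:1}(i)(ii) and Cauchy--Schwarz over the $\sM$ models.
\item The cross terms $\beps_s^\top(\cdot)$ are of order $\sM^2\sN^{1/2}(\sT h)^{1/2}$, using the bounded variance in Assumption \ref{ass:4}(i) and a variance calculation over pairs $(m,m')$.
\item The replacement of the kernel-weighted average over $s\le\sT$ by the value at $\sT+1$ costs $O(\bar\rho\sN\sT h\cdot h)$, using the Lipschitz conditions in Assumption \ref{ass:1}(iii), Assumption \ref{ass:3}(ii) and the local stationarity of $\Z_{i,t}$.
\end{enumerate}
The penalty is at most $\lambda\bar\rho$. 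Dividing by $\sT h$ gives, uniformly over $\calW$,
\[
(\sT h)^{-1}{\sf PTVMA}_\sT(\w)-C_\sT=R^*_{\sT+1}(\w)+O_P\!\left(\bar\rho[\sM\sN\zeta+\lambda(\sT h)^{-1}]+\bar\rho\sN h+\sM^2\sN^{1/2}(\sT h)^{-1/2}\right),
\]
where $C_\sT$ does not depend on $\w$. By (\ref{eq3.7}), the remainder is $o_P(\wt\xi^*_{\sT+1})$. This step is the main obstacle. Unlike Theorem \ref{thm:3.4}, there is no Assumption \ref{ass:4}(ii)--(iv) here, so the boundary bias and the uniform control must come from Assumption \ref{ass:3}(ii)(iii) and Lipschitz continuity alone. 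I would first try to reproduce the proof of Theorem \ref{thm:3.5}, where the same $h$ and $\sM^2$ terms appear, and reuse its bounds.

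\textbf{Step 3: comparison.} Use that $R^*_{\sT+1}(\w)$ is a quadratic form in the pseudo prediction errors. For $m\in\calD$, the pseudo prediction $\Y^{(m)*}_{\sT+1}$ equals the true conditional mean, since the pseudo coefficient of a correctly-specified model is the true one. Hence the misspecification component of $R^*_{\sT+1}(\w)$ is $\bar w_{\calD^c}^2$ times that of $\bar\w$, and therefore
\[
R^*_{\sT+1}(\w)-R^*_{\sT+1}(\w^{(m_0)})=\bar w_{\calD^c}^2\,[R^*_{\sT+1}(\bar\w)-R^*_{\sT+1}(\w^{(m_0)})]\ge\epsilon^2\,(\wt\xi^*_{\sT+1}-o(\wt\xi^*_{\sT+1})).
\]
The last inequality uses $R^*_{\sT+1}(\w^{(m_0)})=o(\wt\xi^*_{\sT+1})$, which follows from the display of Remark 3.4 together with (\ref{eq3.7}). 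Combining with Step 2, on $\{\bar w_{\calD^c}\ge\epsilon\}$ the criterion difference is at least $\sT h\,\wt\xi^*_{\sT+1}(\epsilon^2-o_P(1))>0$ with probability tending to one. This contradicts the minimality of $\widehat\w_\sT$ and proves the claim.
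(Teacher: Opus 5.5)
Your proposal is essentially the paper's argument. The paper proves Theorem~\ref{thm:3.6} only by saying it is analogous to Theorem~\ref{thm:3.3}: uniformly approximate the normalized criterion by the pseudo-risk $R^*_{\sT+1}(\w)$ with errors of the orders in (\ref{eq3.7}), obtained from the bounds behind (\ref{eqA.28})--(\ref{eqA.30}) plus the penalty; use the $(1-\tau(\w))^2$ scaling of the misspecification part that comes from the zero bias of correctly specified candidates, as in (\ref{eqA.20}); and compare with a weight vector supported on $\calD$, as in (\ref{eqA.22}). These are exactly your Steps 1--3. The loose ends you noticed are shared with the paper, not resolved by it. Your item 3 genuinely needs a local-stationarity condition on ${\sf E}[\Z_{i,t}\Z_{i,t}^\top]$ of the type in Assumption~\ref{ass:4}(iv): Assumption~\ref{ass:3}(ii) only controls fluctuations around the mean, and the paper's own bound (\ref{eqA.30}) invokes Assumption~\ref{ass:4}(iv) although Theorem~\ref{thm:3.6} does not list it. Likewise, $R^*_{\sT+1}(\w^{(m_0)})=o(\wt\xi^*_{\sT+1})$ does not follow from the $o(\sN)$ in Remark~3.4, which is insufficient when $\wt\xi^*_{\sT+1}=o(\sN)$. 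It must be obtained by matching ${\sf E}\|\beps_{\sT+1}\|^2$ with the subtracted term $(\sT h)^{-1}\sum_{i,s}\sigma_i^2(s/\sT)K_{s,\sT}$ via Assumption~\ref{ass:4}(i), which presupposes the one-sided normalization $(\sT h)^{-1}\sum_{s\le\sT}K_{s,\sT}\to1$, and by absorbing the $O(1/\sT)$-type drift $\Z_{i,\sT}^\top(\btheta_{\sT+1}-\btheta_\sT)$.
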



\section{PTVMA-based conformal prediction}\label{sec4}
\renewcommand{\theequation}{4.\arabic{equation}} \setcounter{equation}{0}

Despite the growing literature on frequentist model averaging, construction of prediction intervals using model averaging estimators remain under-explored. A commonly-used approach to constructing interval forecasts often relies on the asymptotic distribution of the model averaging estimator. The relevant literature can be divided into two strands. The first examines the limiting distributions of least squares averaging estimators within a local asymptotic framework, where time-invariant regression coefficients lie in a local $\sT^{-1/2}$ neighborhood of zero \citep{HC03, H14,Liu15}. However, the realism of this local misspecification assumption has been questioned by \cite{RZ03}, who argue that it lacks face validity in certain empirical settings. The second strand relies on the asymptotic distribution of model averaging estimators under proper model assumptions such as the (non-)linear regression with fixed parameters \citep{ZL18,YLSZH24} and time-varying coefficient regressions \citep{SHWZ23,TW25}. These results are restricted to the setting where both the number of candidate models and the number of predictors remain fixed. We next propose the PTVMA-based conformal prediction approach to construct interval forecasts without requiring the number of candidate models or predictors to be fixed; see Algorithm \ref{alg}. Without loss of generality, we assume that $\sT h$ takes a positive integer value.

\smallskip


\begin{algorithm}\captionsetup{labelfont={sc,bf}} 
  \caption{The PTVMA-based conformal prediction}
  \begin{algorithmic}\label{alg}

\STATE \textbf{Step 1}: Generate residuals by implementing PTVMA to the training sample.

\smallskip

\textbf{1.1}.\ \ \ Calculate $\widehat{\btheta}_t^{(m)}$ by Eq (\ref{eq2.2}) and obtain $\widehat{\mu}_{i,t}^{(m)}=\Z_{i,t-1}^{(m)\top}\widehat\btheta_{t}^{(m)}$, $t=\sT-\sT h,\ldots,\sT$, $m=1,\ldots,\sM$;

\textbf{1.2}.\ \ \ Estimate $\widehat{\w}_\sT$ by Eq (\ref{eq2.6}) with $t=\sT$, and construct
\[
\widehat{\epsilon}_{i,t}:=\widehat{\epsilon}_{i,t}(\widehat{\w}_\sT)=y_{i,t}-\widehat{\mu}_{i,t}(\widehat{\w}_\sT),\quad \widehat{\mu}_{i,t}(\widehat{\w}_\sT)=\sum_{m=1}^{\sM}\widehat{w}_{\sT}^m\widehat{\mu}_{i,t}^{(m)},\quad t=\sT-\sT h,\ldots,\sT;
\]

\textbf{1.3}.\ \ \ Construct the set of residuals:
$$\widehat{\boldsymbol\epsilon}_i=\{|\widehat{\epsilon}_{i,t}|,\quad t=\sT-\sT h,\ldots,\sT\}.$$

\medskip

\STATE\textbf{Step 2}: Determine the quantile of absolute residuals.

\smallskip

Let $\widehat{\epsilon}_i^{\alpha}$ be defined as the $\lceil(1-\alpha)(n_i+1)\rceil$-th smallest value of the elements in $\widehat{\boldsymbol\epsilon}_i$, where $n_i$ is the cardinality of $\widehat{\boldsymbol\epsilon}_i$.

\medskip

\textbf{Step 3}: Generate the prediction interval.

\smallskip

Integrate the quantile $\widehat{\epsilon}_i^{\alpha}$ with $\widehat{y}_{i,\sT+1}(\widehat{\w}_\sT)$ defined in Eq (\ref{eq2.7}) to generate the conformal prediction interval with confidence level $1-\alpha$:
$$C_{i}^{\alpha}(\Z_{\sT})=[\widehat{y}_{i,\sT+1}(\widehat{\w}_\sT)-\widehat{\epsilon}_i^{\alpha},\quad \widehat{y}_{i,\sT+1}(\widehat{\w}_\sT)+\widehat{\epsilon}_i^{\alpha}].$$

\medskip

\end{algorithmic}
\end{algorithm}

\smallskip

To establish a valid coverage property of the proposed conformal prediction procedure, we require the following technical assumptions.

\begin{condition}\label{ass:6}

(i) $\sigma_i(\cdot)$ is second-order continuously differentiable, and $\underline{\sigma}\leq\sigma_i(\cdot)\leq\overline\sigma$, where $\underline{\sigma}$ and $\overline\sigma$ are finite positive constants.

(ii) $\{u_{i,t}\}$ are i.i.d. with mean zero, unit variance and $\mE [u_{i,t}^4]<\infty$. Furthermore, their common cumulative distribution function ${\sf F}$ satisfies a Lipschitz condition, i.e.,
\[\sup_{x_1,x_2}|{\sf F}(x_1)-{\sf F}(x_2)|\leq M_{\sf F}|x_1-x_2|,\quad 0<M_{\sf F}<\infty.
\]

(iii) Let $|\btheta_{\sT+1} - \btheta_\sT | \leq M_\theta/\sT$ with $M_\theta$ being a positive constant.

\end{condition}

Letting $\wh{u}_{i,t}=\sigma_i^{-1}(\tau_t)\widehat{\epsilon}_{i,t}$ with $\tau_t=t/\sT$, we define the following in-sample and out-of-sample performance measures:
$$\delta_{i,\sT}^2=\frac{1}{\sT h+1}\sum_{t=\sT-\sT h}^{\sT}(|\wh{u}_{i,t}|-|u_{i,t}|)^2\quad \text{and}\quad \psi_{i,\sT+1}=\big||\wh{u}_{i,\sT+1}|-|u_{i,
\sT+1}|\big|.$$
The following proposition gives the finite-sample upper bound on the gap between the actual coverage probability and the nominal level $1-\alpha$ and further establishes the asymptotic validity of the developed PTVMA-based conformal prediction method.

\smallskip

\begin{proposition}\label{prop:4.1}

Suppose that Assumption \ref{ass:6} is satisfied. There exists a positive constant $c_\dag$ such that
\begin{equation}\label{eq4.1}
\left|{\sf P}\left(y_{i,\sT+1}\in C_{i}^{\alpha}(\Z_{\sT})\right)-(1-\alpha)\right|\leq c_\dag \left(\left[{\sf E}\left(\delta_{i,\sT}^{2}\right)\right]^{1/4}+\left[{\sf E}\left(\psi_{i,\sT+1}\right)\right]^{1/2}+h^{4/5}+(\sT h)^{-1/3}\right)
\end{equation}
for any fixed $i$, if $\left[{\sf E}\left(\psi_{i,\sT+1}\right)\right]<1$. Furthermore, if 
$$ \bar{\rho} \widetilde\xi_{\sT+1}^{\ast-1} \left[ \bar\rho \sM\sN \zeta + \bar\rho\lambda(\sT h)^{-1} + \bar{\rho} \sN  h + \sM^2\sN^{1/2}(\sT h)^{-1/2} \right]= o_P(1) $$ 
and $\calD$ is not empty, we have
\begin{equation}\label{eq4.2}
{\sf P}\left(y_{i,\sT+1}\in C_{i}^{\alpha}(\Z_{\sT})\right)=1-\alpha+o(1).
\end{equation}

\end{proposition}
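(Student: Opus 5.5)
The plan follows the standard ``oracle-versus-estimated residual'' argument for conformal prediction under non-exchangeability (in the spirit of Chernozhukov, W\"uthrich and Zhu), adapted to the locally stationary setting. First I would normalise everything by the heteroskedastic scale: since $\widehat\epsilon_{i,t}=\sigma_i(\tau_t)\widehat u_{i,t}$ and, under Assumption \ref{ass:6}(i), $\sigma_i(\tau_t)=\sigma_i(\tau_{\sT+1})+O(h)$ uniformly for $t\in[\sT-\sT h,\sT]$, the event $y_{i,\sT+1}\in C_i^\alpha(\Z_\sT)$ can be rewritten as $|\widehat u_{i,\sT+1}|\le \widehat q_i^\alpha$. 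Here $\widehat q_i^\alpha$ is the $\lceil(1-\alpha)(n_i+1)\rceil$-th order statistic of $\{|\widehat u_{i,t}|\}$, up to a multiplicative $1+O(h)$ distortion, and $\widehat u_{i,\sT+1}$ is the forecast error scaled by $\sigma_i$. Assumption \ref{ass:6}(iii) controls the one-step drift of $\btheta$, and this bias is absorbed into $\psi_{i,\sT+1}$. The $O(h)$ scale mismatch shifts quantiles, and with the Lipschitz ${\sf F}$ it costs $O(h)$ in probability. A smoothing-parameter balancing step, which I expect to be what produces $h^{4/5}$, combines the second-order smoothness of $\sigma_i$ with the discretisation of the quantile. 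I would not pin this exponent down until the end.

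Second, I would compare with the oracle procedure that uses the true $u_{i,t}$. Since the $u_{i,t}$ are i.i.d., the oracle order statistic $q_i^\alpha$ of $\{|u_{i,t}|\}_{t=\sT-\sT h}^{\sT}$ together with $|u_{i,\sT+1}|$ is exchangeable. This gives exact coverage $\ge 1-\alpha$ and $\le 1-\alpha+1/(n_i+1)$. To move from the oracle to the feasible procedure I would use two perturbation bounds.

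\emph{(a) In-sample quantile perturbation.} Let $A=\{t:\,||\widehat u_{i,t}|-|u_{i,t}||>\eta\}$. By Markov, $|A|/n_i\le \delta_{i,\sT}^2/\eta^2$. Hence $\widehat q_i^\alpha$ lies between the oracle empirical quantiles at levels $1-\alpha\mp\delta_{i,\sT}^2/\eta^2$, shifted by $\pm\eta$. The Lipschitz ${\sf F}$ and a DKW-type bound for the empirical cdf of $n_i\asymp \sT h$ i.i.d.\ variables then convert this into a probability error of order $\eta+{\sf E}\delta^2_{i,\sT}/\eta^2+(\sT h)^{-1/2}$. Optimising over $\eta$ gives a term of order $({\sf E}\delta_{i,\sT}^2)^{1/3}$. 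To reach the stated exponent $1/4$ one instead takes $\eta=({\sf E}\delta^2)^{1/4}$ and uses Markov on $\delta^2$ once more. The residual DKW and discretisation terms, balanced with a slack, give the $(\sT h)^{-1/3}$ term.

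\emph{(b) Out-of-sample perturbation.} We have ${\sf P}(\psi_{i,\sT+1}>\eta')\le {\sf E}\psi_{i,\sT+1}/\eta'$. The Lipschitz ${\sf F}$ costs $M_{\sf F}\eta'$. Choosing $\eta'=({\sf E}\psi)^{1/2}$ yields the $[{\sf E}\psi]^{1/2}$ term, and the condition ${\sf E}\psi<1$ ensures $\eta'<1$.

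Collecting the terms gives (\ref{eq4.1}).

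For (\ref{eq4.2}) I would verify that every term on the right of (\ref{eq4.1}) vanishes. Under the stated rate condition and $\calD\neq\emptyset$, the argument of Theorem \ref{thm:3.6} gives $\sum_{m\in\calD}\widehat w^m_\sT\to1$, together with a quantitative bound on the weight on misspecified models. For correctly specified $m$, Assumption \ref{ass:1}(i) gives $\|\widehat\btheta^{(m)}_t-\btheta_t\|=O_P(\bar\rho^{1/2}\zeta)$ uniformly, and $\btheta_{t,*}^{(m)}$ coincides with the truth. Hence $|\widehat u_{i,t}-u_{i,t}|\le \underline\sigma^{-1}|\widehat\mu_{i,t}(\widehat\w_\sT)-\mu_{i,t}|$. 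This is bounded by $\|\Z_{i,t-1}\|\,\bar\rho^{1/2}\zeta$ plus the misspecified weight mass times $O(\bar\rho^{1/2})\|\Z_{i,t-1}\|$, using Assumption \ref{ass:1}(iii). Assumption \ref{ass:1}(ii) then shows ${\sf E}\delta_{i,\sT}^2\to0$ and ${\sf E}\psi_{i,\sT+1}\to0$ (the latter also using Assumption \ref{ass:6}(iii)). The main obstacle is this last step. It requires upgrading the $o_P$ statements on weights and coefficients to convergence in expectation, which needs uniform integrability. I would try truncation on the event where the rates hold, bounding the complement by $O(1)$ times its vanishing probability, and use the fourth moment in Assumption \ref{ass:6}(ii) for the required integrability.
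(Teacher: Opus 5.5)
Your argument for (\ref{eq4.1}) is correct in substance but is organised differently from the paper's.

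The paper works with empirical $p$-values and empirical distribution functions. It bounds $|\wh{\sf F}(X)-\wt{\sf F}(X)|$ by $\delta_{i,\sT}^{1/2}+\wt{\sf F}(X+\delta_{i,\sT}^{1/2})-\wt{\sf F}(X-\delta_{i,\sT}^{1/2})$. That is, it uses the \emph{random} tolerance $\delta_{i,\sT}^{1/2}$ with a Markov count of bad calibration points. It then applies Jensen, ${\sf E}\,\delta_{i,\sT}^{1/2}\le[{\sf E}\,\delta_{i,\sT}^{2}]^{1/4}$, which is where the exponent $1/4$ comes from. The rescaling by $\sigma_i$ is treated separately, by comparing $\wh p_{i,\sT+1}$ with $\wh q_{i,\sT+1}$ indicator by indicator.

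You instead use exact exchangeability of the oracle scores. You sandwich the feasible order statistic between oracle order statistics of ranks $k\mp|A|$, with a \emph{deterministic} tolerance $\eta$ and a Chebyshev count. You also condition on the calibration sample, of which $u_{i,\sT+1}$ is independent. This gives the sharper terms $[{\sf E}\,\delta_{i,\sT}^{2}]^{1/3}$ and $(\sT h)^{-1/2}$, which already imply (\ref{eq4.1}); for ${\sf E}\,\delta_{i,\sT}^{2}\ge1$ the bound is trivial once $c_\dag\ge1$. So your detour to force the exponent $1/4$ is unnecessary. So is the condition ${\sf E}\,\psi_{i,\sT+1}<1$, which the paper uses only to write ${\sf E}\,\psi_{i,\sT+1}\le[{\sf E}\,\psi_{i,\sT+1}]^{1/2}$.

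Your account of $h^{4/5}$ is wrong, though. Only $\sigma_i(\tau_t)-\sigma_i(\tau_{\sT+1})=O(h)$ is used, not second-order smoothness. The exponent arises because the induced perturbation $O(h)\,|u_{i,t}|$ is not uniformly $O(h)$. The paper truncates at $|u_{i,t}|\le h^{-1/5}$ and bounds the discarded fraction by Markov with ${\sf E}\,u_{i,t}^4<\infty$. In your multiplicative sandwich the conditional cost is ${\sf F}_{|u|}((1+ch)\wh q_i^{\alpha})-{\sf F}_{|u|}((1-ch)\wh q_i^{\alpha})$, where ${\sf F}_{|u|}$ is the distribution function of $|u_{i,t}|$. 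This is $O(h)$ only if you also control the size of $\wh q_i^{\alpha}$. Otherwise $\sup_{x\ge0}\min\{4M_{\sf F}chx,\,1-{\sf F}_{|u|}((1-ch)x)\}=O(h^{4/5})$ by the fourth moment, which is exactly the paper's mechanism.

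For (\ref{eq4.2}) your plan coincides with the paper's. It uses the $\Omega_1$/$\Omega_2$ split over $m\in\calD$ and $m\notin\calD$, the argument of Theorem \ref{thm:3.6} for the misspecified weight mass, and Assumption \ref{ass:6}(iii) for the drift. You rightly single out the passage from $o_P(1)$ statements to ${\sf E}\,\delta_{i,\sT}^{2}\to0$ and ${\sf E}\,\psi_{i,\sT+1}\to0$; the paper makes this passage in (\ref{eqA.48})--(\ref{eqA.51}) without comment.

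The fix you propose would fail, however. Off the good event, $\delta_{i,\sT}^{2}$ and $\psi_{i,\sT+1}$ are driven by $\wh\btheta_t^{(m)}$. That estimator involves the inverse of a random Gram matrix, and Assumption \ref{ass:1}(i) gives only $O_P$ rates for it, no moments. So there is no $O(1)$ bound to multiply the vanishing probability by, and ${\sf E}\,u_{i,t}^4<\infty$ says nothing about the estimation error.

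The remedy is to avoid expectations of $\delta_{i,\sT}^{2}$ and $\psi_{i,\sT+1}$ altogether. In your own argument they enter only through bounded quantities:
\begin{itemize}
\item $|A|/n_i\le\min\{1,\delta_{i,\sT}^{2}/\eta^2\}$ gives ${\sf E}(|A|/n_i)\le\eta+{\sf P}(\delta_{i,\sT}^{2}>\eta^3)$;
\item the test-point term is ${\sf P}(\psi_{i,\sT+1}>\eta')+4M_{\sf F}\eta'$.
\end{itemize}
So the coverage error is bounded by a constant times $\eta+\eta'+{\sf P}(\delta_{i,\sT}^{2}>\eta^3)+{\sf P}(\psi_{i,\sT+1}>\eta')+h^{4/5}+(\sT h)^{-1/2}$, for every fixed $\eta,\eta'>0$. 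The decomposition actually delivers $\delta_{i,\sT}^{2}=o_P(1)$ and $\psi_{i,\sT+1}=o_P(1)$, and with this bound those two facts give (\ref{eq4.2}).
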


\smallskip

\noindent{\bf Remark 4.1.}\ \ We allow all the candidate models to be misspecified to obtain the finite-sample upper bound for the deviation of the actual coverage probability from the nominal level in (\ref{eq4.1}). To further derive the asymptotic validity of the constructed prediction interval in (\ref{eq4.2}), we need to prove that ${\sf E}(\delta_{i,\sT}^{2})\rightarrow0$ and ${\sf E}(\psi_{i,\sT+1})\rightarrow0$, which requires $\calD$ to be not empty, indicating that there exists at least one correctly-specified candidate model so that Theorem \ref{thm:3.6} is applicable.


\section{Monte-Carlo simulation}\label{sec5}
\renewcommand{\theequation}{5.\arabic{equation}} \setcounter{equation}{0}

To demonstrate advantages of the proposed PTVMA procedure, this section creates a sharp challenge for classic approaches in which any single-network specification may be misspecified. In the two simulation settings, model selection over a set of single-network specifications is global misspecified because no candidate model coincides with the true DGP throughout the entire period. By comparing the finite-sample performance of PTVMA with several competing models, our simulation experiment examines whether the proposed PTVMA can adaptively learn relevant network layers and variables over time and deliver accurate predictions despite persistent specification uncertainty.

\subsection{Evaluation measures on finite-sample performance}

To evaluate the in-sample performance, we calculate the in-sample ratio of the root mean squared prediction errors (RMSPE) to compare the proposed model with benchmark models:
\begin{align*}
{\sf RMSPE}_\text{in} = \sqrt{ \frac{ \sum_{r=1}^{R} \sum_{t=1}^{\sT} \sum_{i=1}^{\sN} \big( y_{it:r} - \widehat{y}_{it:r}^\text{PTVMA} \big)^2 }{\sum_{r=1}^{R} \sum_{t=1}^{\sT} \sum_{i=1}^{\sN} \big( y_{it:r} - \widehat{y}_{it:r}^\text{Benchmark} \big)^2 } },
\end{align*}
where $R=1000$ is the replication number in simulations, $y_{it:r} $ is the true value in the $r$-th replication, $ \widehat{y}_{it:r}^\text{PTVMA}$ is the fitted value from the PTVMA, and $ \widehat{y}_{it:r}^\text{Benchmark} $ is from a benchmark model. We consider two benchmarks. The network vector autoregressive (NAR) model \citep{ZPLLW17} includes the same response and predictors as the DGP but restricts the parameters to be time-invariant, estimating them via the OLS. The time-varying NAR (tv-NAR) model \citep{LPTW26} adopts the same DGP specification, utilizing a local linear approach to estimate time-varying coefficients. The proposed PTVMA method outperforms the benchmark model when the ${\sf RMSPE}_\text{in}$ ratio is smaller than one.

\smallskip

The out-of-sample predictive performance is assessed by employing the rolling forecasting procedure. Specifically, we compute the out-of-sample ratio of the RMSPE for the $l$-step-ahead out-of-sample rolling prediction of $ \widehat{\bm{Y}}_{\sT+l} $:
\begin{align*}
{\sf RMSPE}_\text{out} (l) = \sqrt{ \frac{ \sum_{r=1}^{R} \sum_{t=\sT_0+1}^{\sT-l} \sum_{i=1}^{\sN} \big( y_{i,t+l:r} - \widehat{y}_{i,t+l:r}^\text{PTVMA} \big)^2 }{ \sum_{r=1}^{R} \sum_{t=\sT_0+1}^{\sT-l} \sum_{i=1}^{\sN} \big( y_{i,t+l:r} - \widehat{y}_{i,t+l:r}^\text{Benchmark} \big)^2 } },
\end{align*}
where $ \widehat{y}_{i,t+l:r}^\text{PTVMA}$ and $\widehat{y}_{i,t+l:r}^\text{Benchmark}$ denotes the $l$-step-ahead rolling prediction in the $r$-th replication using the proposed PTVMA and benchmark methods, respectively, $\sT_0 = \lfloor 0.4\sT \rfloor $ is the rolling window length. Under the iterated scheme, multi-step-ahead forecasts are obtained by estimating a one-step-ahead model and recursively replacing future lagged values with their forecasts.

\smallskip

The performance of conformal prediction is evaluated by the empirical coverage probability. Specifically, we use the test set to calculate empirical coverage probabilities:
\begin{align*}
{\sf CP} = \frac{1}{R\sN(\sT-\sT_0)} \sum_{r=1}^{R} \sum_{t=\sT_0+1}^{\sT} \sum_{i=1}^{\sN} I \left\{ y_{i,t:r} \in C_{i:r}^{\alpha}(\mathbf{Z}_{t-1}) \right\},
\end{align*}
where $C_{i:r}^{\alpha}(\mathbf{Z}_{t-1})$ is the conformal prediction interval constructed for individual $i$ at time $t$ in the  $r$-th replication.

\subsection{One candidate model is correctly specified over half-period.}

The DGP 1 simulates a setting in which the response variable $ y_{it} $ is driven by two distinct networks, $\mathbf{W}^{(1)}$ and $\mathbf{W}^{(2)}$, with  time-varying impacts $\beta_{t1}$ and $\beta_{t2}$. Consider
\begin{align}
y_{i,t} = \beta_{t1} \sum_{j \neq i} \widetilde{\omega}_{ij}^{(1)} y_{j,t-1} +  \beta_{t2} \sum_{j \neq i} \widetilde{\omega}_{ij}^{(2)} y_{j,t-1} + \alpha_{t1} y_{i,t-1} + \alpha_{t2} y_{i,t-2}+ \sum_{q=1}^{3} \gamma_{tq} x_{i,q} + \epsilon_{i,t},
\label{Eq Simulation DGP 1}
\end{align}
where $ \widetilde{\omega}_{ij}^{(k)} = \omega_{ij}^{(k)} / \sum_{j \neq i} \omega_{ij}^{(k)} $ is the $ (i,j) $-th element of the matrix obtained by row-normalizing the adjacency matrix $ \mathbf{W}^{(k)} $, $ \mathbf{X}_i = \left( x_{i,1}, x_{i,2}, x_{i,3} \right)^\top $ is independently drawn from a three-dimensional normal distribution across $ i $ with mean $ \bm{0} $ and covariance $ \bm{\Sigma}_X = (\sigma_{x,q_1q_2})_{3 \times 3} $ with $ \sigma_{x,q_1q_2} = 0.5^{|q_1-q_2|} $ for $ 1 \leq q_1, q_2 \leq 3 $, the error term $ \epsilon_{i,t} = \sigma_i(t/\sT) u_{i,t} $, $ u_{i,t} $ is independently simulated by the standard normal distribution, $ \sigma_i(t/\sT) = (0.25 \sin (t/\sT)+0.3)^{0.5} $ for odd $ i $, and $ \sigma_i(t/\sT) = (0.25 \cos (t/\sT)+0.3)^{0.5} $ for even $ i $. True time-varying coefficients are set as
\be
&& \beta_{t1} = 0.4 / (1 + \exp \{ 50(t/\sT-0.5) \} ),\quad \beta_{t2} = 0.4 / (1 + \exp \{ -50(t/\sT-0.5) \} ),\notag\\
&& \alpha_{t1} = -0.3 \sin (\pi t/\sT) + 0.25,\quad \alpha_{t2} = -0.3 \cos (\pi t/\sT) + 0.25,\notag\\
&&\gamma_{t1} = 2 \cos (\pi t/\sT),\quad  \gamma_{t2} = 2 (t/\sT - 0.5)^2,\quad \gamma_{t3} = 1 - 2t/\sT.\notag
\ee

\smallskip

The first adjacency matrix $ \mathbf{W}^{(1)} $ is generated by the stochastic block model with two blocks. Each node is randomly assigned for one block label with equal probability and probabilities within and between blocks are generated by: $ {\sf P} (\omega_{ij}^{(1)} = 1) = 0.5 $ if $ i $ and $ j $ are in the same block; $ {\sf P} (\omega_{ij}^{(1)} = 1) = 0.1 $ otherwise. The second adjacency matrix $ \mathbf{W}^{(2)} $ is generated by using the power-law distribution model. The power-law distribution of degrees widely exists in the real world, where most agents have low degrees while a few hubs have very high degrees. We follow \citet{ZPLLW17} and define $ \mathbf{W}^{(2)} $ as follows. First, generate for each agent's $ degree_i $ via the discrete power-law distribution $ {\sf P} ({\sf degree}_i = k) \propto k^{-c} $, where $c $ is the exponent parameter and a smaller value implies a heavier distribution tail.\footnote{Here, we set $ c = 2.5 $ and the created networks share a similar density size at around $ 20\% $.} Next, we randomly select ${\sf degree}_i $ agents to be agent $ i $'s followers.

\smallskip

We consider $ \sM = 8 $ candidate models in the proposed PTVMA for DGP 1 and list their respective predictors in Table \ref{Tab Candidate models for DGP 1 and 2}. Note that $ \beta_{t1} $ is significant in the first half of the simulation period but not in the second, while $ \beta_{t2} $ exhibits the reverse pattern that is insignificant in the first half and significant in the second. Consequently, candidate models 2 and 4 are correctly specified in the first and second half of the sample period, respectively, but misspecified in the other period. We also involve irrelevant predictors $ x_{i4} $, $ x_{i5} $, $ \sum_{j \neq i} \widetilde{\omega}_{ij}^{(3)} y_{j,t-1} $ and $ \sum_{j \neq i} \widetilde{\omega}_{ij}^{(4)} y_{j,t-1} $ into some candidate models to demonstrate that the PTVMA method can dynamically select the correctly specified model if it is in the candidate models. Irrelevant predictors $ x_{i4} $ and $ x_{i5} $ are independently drawn from a normal distribution with mean $ 1 $ and variance $ 1 $. We refer to \cite{ZY18} and adopt the left-right matrix as one irrelevant adjacency matrix $ \mathbf{W}^{(3)} $, where each agent interacts only with its left and right neighbors. The other irrelevant adjacency matrix $ \mathbf{W}^{(4)} $ is constructed by a distance-based spatial weight matrix. We first independently generate agent $ i $'s location $ (x_i, y_i) $ from $ \mathrm{U} [0, 1] $, and then compute the pairwise Euclidean distance $ d_{ij} $ between agent $ i $ and $ j $. The spatial weight matrix is constructed using an exponential distance-decay function, with off-diagonal elements $ \omega_{ij}^{(4)} = \mathrm{I} \left( \exp \{ -10 d_{ij} \} < d_0 \right) $ and zero diagonal, where $ d_0 $ is the 25\% quantile of $ \{ d_{ij} \} $.

\smallskip

As reported in Table \ref{Tab Simulations DGP 1_In sample performance}, the PTVMA delivers uniformly better in-sample fit than the NAR benchmark. The RMSPE ratio (PTVMA/NAR) ranges from $ 0.746 $ to $ 0.825 $ across all $ (\sN, \sT) $ designs, indicating significant reductions of in-sample prediction error. Furthermore, PTVMA delivers nearly the same in-sample performance as the correct specification (tv-NAR model), with RMSPE ratios tightly around $ 1.002 $ and decreasing as either $ \sN $ or $ \sT $ grows. Given that the tv-NAR model is correctly specified here, this comparison suggests that the PTVMA approximates the true model specification without priori information.

\smallskip

Table \ref{Tab Simulations DGP 1_In sample performance} also reports the average weights of eight candidate models over the $ 1000 $ replications. The estimated weights are highly concentrated on candidate model 2 and 4, each receiving about $ 0.48 $ and jointly accounting for more than $ 95\% $ of the total weight. In contrast, the (averaged) time-varying weight estimates for the other six candidate models are negligible across all $ \sN $ and $ \sT $. This pattern is consistent with the design of DGP 1: the true network is generated from a combination of $ \mathbf{W}^{(1)} $ and $ \mathbf{W}^{(2)} $ while $ \mathbf{W}^{(3)} $ and $ \mathbf{W}^{(4)} $ are irrelevant alternatives. Overall, the evidence indicates that PTVMA places most of the weight on DGP-relevant network components while shrinking irrelevant candidates toward zero, confirming Theorem \ref{thm:3.3} in finite samples.

\begin{table}[H]
\centering
\begin{spacing}{1.3}

\caption{\textbf{Candidate models for DGPs 1 and 2 in the simulation study.}}
\label{Tab Candidate models for DGP 1 and 2}

\begin{tabular}{cll}
	\toprule
	Index & \multicolumn{1}{c}{DGP 1 predictors}                                                                                                   & \multicolumn{1}{c}{DGP 2 predictors}                                                                                          \\ \hline
	cm 1  & $ \sum_{j \neq i} \widetilde{\omega}_{ij}^{(1)} y_{j,t-1} $, $ y_{i,t-1} $, $ x_{i1} $, $ x_{i2} $, $ x_{i3} $, $ x_{i4} $, $ x_{i5} $ & $ \sum_{j \neq i} \widetilde{\omega}_{ij}^{(1)} y_{j,t-1} $, $ y_{i,t-1} $, $ x_{i1} $, $ x_{i2} $, $ x_{i3} $, $ x_{i4} $    \\
	cm 2  & $ \sum_{j \neq i} \widetilde{\omega}_{ij}^{(1)} y_{j,t-1} $, $ y_{i,t-1} $, $ y_{i,t-2} $, $ x_{i1} $, $ x_{i2} $, $ x_{i3} $          & $ \sum_{j \neq i} \widetilde{\omega}_{ij}^{(1)} y_{j,t-1} $, $ y_{i,t-2} $, $ x_{i1} $, $ x_{i2} $, $ x_{i3} $, $ x_{i5} $    \\
	cm 3  & $ \sum_{j \neq i} \widetilde{\omega}_{ij}^{(2)} y_{j,t-1} $, $ y_{i,t-1} $, $ x_{i1} $, $ x_{i2} $, $ x_{i3} $, $ x_{i4} $, $ x_{i5} $ & $ \sum_{j \neq i} \widetilde{\omega}_{ij}^{(1)} y_{j,t-1} $, $ y_{i,t-1} $, $ y_{i,t-2} $, $ x_{i1} $, $ x_{i2} $, $ x_{i3} $ \\
	cm 4  & $ \sum_{j \neq i} \widetilde{\omega}_{ij}^{(2)} y_{j,t-1} $, $ y_{i,t-1} $, $ y_{i,t-2} $, $ x_{i1} $, $ x_{i2} $, $ x_{i3} $          & $ \sum_{j \neq i} \widetilde{\omega}_{ij}^{(2)} y_{j,t-1} $, $ y_{i,t-1} $, $ x_{i1} $, $ x_{i2} $, $ x_{i3} $, $ x_{i4} $    \\
	cm 5  & $ \sum_{j \neq i} \widetilde{\omega}_{ij}^{(3)} y_{j,t-1} $, $ y_{i,t-1} $, $ x_{i1} $, $ x_{i2} $, $ x_{i3} $, $ x_{i4} $, $ x_{i5} $ & $ \sum_{j \neq i} \widetilde{\omega}_{ij}^{(2)} y_{j,t-1} $, $ y_{i,t-2} $, $ x_{i1} $, $ x_{i2} $, $ x_{i3} $, $ x_{i5} $    \\
	cm 6  & $ \sum_{j \neq i} \widetilde{\omega}_{ij}^{(3)} y_{j,t-1} $, $ y_{i,t-1} $, $ y_{i,t-2} $, $ x_{i1} $, $ x_{i2} $, $ x_{i3} $          & $ \sum_{j \neq i} \widetilde{\omega}_{ij}^{(2)} y_{j,t-1} $, $ y_{i,t-1} $, $ y_{i,t-2} $, $ x_{i1} $, $ x_{i2} $, $ x_{i3} $ \\
	cm 7  & $ \sum_{j \neq i} \widetilde{\omega}_{ij}^{(4)} y_{j,t-1} $, $ y_{i,t-1} $, $ x_{i1} $, $ x_{i2} $, $ x_{i3} $, $ x_{i4} $, $ x_{i5} $ & \multicolumn{1}{c}{/}                                                                                                         \\
	cm 8  & $ \sum_{j \neq i} \widetilde{\omega}_{ij}^{(4)} y_{j,t-1} $, $ y_{i,t-1} $, $ y_{i,t-2} $, $ x_{i1} $, $ x_{i2} $, $ x_{i3} $          & \multicolumn{1}{c}{/}                                                                                                         \\
	\bottomrule
\end{tabular}

\vspace*{5mm}
\caption*{\footnotesize Note: The response variable is $ y_{it} $. In DGP 1, $ \mathbf{W}^{(3)} $ and $ \mathbf{W}^{(4)} $ are irrelevant adjacency matrixes, and $ x_{i4} $ and $ x_{i5} $ are irrelevant variables. Candidate models 2 and 4 are correctly specified in the first and second half of the sample period, respectively, but misspecified in the other period. The other six candidate models are misspecified. In DGP 2, all the candidate models are misspecified over the sample period.}

\end{spacing}
\end{table}

\begin{table}[H]
\centering
\begin{spacing}{1.2}

\caption{\textbf{In-sample performance of simulations in DGP 1.}}
\label{Tab Simulations DGP 1_In sample performance}

\begin{tabular}{clcccccccccccc}
	\toprule
	&           &  & \multicolumn{2}{c}{RMSPE} &  & \multicolumn{8}{c}{Average weights of candidate models} \\ \hline
	&           &  & NAR         & tv-NAR      &  & cm 1  & cm 2  & cm 3 & cm 4 & cm 5 & cm 6 & cm 7 & cm 8 \\ \hline
	& $ \sT=60 $  &  & 0.745       & 1.005       &  & 0.00  & 0.46  & 0.00 & 0.49 & 0.00 & 0.03 & 0.00 & 0.02 \\
	$ \sN=10 $ & $ \sT=80 $  &  & 0.760       & 1.004       &  & 0.00  & 0.47  & 0.00 & 0.50 & 0.00 & 0.02 & 0.00 & 0.02 \\
	& $ \sT=100 $ &  & 0.760       & 1.003       &  & 0.00  & 0.48  & 0.00 & 0.49 & 0.00 & 0.02 & 0.00 & 0.01 \\ \hline
	& $ \sT=60 $  &  & 0.795       & 1.003       &  & 0.00  & 0.47  & 0.00 & 0.49 & 0.00 & 0.02 & 0.00 & 0.02 \\
	$ \sN=20 $ & $ \sT=80 $  &  & 0.804       & 1.002       &  & 0.00  & 0.48  & 0.00 & 0.49 & 0.00 & 0.01 & 0.00 & 0.02 \\
	& $ \sT=100 $ &  & 0.811       & 1.002       &  & 0.00  & 0.48  & 0.00 & 0.49 & 0.00 & 0.01 & 0.00 & 0.01 \\ \hline
	& $ \sT=60 $  &  & 0.813       & 1.002       &  & 0.00  & 0.48  & 0.00 & 0.49 & 0.00 & 0.02 & 0.00 & 0.02 \\
	$ \sN=30 $ & $ \sT=80 $  &  & 0.821       & 1.001       &  & 0.00  & 0.48  & 0.00 & 0.49 & 0.00 & 0.02 & 0.00 & 0.02 \\
	& $ \sT=100 $ &  & 0.825       & 1.001       &  & 0.00  & 0.48  & 0.00 & 0.49 & 0.00 & 0.01 & 0.00 & 0.02 \\
	\bottomrule
\end{tabular}

\vspace*{5mm}
\caption*{\footnotesize Note: This table reports the in-sample RMSPE ratio of the PTVMA relative to NAR and tv-NAR model and the average weight estimates of eight candidate models. The NAR and tv-NAR model involve $ \mathbf{W}^{(1)} $ and $ \mathbf{W}^{(2)} $ as network adjacency matrixes and the tv-NAR model is the correct specification.}

\end{spacing}
\end{table}

Figure \ref{Fig Simulations DGP 1_Time-varying average weights of cm1 and cm2} illustrates a central implication of the proposed PTVMA: when the strength of network dependence is time-varying, the optimal weights are expected to be time-varying as well. The solid lines report the averaged time-varying weights on candidate model 2 and 4, and the dashed lines plot the true coefficient functions: $ \beta_{t1} $ and $ \beta_{t2} $. The plots show that the assigned weights closely track the DGP 1: the PTVMA places most weight on candidate model 2 in the first half of the sample period and rapidly shifts weight toward candidate model 4 in the second half, consistent with the time-varying pattern of true coefficient functions. In addition, the weight reallocation is concentrated around the mid-sample transition where $ \beta_{t1} $ and $ \beta_{t2} $ switch most sharply, indicating PTVMA's strong adaptivity to smooth structural changes.

\smallskip

\begin{figure}[H]
\centering
\subfloat[$ \sN = 10, \sT = 60 $.]{\includegraphics[width=0.25\columnwidth]{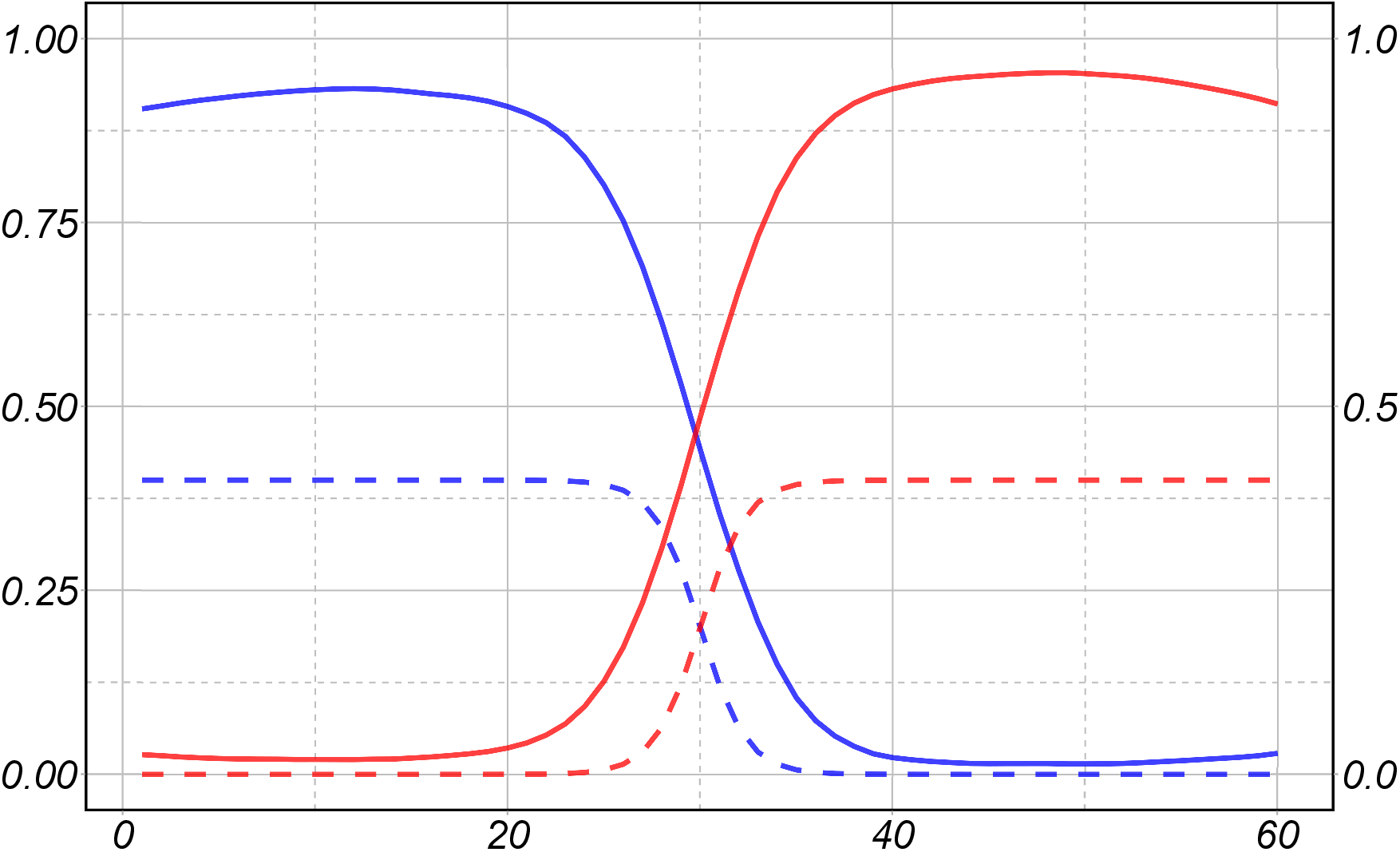}}
\hspace{1mm}
\subfloat[$ \sN = 10, \sT = 80 $.]{\includegraphics[width=0.25\columnwidth]{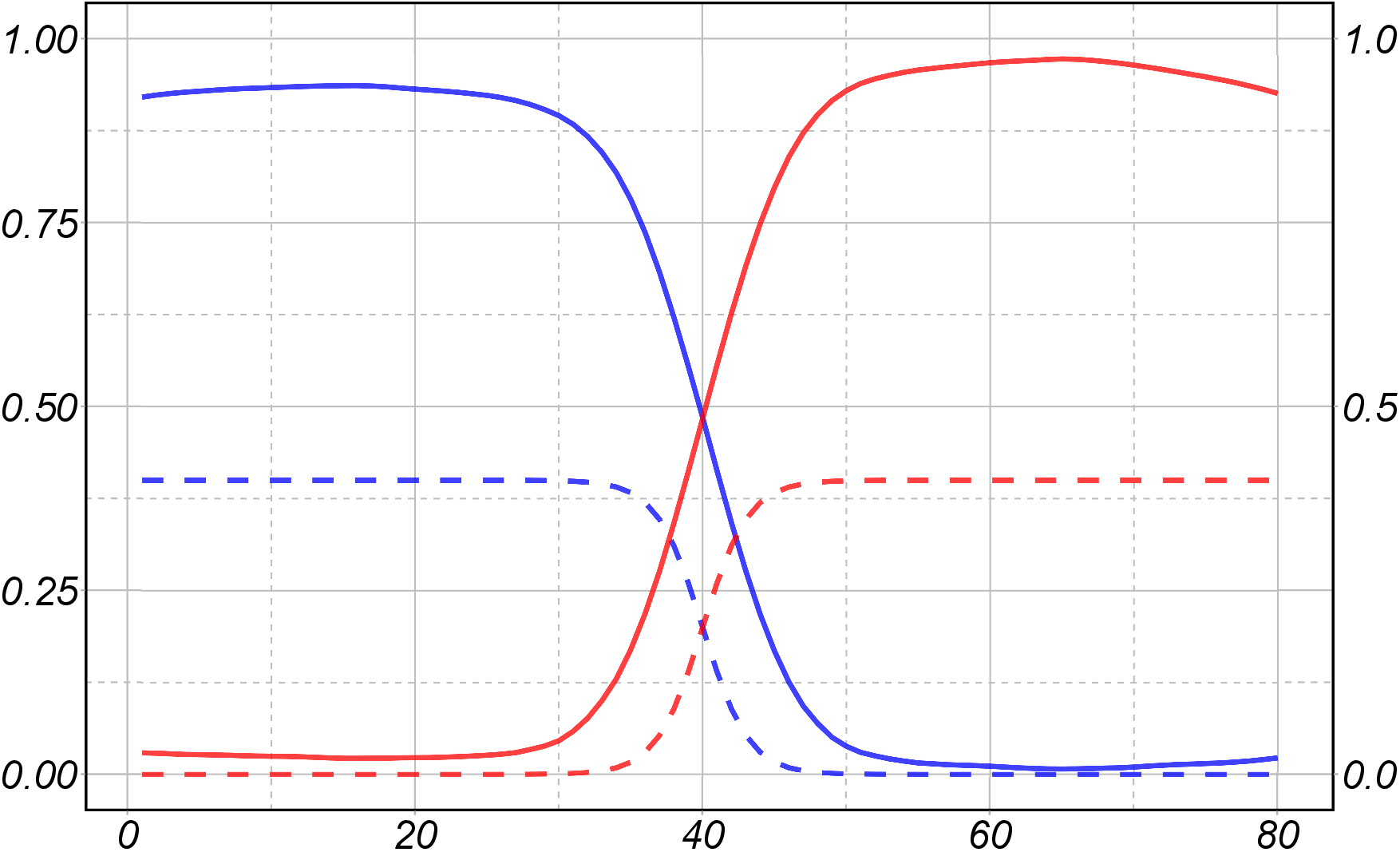}}
\hspace{1mm}
\subfloat[$ \sN = 10, \sT = 100 $.]{\includegraphics[width=0.25\columnwidth]{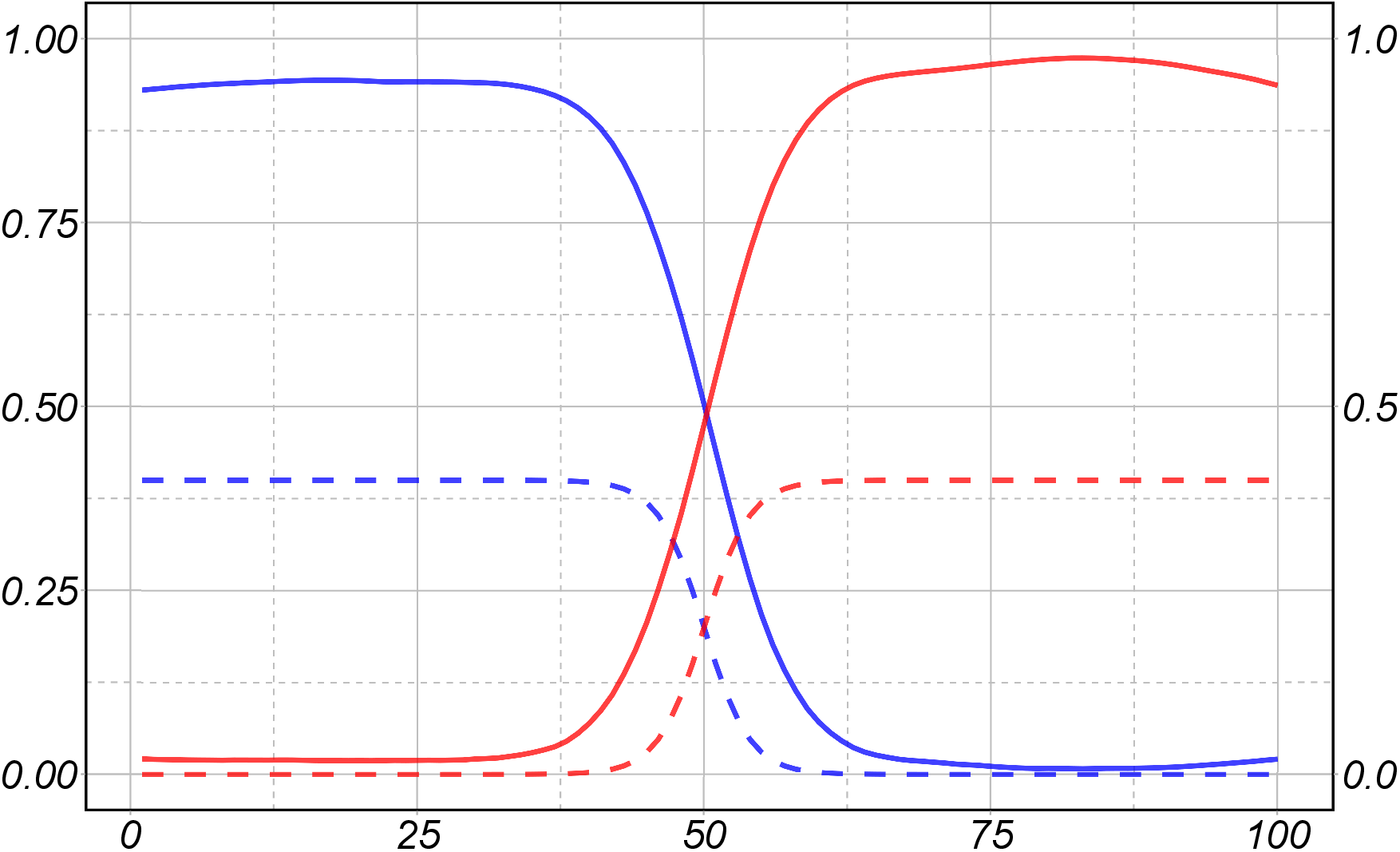}}   \\
\vspace{2mm}
\subfloat[$ \sN = 20, \sT = 60 $.]{\includegraphics[width=0.25\columnwidth]{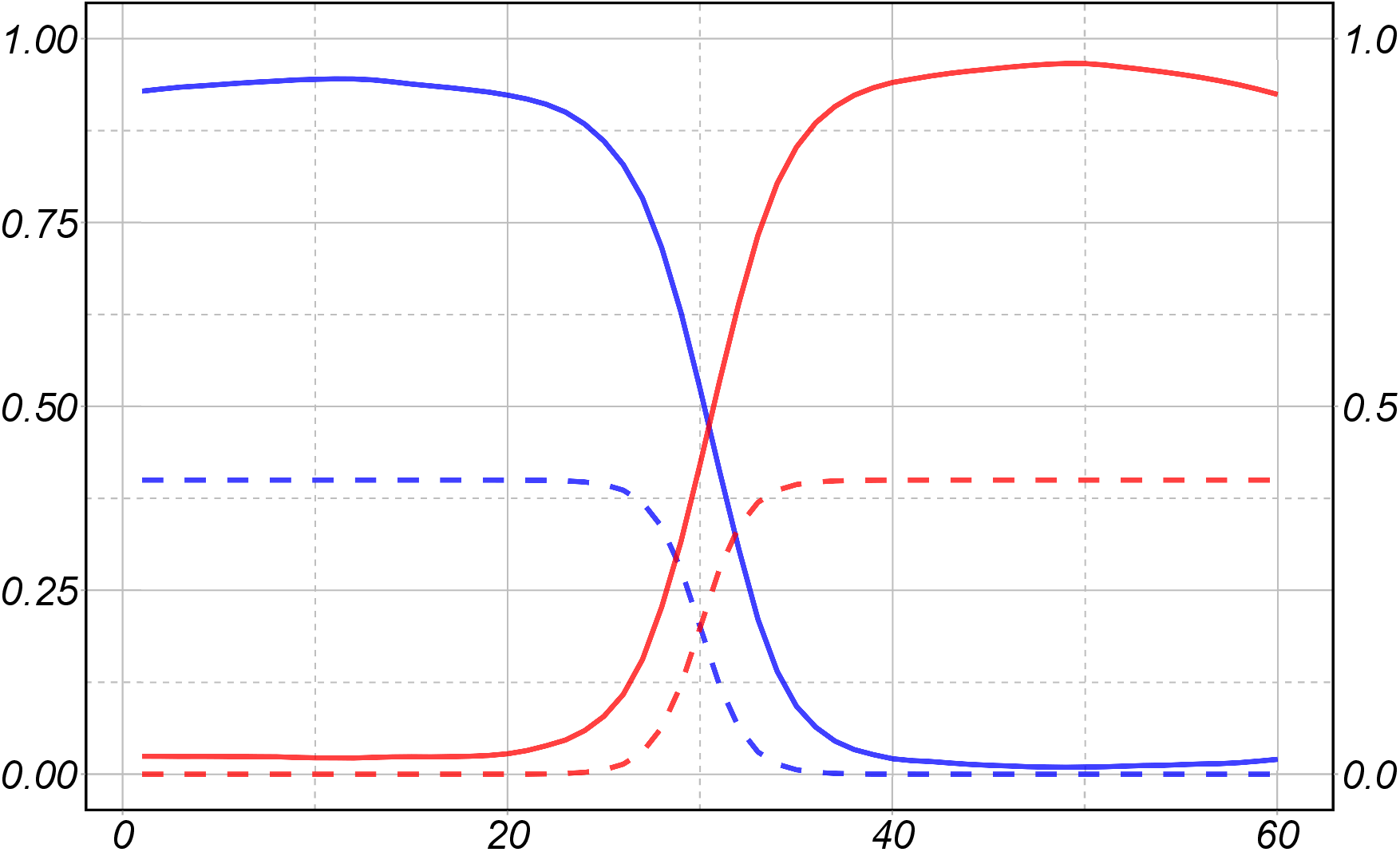}}
\hspace{1mm}
\subfloat[$ \sN = 20, \sT = 80 $.]{\includegraphics[width=0.25\columnwidth]{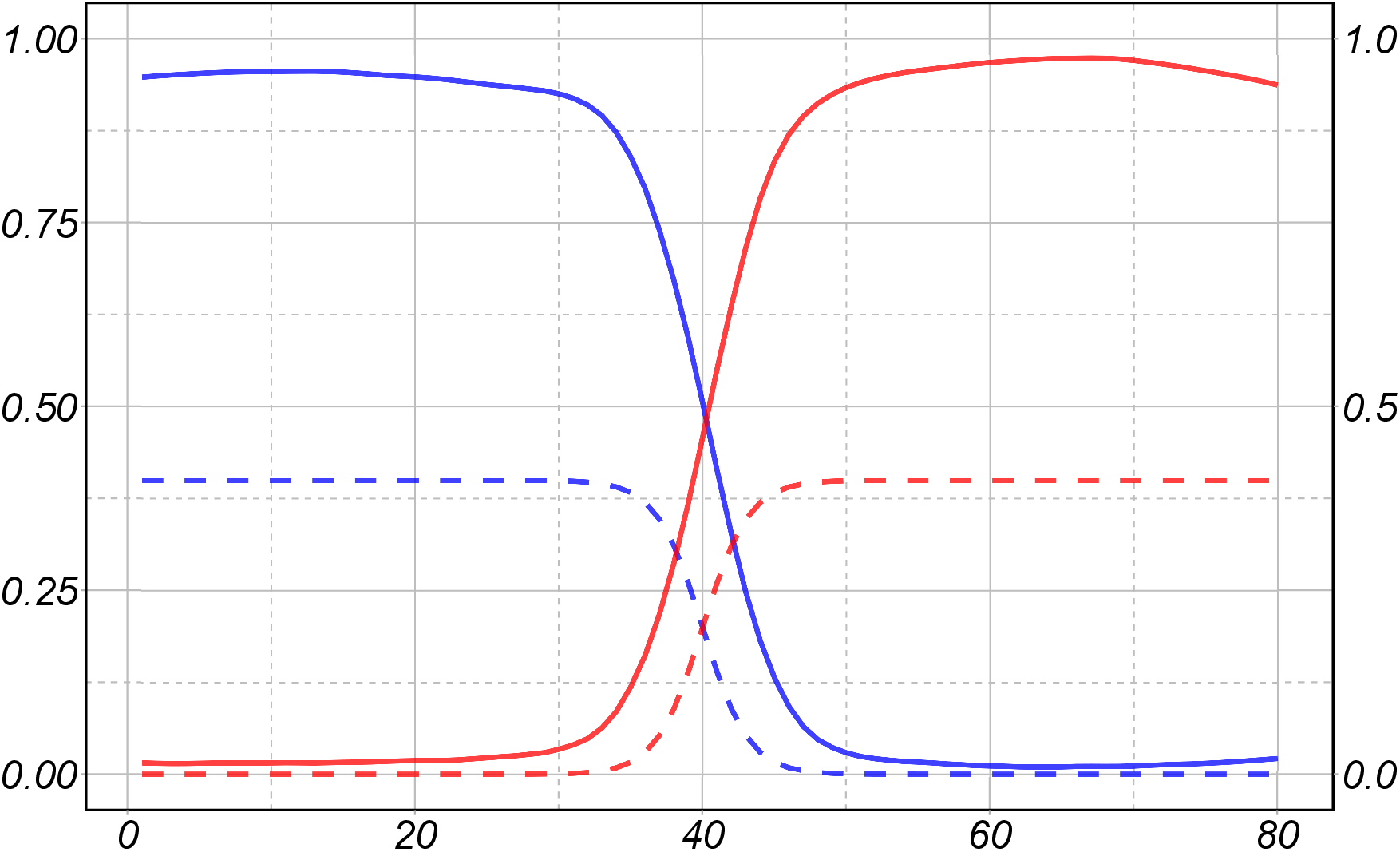}}
\hspace{1mm}
\subfloat[$ \sN = 20, \sT = 100 $.]{\includegraphics[width=0.25\columnwidth]{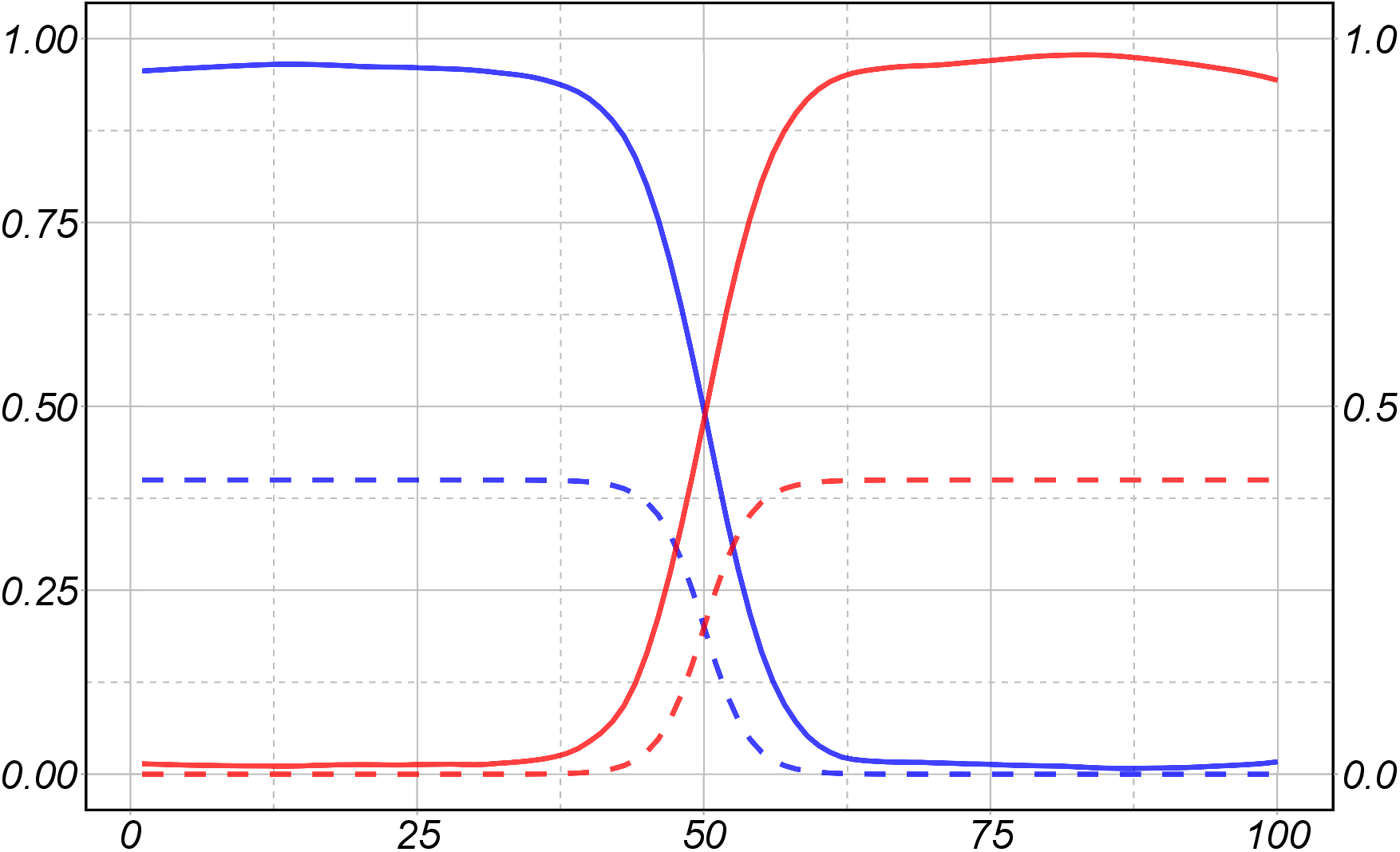}}   \\
\vspace{2mm}
\subfloat[$ \sN = 30, \sT = 60 $.]{\includegraphics[width=0.25\columnwidth]{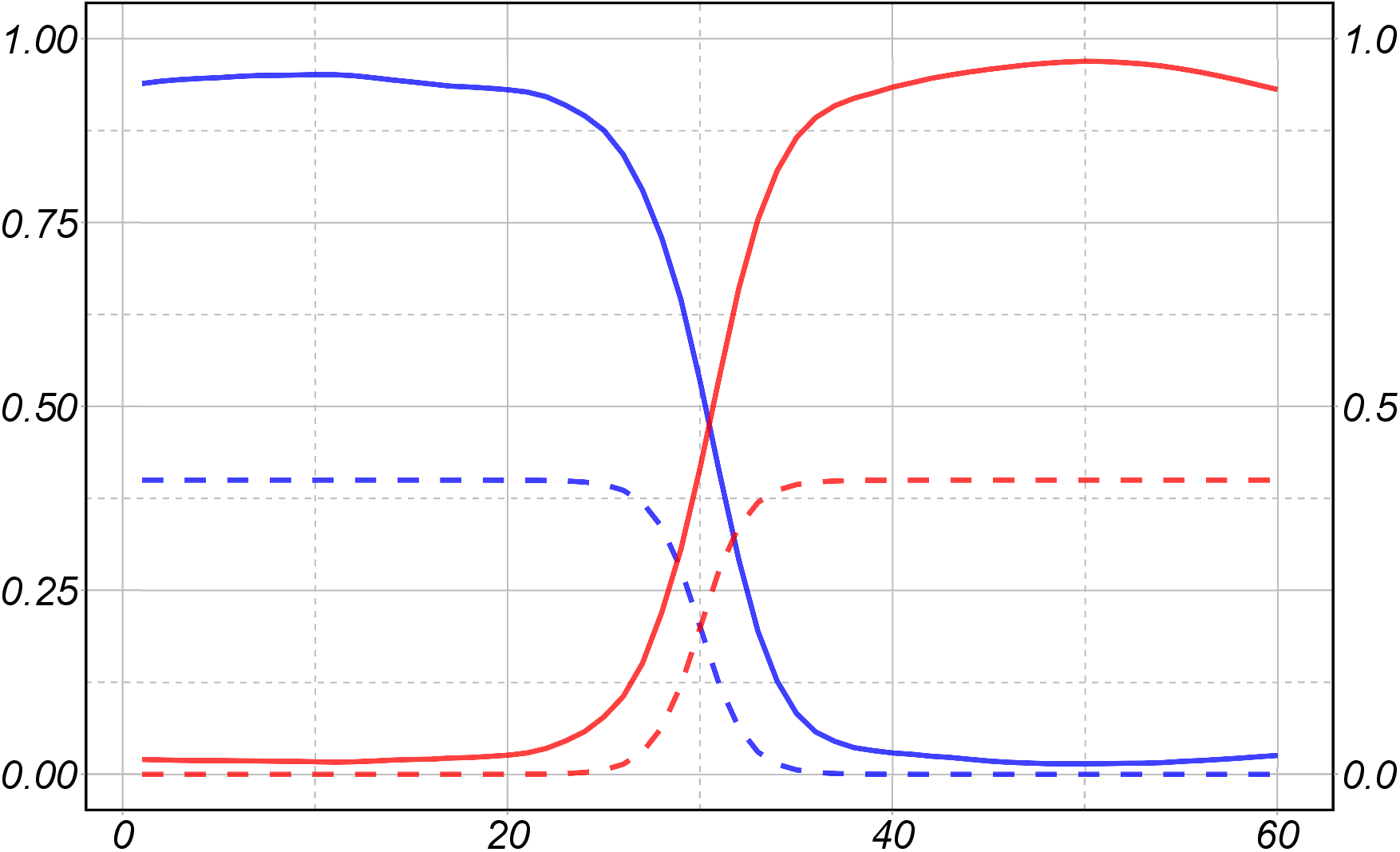}}
\hspace{1mm}
\subfloat[$ \sN = 30, \sT = 80 $.]{\includegraphics[width=0.25\columnwidth]{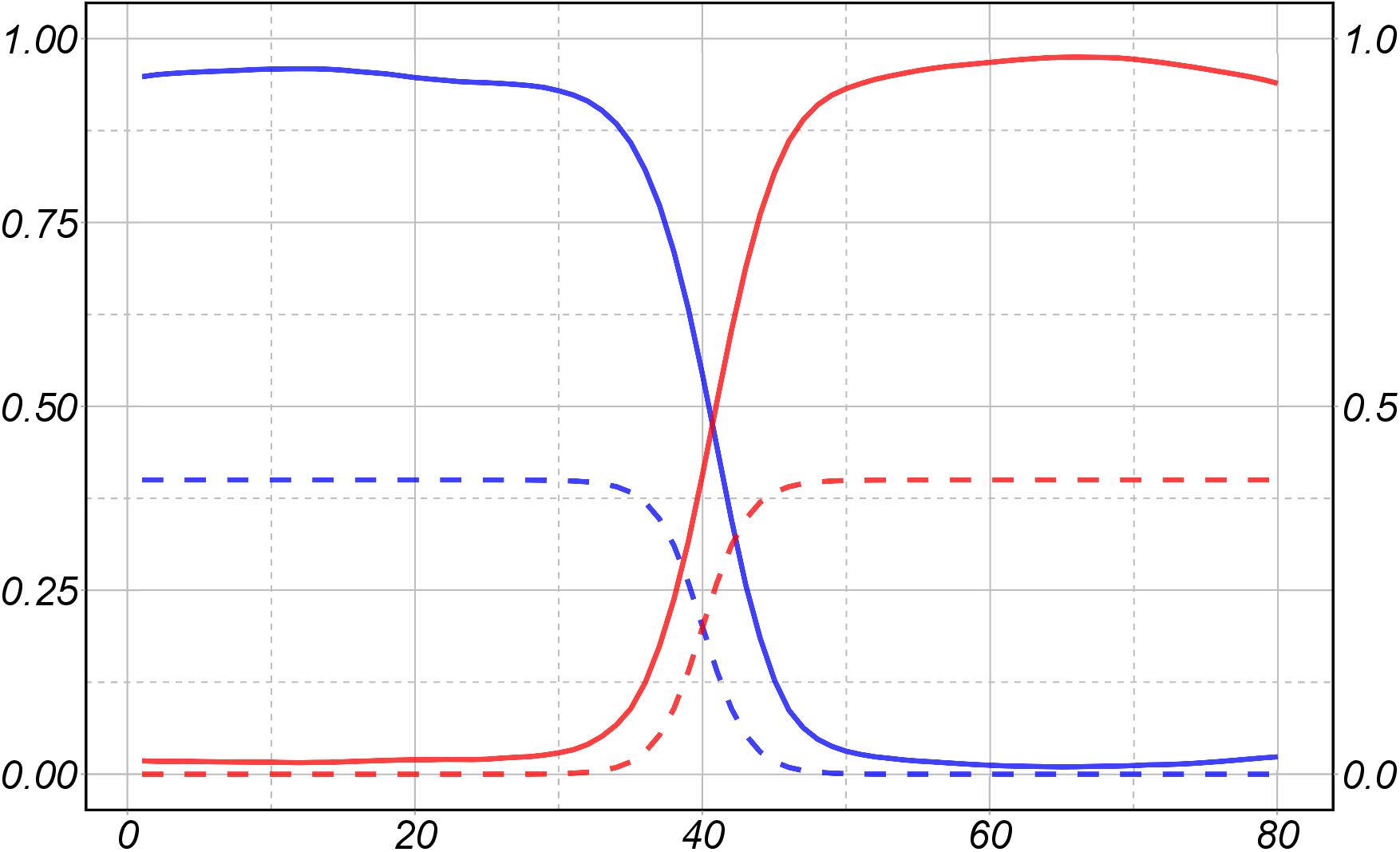}}
\hspace{1mm}
\subfloat[$ \sN = 30, \sT = 100 $.]{\includegraphics[width=0.25\columnwidth]{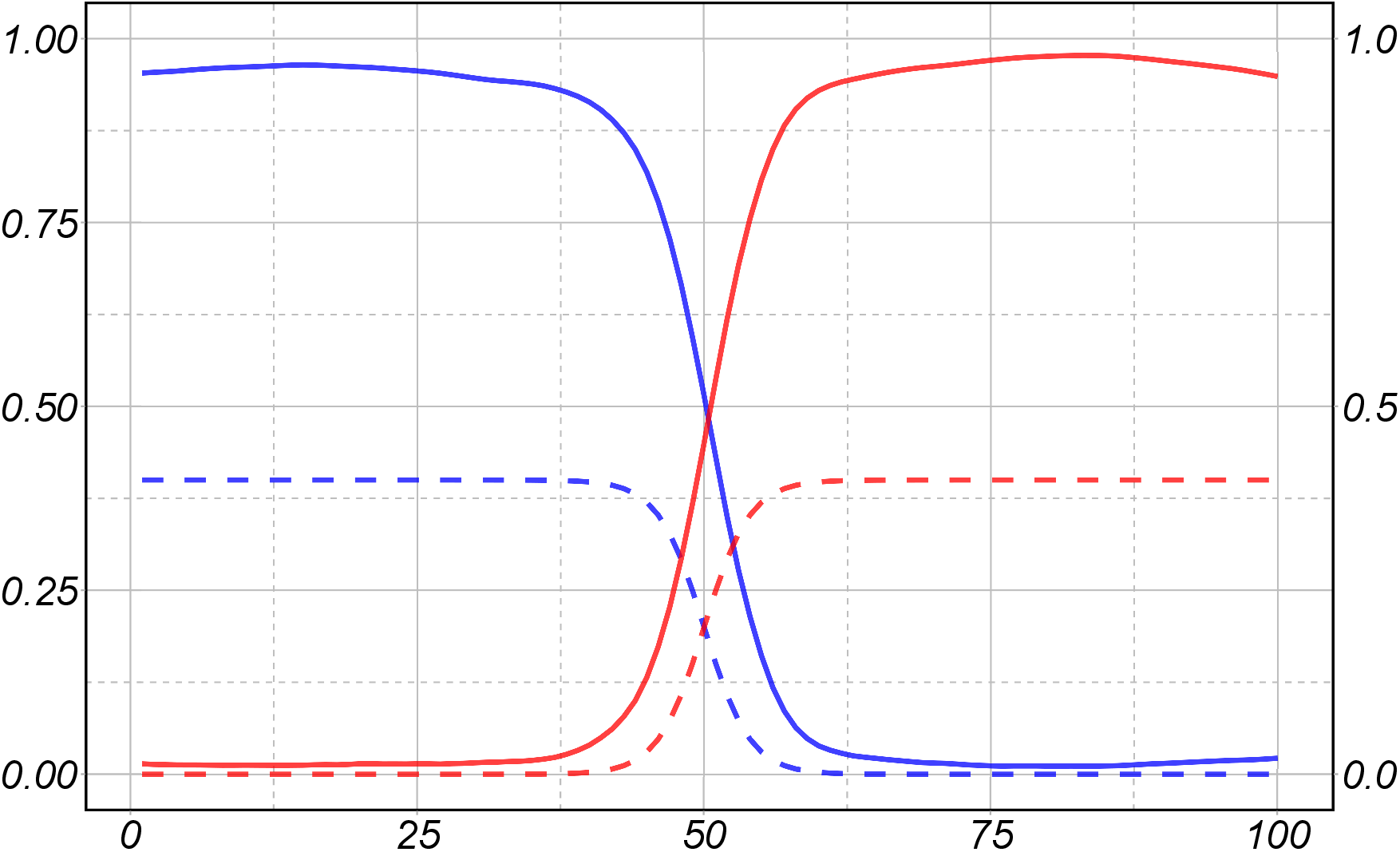}}

\caption{\textbf{Time-varying weight estimates of candidate model 2 and 4 in DGP 1.}}
\caption*{\footnotesize Note: The figures plot the average time-varying weights of candidate model 2 (solid blue line) and 4 (solid red line) across 1000 replications for different cases. The true functional patterns of $ \beta_{t1} $ and $ \beta_{t2} $ are displayed in dashed blue and red lines, respectively. The left axis corresponds to time-varying weight estimates, while the right axis corresponds to true time-varying coefficients.}
\label{Fig Simulations DGP 1_Time-varying average weights of cm1 and cm2}
\end{figure}

\smallskip

Table \ref{Tab Simulations DGP 1_Out sample performance} reports out-of-sample RMSPEs of the PTVMA relative to two benchmarks under DGP 1. Nearly all reported ratios fall below unity, confirming that the PTVMA uniformly dominates both the NAR and the tv-NAR models across all forecast horizons and sample sizes. In particular, the PTVMA outperforms the tv-NAR model even though the latter is correctly specified under DGP 1. The proposed PTVMA achieves the forecast gain by adaptively weighting candidate models as network dependence varies over time, and thus delivers higher forecast accuracy than a correctly specified benchmark model. The last column of Table \ref{Tab Simulations DGP 1_Out sample performance} reports the empirical coverage probabilities of conformal prediction intervals at 90\% nominal level. The coverage probabilities are close to 90\% in general, suggesting that the conformal intervals are well-calibrated in finite samples.

\begin{table}[H]
\centering
\begin{spacing}{1.2}

\caption{\textbf{Out-of-sample performance of simulations in DGP 1.}}
\label{Tab Simulations DGP 1_Out sample performance}

\begin{tabular}{clccccccccccc}
	\toprule
	\multicolumn{2}{c}{Prediction type} &  & \multicolumn{2}{c}{1-step-ahead} &  & \multicolumn{2}{c}{2-step-ahead} &  & \multicolumn{2}{c}{3-step-ahead} &  & \multicolumn{1}{l}{Coverage}       \\ \cline{1-11}
	\multicolumn{2}{c}{RMSPE}           &  & NAR            & tv-NAR          &  & NAR            & tv-NAR          &  & NAR            & tv-NAR          &  & \multicolumn{1}{l}{probability} \\ \hline
	& $ \sT=60 $         &  & 0.996          & 0.970           &  & 1.017          & 0.937           &  & 1.059          & 0.891           &  & 0.873                                    \\
	$ \sN=10 $         & $ \sT=80 $         &  & 0.961          & 0.973           &  & 0.960          & 0.948           &  & 0.973          & 0.910           &  & 0.894                                    \\
	& $ \sT=100 $        &  & 0.926          & 0.976           &  & 0.906          & 0.956           &  & 0.891          & 0.923           &  & 0.853                                    \\ \hline
	& $ \sT=60 $         &  & 0.937          & 0.984           &  & 0.927          & 0.970           &  & 0.911          & 0.946           &  & 0.900                                    \\
	$ \sN=20 $         & $ \sT=80 $         &  & 0.922          & 0.987           &  & 0.901          & 0.976           &  & 0.871          & 0.956           &  & 0.915                                    \\
	& $ \sT=100 $        &  & 0.905          & 0.988           &  & 0.874          & 0.979           &  & 0.826          & 0.961           &  & 0.863                                    \\ \hline
	& $ \sT=60 $         &  & 0.915          & 0.989           &  & 0.895          & 0.980           &  & 0.862          & 0.963           &  & 0.900                                    \\
	$ \sN=30 $         & $ \sT=80 $         &  & 0.903          & 0.991           &  & 0.876          & 0.983           &  & 0.831          & 0.968           &  & 0.918                                    \\
	& $ \sT=100 $        &  & 0.893          & 0.992           &  & 0.857          & 0.985           &  & 0.800          & 0.972           &  & 0.931                                    \\
	\bottomrule
\end{tabular}

\vspace*{5mm}
\caption*{\footnotesize Note: This table reports the out-of-sample RMSPEs of the PTVMA relative to two NAR and tv-NAR model for one-, two-, and three-step-ahead predictions. The last column reports the empirical coverage probabilities of conformal prediction intervals at 90\% nominal level. Under the iterated scheme, multi-step forecasts are obtained by estimating a one-step-ahead model and recursively replacing future lagged values with their forecasts.}

\end{spacing}
\end{table}

\subsection{All the candidate models are misspecified.}

In the DGP 2 below, we consider a more challenging setting with all the single-network specifications misspecified over any time period. Define
\begin{align}
y_{i,t} = \beta_{t1} \sum_{j \neq i} \widetilde{\omega}_{ij}^{(1)} y_{j,t-1} + \beta_{t2} \sum_{j \neq i} \widetilde{\omega}_{ij}^{(2)} y_{j,t-1} + \alpha_{t1} y_{i,t-1} + \alpha_{t2} y_{i,t-2} + \sum_{q=1}^{3} \gamma_{tq} x_{i,q} + \epsilon_{i,t},
\label{Eq Simulation DGP 2}
\end{align}
where the true time-varying coefficients are set as
\be
&& \alpha_{t1} = 0.5 \sin(\pi t/\sT) - 0.45,\ \ \alpha_{t2} = 0.5 \cos(\pi t/\sT) - 0.45,\notag\\
&&\beta_{t1} = -0.25 (t/\sT)^2 + 0.3 ,\ \ \beta_{t2} = 0.25 (t/\sT)^2 + 0.05,\notag
\ee
indicating that the impact of $ \mathbf{W}^{(1)} $ diminishes and that of $ \mathbf{W}^{(2)} $ grows over the time. The other model settings are the same as those in DGP 1. The predictors of candidate models for DGP 2 are listed in Table \ref{Tab Candidate models for DGP 1 and 2}.

\smallskip

The in-sample performance of the PTVMA under DGP 2 exhibits a similar pattern to that in DGP 1, demonstrating the PTVMA's advantages in the misspecification scenario. As reported in Table \ref{Tab Simulations DGP 2_In sample performance}, the PTVMA substantially outperforms the NAR model (RMSPE ranging from $ 0.550 $ to $ 0.570 $) while delivers a near-identical fit to the correctly specified tv-NAR model (RMSPE tightly around $ 1.001 $). This result confirms that even when all the candidate models are misspecified, the proposed PTVMA can adaptively combine them to approximate the response as effectively as a correctly-specified model.

\smallskip

\begin{table}[H]
\centering
\begin{spacing}{1.2}

\caption{\textbf{In-sample performance of simulations in DGP 2.}}
\label{Tab Simulations DGP 2_In sample performance}

\begin{tabular}{cccccccccccc}
	\toprule
	&          &  & \multicolumn{2}{c}{RMSPE} &  & \multicolumn{6}{c}{Average weights of candidate models} \\ \hline
	&          &  & NAR         & tv-NAR      &  & cm 1    & cm 2    & cm 3    & cm 4    & cm 5   & cm 6   \\ \hline
	& $ \sT=30 $ &  & 0.565       & 1.001       &  & 0.08    & 0.10    & 0.39    & 0.03    & 0.08   & 0.32   \\
	$ \sN=30 $ & $ \sT=40 $ &  & 0.556       & 1.001       &  & 0.08    & 0.11    & 0.41    & 0.03    & 0.07   & 0.30   \\
	& $ \sT=50 $ &  & 0.550       & 1.001       &  & 0.09    & 0.10    & 0.41    & 0.02    & 0.06   & 0.32   \\ \hline
	& $ \sT=30 $ &  & 0.567       & 1.000       &  & 0.08    & 0.11    & 0.40    & 0.03    & 0.08   & 0.30   \\
	$ \sN=40 $ & $ \sT=40 $ &  & 0.557       & 1.001       &  & 0.08    & 0.10    & 0.40    & 0.03    & 0.07   & 0.32   \\
	& $ \sT=50 $ &  & 0.550       & 1.001       &  & 0.08    & 0.10    & 0.42    & 0.02    & 0.06   & 0.31   \\ \hline
	& $ \sT=30 $ &  & 0.570       & 1.000       &  & 0.07    & 0.10    & 0.40    & 0.03    & 0.08   & 0.32   \\
	$ \sN=50 $ & $ \sT=40 $ &  & 0.559       & 1.001       &  & 0.08    & 0.10    & 0.41    & 0.02    & 0.06   & 0.32   \\
	& $ \sT=50 $ &  & 0.551       & 1.001       &  & 0.09    & 0.09    & 0.43    & 0.02    & 0.06   & 0.32   \\
	\bottomrule
\end{tabular}


\vspace{5mm}
\caption*{\footnotesize Note: This table reports the in-sample RMSPEs of the PTVMA relative to NAR and tv-NAR model and averaged weight estimates of six candidate models.}

\end{spacing}
\end{table}

\smallskip

Table \ref{Tab Simulations DGP 2_In sample performance} also reports the averaged weight estimates of six candidate models over $1000$ replications to further reveal PTVMA's weight-assignment capacity. Due to incorporating both autoregressive lags, candidate models 3 and 6 receive the majority of weight with approximately $ 0.41 $ and $ 0.31 $, respectively. Note that the other candidate models receiving negligible weights either omit lags or contain irrelevant predictors. Consequently, these specifications are less informative than candidate models 3 and 6 under the DGP 2. The results confirm that the proposed PTVMA successfully allocates significant weights to the candidate models with most relevant predictors whereas effectively shrinks the other irrelevant specifications.

\smallskip

Table \ref{Tab Simulations DGP 2_Out sample performance} presents out-of-sample RMSPE for both the one- and multiple-step-ahead forecasts. Across all horizons and sample sizes, the PTVMA consistently outperforms two benchmarks and achieves lower forecast errors than the correctly specified model (tv-NAR), again demonstrating that the proposed model averaging procedure delivers forecast gains by adaptively combining information from candidate models even when all of them are misspecified.

\smallskip

\begin{table}[H]
\centering
\begin{spacing}{1.2}
\caption{\textbf{Out-of-sample performance of simulations in DGP 2.}}
\label{Tab Simulations DGP 2_Out sample performance}
\begin{tabular}{ccccccccccc}
	\toprule
	\multicolumn{2}{c}{Prediction type} &  & \multicolumn{2}{c}{1-step-ahead} &  & \multicolumn{2}{c}{2-step-ahead} &  & \multicolumn{2}{c}{3-step-ahead} \\ \hline
	\multicolumn{2}{c}{RMSPE}           &  & NAR            & tv-NAR          &  & NAR            & tv-NAR          &  & NAR            & tv-NAR          \\ \hline
	& $ \sT=30 $         &  & 0.709          & 0.985           &  & 0.689          & 0.984           &  & 0.770          & 0.969           \\
	$ \sN=30 $         & $ \sT=40 $         &  & 0.673          & 0.987           &  & 0.645          & 0.985           &  & 0.727          & 0.975           \\
	& $ \sT=50 $         &  & 0.652          & 0.988           &  & 0.621          & 0.986           &  & 0.707          & 0.976           \\ \hline
	& $ \sT=30 $         &  & 0.689          & 0.989           &  & 0.670          & 0.990           &  & 0.755          & 0.983           \\
	$ \sN=40 $         & $ \sT=40 $         &  & 0.658          & 0.990           &  & 0.630          & 0.989           &  & 0.716          & 0.983           \\
	& $ \sT=50 $         &  & 0.641          & 0.992           &  & 0.608          & 0.990           &  & 0.696          & 0.983           \\ \hline
	& $ \sT=30 $         &  & 0.678          & 0.992           &  & 0.658          & 0.992           &  & 0.746          & 0.986           \\
	$ \sN=50 $         & $ \sT=40 $         &  & 0.651          & 0.994           &  & 0.624          & 0.993           &  & 0.711          & 0.990           \\
	& $ \sT=50 $         &  & 0.634          & 0.993           &  & 0.603          & 0.991           &  & 0.690          & 0.985           \\
	\bottomrule
\end{tabular}
\vspace{5mm}
\caption*{\footnotesize Note: This table reports the out-of-sample RMSPEs of the proposed model relative to NAR and tv-NAR model for one-, two-, and three-step-ahead predictions. Under the iterated scheme, multi-step forecasts are obtained by estimating a one-step-ahead model and recursively replacing future lagged values with their forecasts.}

\end{spacing}
\end{table}


\section{An empirical application}\label{sec6}
\renewcommand{\theequation}{6.\arabic{equation}} \setcounter{equation}{0}

Accurate forecasting of Consumer Price Index (CPI) inflation plays a critical role in the conduct of monetary policy and preservation of financial stability. Inflation expectations serve as a key anchor for policy credibility and exert a substantial influence on asset pricing. Nevertheless, forecasting inflation remains a persistent challenge, even when employing comprehensive datasets and state-of-the-art econometric methods \citep{SW07}. These difficulties are compounded in open economies, where domestic inflation is increasingly shaped by global shocks and the cross-country transmission of price pressures. A growing body of literature demonstrates that inflation dynamics are increasingly dominated by global determinants, even as domestic conditions remain relevant \citep{CM10}. Key international transmission channels include global production networks, international financial markets, trade and policy shocks \citep{CNST21, MR20, ARW19}. These channels subsequently form multi-layer networks that evolve over time, creating substantial model uncertainty and structural instability. Traditional univariate or panel models often fail to capture complex cross-country spillovers and time-varying nature of inflation transmission, leading to potentially misspecified models and degraded forecast performance. This section aims to address this issue by combining information from multiple international networks and applying the proposed PTVMA method to forecast CPI inflation,

\subsection{Data}

The CPI inflation data is obtained from the International Monetary Fund (IMF)\footnote{The data is downloaded from the IMF Data, available at: \url{https://data.imf.org/en}.}. We consider a dataset on monthly CPI inflation in 36 economies ($ \sN = 36 $) from Januaray 2007 to December 2024 with 216 observations ($ \sT = 216 $). This economy set covers the largest and most systemically relevant economies with established equity markets, ensuring representation of advanced economies, major emerging markets, and key commodity producers that jointly account for the overwhelming majority of global output.\footnote{Selected economies include Belgium, Brazil, Canada, Switzerland, Chile, China, Colombia, the Czech Republic, Germany, Denmark, Spain, Finland, France, the United Kingdom, Greece, Hong Kong, Hungary, Indonesia, India, Ireland, Israel, Italy, Japan, South Korea, Mexico, Malaysia, the Netherlands, Norway, Poland, Portugal, Singapore, Sweden, Turkey, the United States, Vietnam, and South Africa. We exclude nine candidate economies due to data limitations during the sample period: Russia and Saudi Arabia are omitted due to missing daily MSCI indices, while the United Arab Emirates, Argentina, Australia, New Zealand, Peru, the Philippines, and Thailand are excluded due to the unavailability of monthly inflation data.} We include two variables as controls in the model: a time-varying trend term, capturing exogenous shocks such as oil price fluctuations; and the monthly period-average nominal exchange rate for economy $ i $ (domestic currency per U.S. dollar), sourced from the IMF Data.

\smallskip

We consider four types of networks to characterize the international transmission channels of CPI inflation, each reflecting a distinct economic mechanism. The economic rationale for each network and the corresponding data sources are discussed below, with detailed network construction procedures presented in Appendix \ref{app:B}.

\smallskip

\begin{itemize}

\item \textbf{Global production networks.} Global production networks transmit cost shocks across borders through input-output linkages, generating price co-movement that extends beyond direct bilateral trade \citep{ALS19, CNST21}. Since the CPI measures prices of final goods and services consumed by households, we construct the production network layer by using the origin of value added embodied in final demand.\footnote{The data is downloaded from the Trade in Value-Added (TiVA) database, available at: \url{https://www.oecd.org/en/topics/sub-issues/trade-in-value-added.html}.}
Within this framework, an inflation increase in economy $ i $ raises the cost of its value added, which subsequently transmits through global value chains and becomes embodied in final goods consumed in economy $ j $, thereby exerting upward pressure on economy $ j $'s CPI.

\item \textbf{Global equity networks.} The integration of national equity markets affects cross-country inflation dynamics through information diffusion and the formation of inflation expectations \citep{MR20, CM10, CCL09}. As institutional ownership and algorithmic trading become more prevalent globally, cross-market equity linkages generate forward-looking signals about macroeconomic fundamentals. Equity-market connectedness thus serves as a leading transmission channel through which global financial conditions shape inflation dynamics across countries, often ahead of real-sector and policy channels. We construct the global equity network using daily MSCI country-level equity indices (standard, price-return, U.S. dollar–denominated) and the VAR decomposition framework of \cite{DY14}.\footnote{Data are obtained from the MSCI database. \url{https://www.msci.com/indexes\#featured-indexes}.} These MSCI indices provide comparable, internationally tradable measures of equity price movements and capture high-frequency financial channels relevant for inflation transmission.

\item \textbf{Trade and policy networks.} Trade policy acts as a distinct channel of international inflation transmission by affecting import prices and the cost of intermediate inputs. Evidence from the 2018–2019 U.S.–China trade conflicts indicates that tariff costs pass through to consumer prices and propagate along input–output linkages, generating inflationary effects that extend beyond directly targeted goods \citep{FGKK20, ARW19}. To disentangle market-based trade integration from policy-driven exposure, we construct two types of network layers capturing trade linkages and political connectedness, respectively. The bilateral trade linkages in the trade network layer are measured by monthly FOB (free on board) export values (in U.S. dollars) sourced from the IMF Direction of Trade Statistics (DOTS).\footnote{Data are obtained from the IMF DOTS database (\url{https://data.imf.org/en/datasets/IMF.STA:IMTS}).} We also quantify bilateral political connectedness using ideal-point estimates derived from United Nations General Assembly (UNGA) roll-call votes \citep{BSV17}, following standard practice in the political economy literature.

\end{itemize}

\subsection{Empirical analysis}

Consider the $m$-th candidate model specification:
\begin{align}
y_{i,t}  = \beta_{t1}^{(m)} \sum_{j \neq i} \widetilde{\omega}_{ij}^{(m)} y_{j,t-1} + \alpha_{t1}^{(m)} y_{i,t-1} + \gamma_{t1}^{(m)} + \gamma_{t2}^{(m)} {\sf ER}_{i,t} + \epsilon_{i,t}^{(m)},
\label{Eq Empirical candidate model specification}
\end{align}
where $y_{i,t} $ denotes the CPI inflation of economy $ i $ in month $ t $, $ {\sf ER}_{i,t} $ is the period-average nominal exchange rate of economy $ i $ in month $ t $, $ \gamma_{t1}^{(m)} $ is the time-varying trend term, and $ \widetilde{\omega}_{ij}^{(m)} $ is the $ (i,j) $-th element of the matrix constructed from the network layer $ \mathbf{W}^{(m)} $ by setting its diagonal elements to zero and applying row-normalization. We consider $ 4 $ and $ 10 $ network layers for in- and out-of-sample analysis, respectively. Details on network constructions are presented in Appendix \ref{app:B}.

\smallskip

Figure \ref{Fig Empirical CPI inflation_Time-varying candidate weights} plots the estimated time-varying weights of four candidate models over 2007--2024. The weights of candidate models exhibit pronounced time variation. The production network layer is dominant in the early part of the time period, especially around 2009--2012, and then rapidly loses most weight in the later period. The policy network layer becomes prominent in 2013--2019, while the global equity network layer receives substantial weight in the late period (after 2019). In contrast, the trade network layer is assigned only intermittent weight and never attains persistent dominance. Figure \ref{Fig Empirical CPI inflation_Time-varying candidate weights} suggests that the transmission mechanism of CPI inflation shifts across different types of network dependence and varies along the time. This finding is consistent with multiple transmission channels for CPI inflation, with no single network maintaining uniformly dominant weights. A single-layer specification would thus be too restrictive when another layer better matches the network dependence structure. It also shows that the candidate weights concentrate on one or two layers at a time and shift as the underlying structure evolves, indicating that the PTVMA adaptively captures locally optimal specifications which may vary over the temporal dimension.

\smallskip

\begin{figure}[H]
	\centering
	\subfloat[Production network.]{\includegraphics[width=0.48\columnwidth]{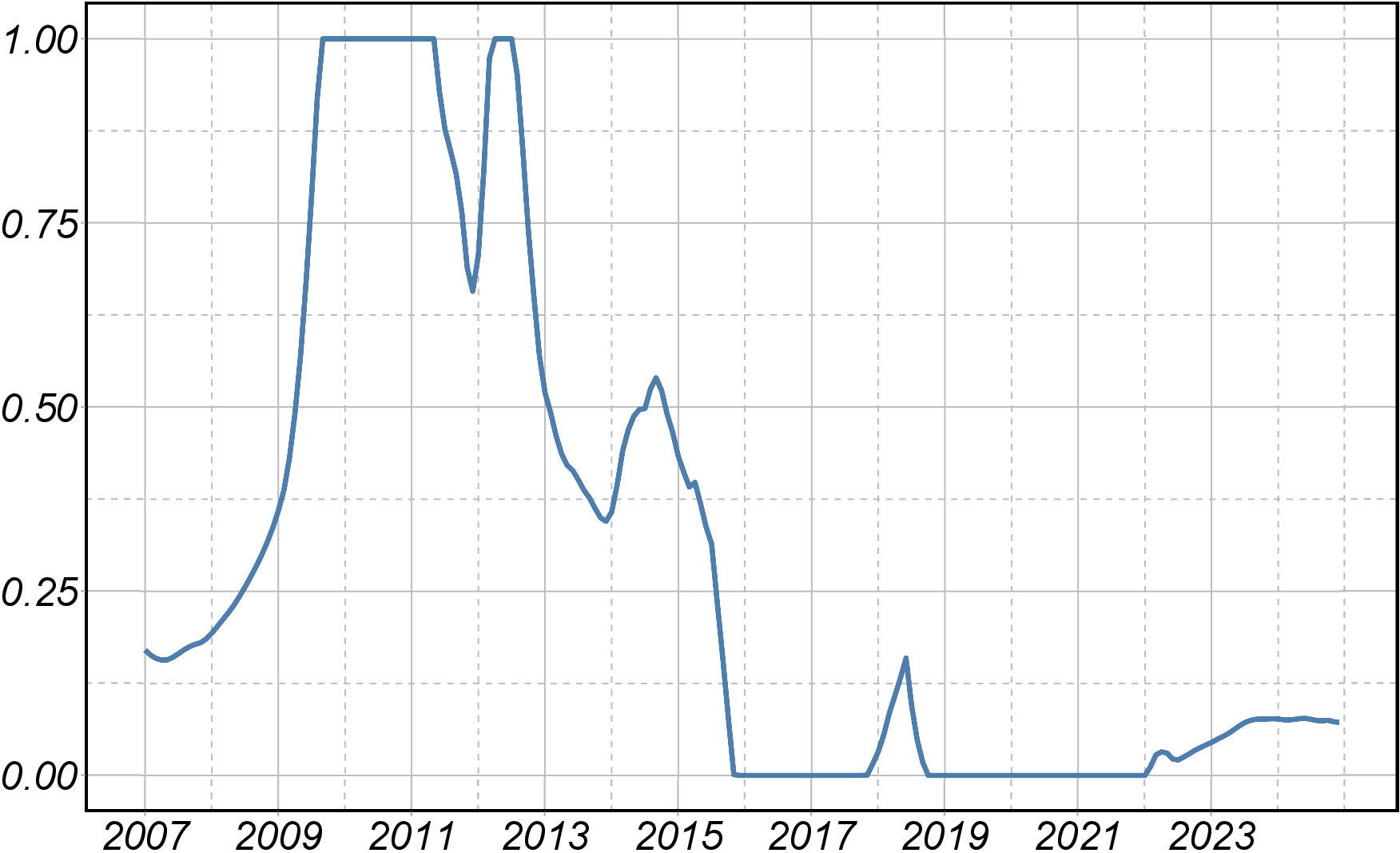}}
	\hspace{1mm}
	\subfloat[Global equity network.]{\includegraphics[width=0.48\columnwidth]{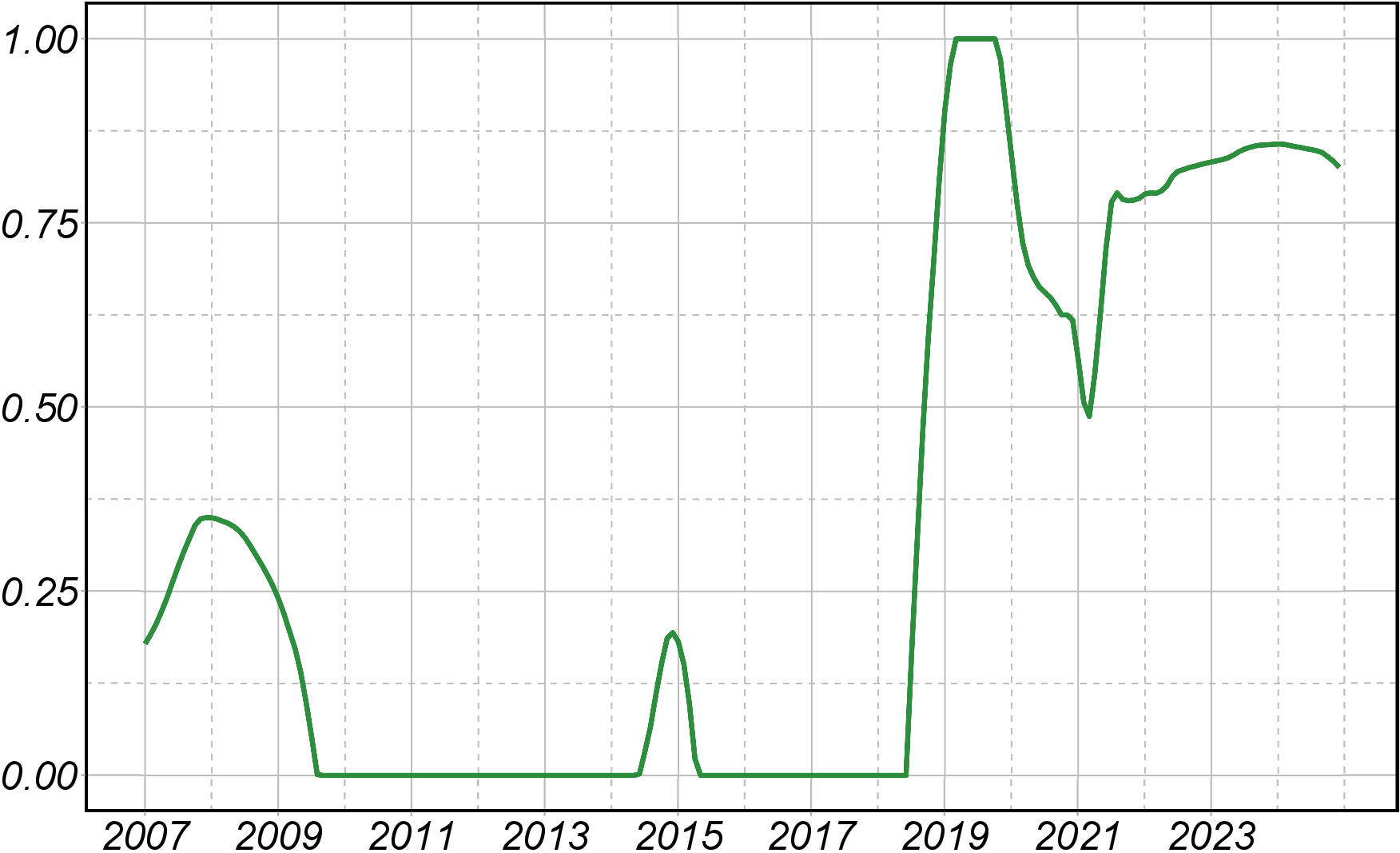}}   \\
	\vspace{2mm}
	\subfloat[Trade network.]{\includegraphics[width=0.48\columnwidth]{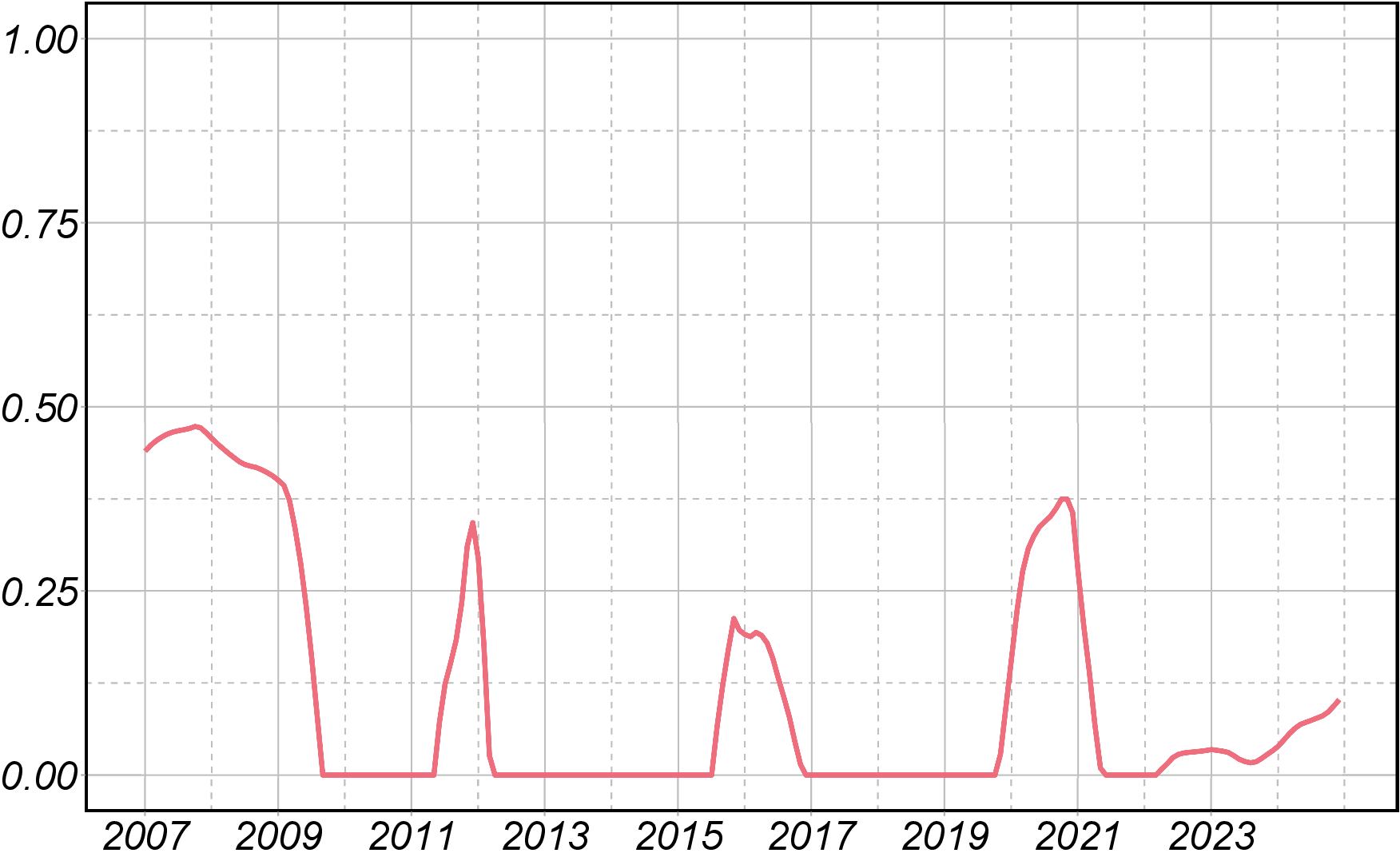}}
	\hspace{1mm}
	\subfloat[Policy network.]{\includegraphics[width=0.48\columnwidth]{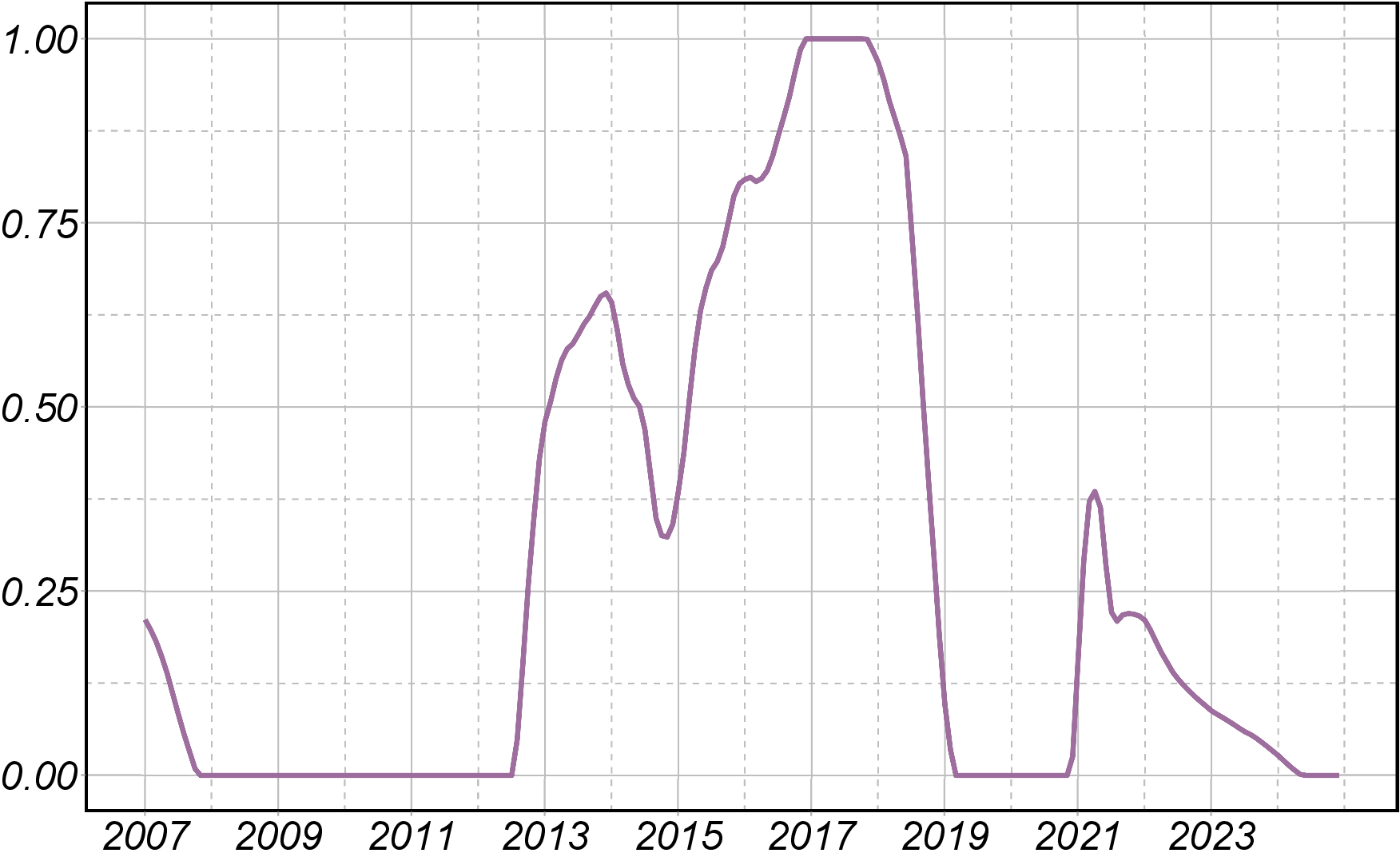}}

	\caption{\textbf{Time-varying weight estimates of 4 candidate models.}}
	\caption*{\footnotesize Note: The figure reports the estimated time-varying weights of 4 candidate models over 2007--2024.}
	\label{Fig Empirical CPI inflation_Time-varying candidate weights}
\end{figure}

\smallskip

Overall, the in-sample results provide evidence that global inflation transmission is governed by a multi-layer, regime-dependent structure in which production, financial, trade, and policy linkages alternately dominate. Therefore, conventional time series models with a single network structure or time-invariant parameters fail to capture the shifting dominance of these transmission channels, underscoring the value of the proposed PTVMA.

\smallskip

Table \ref{Tab Application RMSPE out sample} reports the out-of-sample prediction performance of the proposed model for CPI inflation. Predictive accuracy is evaluated by the RMSPE, and our model is compared with seven alternative specifications: NAR, time-varying NAR (tv-NAR), VAR(1) estimated by the OLS (VAR), VARX(1) estimated by the Elastic Net (HD-VARX), adaptive LASSO (ALASSO), random forest (RF), and stochastic gradient boosting (SGB), spanning one-, two-, and three-step-ahead forecasts across VAR, penalized regression, and state-of-the-art machine learning methods. Exogenous variables are the nominal exchange rates for all economies. In addition, the ALASSO, the RF and the SGB also involves in- and out-degrees of networks from $ \mathbf{W}^{(1)} $ to $ \mathbf{W}^{(8)} $ and degrees of networks $ \mathbf{W}^{(9)} $ and $ \mathbf{W}^{(10)} $ as controls, see Appendix \ref{app:B} for details.

\smallskip

\begin{table}[H]
\centering
\begin{spacing}{1.2}
\caption{\textbf{Out-of-sample performance of CPI inflation.}}
\label{Tab Application RMSPE out sample}
\begin{tabular}{cccccccc}
	\toprule
	RMSPE        & NAR   & tv-NAR & VAR   & HD-VARX & ALASSO & RF    & SGB   \\ \hline
	1-step-ahead & 0.978 & 0.973  & 0.650 & 0.281   & 0.997  & 0.717 & 0.615 \\
	2-step-ahead & 1.049 & 0.956  & 0.678 & 0.476   & 1.084  & 0.913 & 0.597 \\
	3-step-ahead & 1.137 & 0.939  & 0.722 & 0.647   & 1.197  & 1.075 & 0.681 \\
	\bottomrule
\end{tabular}
\vspace*{5mm}
\caption*{\footnotesize Note: This table reports the out-of-sample RMSPE of the PTVMA relative to seven competing models for one-, two-, and three-step-ahead predictions, including NAR, tv-NAR, VAR(1) model estimated by the OLS (VAR), VARX(1) model with exchange rates estimated by the Elastic Net (HD-VARX), adaptive LASSO (ALASSO), random forest (RF), and stochastic gradient boost (SGB).}

\end{spacing}
\end{table}

Table \ref{Tab Application RMSPE out sample} shows a pronounced advantage of the PTVMA in short-run CPI inflation forecasting. At the one-step horizon, our method uniformly outperforms all competitors and attains the lowest RMSPE (with all the ratios strictly smaller than one), indicating a marked improvement in short-run predictive accuracy. At the two-step horizon, the proposed PTVMA performs slightly worse than the NAR and ALASSO but continues to outperform the remaining methods. At the three-step horizon, however, the PTVMA only performs better than the tv-NAR and two VAR-based methods, suggesting that the short-horizon advantage attenuates as the forecast horizon expands. This is because the time-varying coefficient specification amplifies the negative impact of cumulative errors on multi-step-ahead forecasting. It is also worth noting that the proposed PTVMA still dominates the tv-NAR across three horizons, highlighting the advantage of forecast combination via time-varying model averaging.


\section{Conclusion}\label{sec7}

This paper introduces a dynamic model averaging criterion within a general time-varying network VAR framework which incorporates multiple adjacency matrices to capture network spillover effects. The main aim is to determine a time-varying optimal combination of multi-layer network VAR candidate models whose number may be divergent. In particular, all the candidate models with different linkage structures and varying lagged orders of dependent variables and node-specific predictors are allowed to be misspecified. Both the in-sample and out-of-sample asymptotic properties including the asymptotic optimality and estimation convergence rates are derived under some technical conditions. In particular, the proposed PTVMA achieves the selection consistency (of correctly-specified models) as the conventional model selection criterion. Furthermore, an extended conformal prediction method is proposed to construct prediction intervals for network time series. The Monte Carlo simulation demonstrates that the proposed PTVMA delivers superior out-of-sample predictive performance relative to the benchmark, even when all the candidate models are misspecified. The empirical case study further reveals the time-varying patterns of optimal weight estimation and shows that our model and method consistently outperform other competing models in the short-run CPI inflation prediction.

\bigskip
\bigskip
\bigskip



\newpage


\appendix

\begin{center}
{\Large Supplementary Material for ``Time-Varying Model Averaging of \\[1.2ex] Multi-layer Network Vector Autoregressions''}

\bigskip

{\small \textsc{Degui Li}$^{*}$, \textsc{Yuying Sun}$^{\dag}$, \textsc{Boyao Wu}$^{\ddag}$ }

{\small \medskip }

{\small\em $^{*}$University of Macau, $^{\dag}$ Chinese Academy of Sciences, $^{\ddag}$University of International Business and Economics}

\end{center}

\bigskip

\noindent Appendix \ref{app:A} contains the detailed proofs of the main theorems and Appendix \ref{app:B} provides the network structure construction used in the empirical application.


\section{Proofs of theorems}\label{app:A}
\renewcommand\theequation{A.\arabic{equation}}



\begin{proof}[\bf Proof of Theorem \ref{thm:3.1}]

Let $L_t(\w)$ and $L_t^\ast(\w)$ be defined as in (\ref{eq3.1}) and (\ref{eq3.2}), respectively. Define
\be
\Psi_{t1}(\w)&=&\sum_{s=1}^{\sT}\left[\Y_{s}-\bmu_s\right]^{\top}\left[\bmu_s-\bmu_s^*(\w)\right]K_{s,t}+\sum_{s=1}^{\sT}\left[\Y_{s}-\bmu_s\right]^{\top}\left[\bmu_s^*(\w)-\wh\bmu_s(\w)\right]K_{s,t}\n\\
&=:&\Psi_{t1}^{\dag}(\w)+\Psi_{t1}^{\ddag}(\w),\n
\ee
and
\be
\Psi_{t2}(\w)&=&\sum_{s=1}^{\sT}\left[\bmu_s^*(\w)-\widehat\bmu_s(\w)\right]^{\top}\left[\bmu_s^*(\w)-\widehat\bmu_s(\w)\right]K_{s,t}+2\sum_{s=1}^{\sT}\left[\bmu_{s}-\bmu_s^*(\w)\right]^{\top}\left[\bmu_s^*(\w)-\widehat\bmu_s(\w)\right]K_{s,t}\n\\
&=:&\Psi_{t2}^{\dag}(\w)+2\cdot\Psi_{t2}^{\ddag}(\w).\n
\ee
Note that
\be
&&{\sf PTVMA}_t(\w)-L_t(\w)\n\\
&=&\sum_{s=1}^{\sT}\left[\Y_{s}-\widehat\bmu_s(\w)\right]^{\top}\left[\Y_s-\widehat\bmu_s(\w)\right]K_{s,t}+\lambda\sum_{m=1}^{\sM} w^m\rho_m-\sum_{s=1}^{\sT}\left[\bmu_{s}-\widehat\bmu_s(\w)\right]^{\top}\left[\bmu_s-\wh\bmu_s(\w)\right]K_{s,t}\n\\
&=&\sum_{s=1}^{\sT}\left(\Y_{s}-\bmu_s\right)^{\top}\left(\Y_s-\bmu_s\right)K_{s,t}+2\sum_{s=1}^{\sT}\left(\Y_{s}-\bmu_s\right)^{\top}\left[\bmu_s-\wh\bmu_s(\w)\right]K_{s,t}+\lambda\sum_{m=1}^{\sM} w^m\rho_m\n\\
&=&\sum_{s=1}^{\sT}\left(\Y_{s}-\bmu_s\right)^{\top}\left(\Y_s-\bmu_s\right)K_{s,t}+2\sum_{s=1}^{\sT}\left(\Y_{s}-\bmu_s\right)^{\top}\left[\bmu_s-\bmu_s^*(\w)\right]K_{s,t}+\n\\
&&2\sum_{s=1}^{\sT}\left(\Y_{s}-\bmu_s\right)^{\top}\left[\bmu_s^*(\w)-\wh\bmu_s(\w)\right]K_{s,t}+\lambda\sum_{m=1}^{\sM} w^m\rho_m\n\\
&=&\sum_{s=1}^{\sT}\left(\Y_{s}-\bmu_s\right)^{\top}\left(\Y_s-\bmu_s\right)K_{s,t}+2\Psi_{t,1}(\w)+\lambda\sum_{m=1}^{\sM} w^m\rho_m,\label{eqA.1}
\ee
where the first term is unrelated to $\w$. Consider a further decomposition for $L_t(\w)$:
\be
L_t(\w)&=&\sum_{s=1}^{\sT}\left[\bmu_{s}-\wh\bmu_s(\w)\right]^{\top}\left[\bmu_s-\wh\bmu_s(\w)\right]K_{s,t}\n\\
&=&\sum_{s=1}^{\sT}\left[\bmu_{s}-\bmu_s^*(\w)\right]^{\top}\left[\bmu_s-\bmu_s^*(\w)\right]K_{s,t}+\sum_{s=1}^{\sT}\left[\bmu_s^*(\w)-\wh\bmu_s(\w)\right]^{\top}\left[\bmu_s^*(\w)-\wh\bmu_s(\w)\right]K_{s,t}+\n\\
&&2\sum_{s=1}^{\sT}\left[\bmu_{s}-\bmu_s^*(\w)\right]^{\top}\left[\bmu_s^*(\w)-\wh\bmu_s(\w)\right]K_{s,t}\n\\
&=&L_t^*(\w)+\Psi_{t2}(\w).\label{eqA.2}
\ee
With the decomposition in (\ref{eqA.1}) and (\ref{eqA.2}), to prove Theorem \ref{thm:3.1}, it suffices to show that
\be
&&\sup_{\w\in\calW}\frac{|\Psi_{t1}(\w)|}{ L_t^*(\w)}=o_P(1),\label{eqA.3}\\
&&\sup_{\w\in\calW}\frac{|\Psi_{t2}(\w)|}{ L_t^*(\w)}=o_P(1),\label{eqA.4}\\
&&\sup_{\w\in\calW}\frac{\left|\lambda\sum_{m=1}^\sM w^m\rho_m\right |}{ L_t^*(\w)}=o_P(1).\label{eqA.5}
\ee

By Assumption \ref{ass:2}(ii), we readily have that
\[
\sup_{\w\in\calW}\left|\frac{\lambda\sum_{m=1}^\sM w^m\rho_m}{ L_t^*(\w)}\right|\leq \frac{\lambda \bar{\rho}}{\xi_t}=o_P(1),
\]
completing the proof of (\ref{eqA.5}).

We next turn to the proof of (\ref{eqA.4}). By Assumption \ref{ass:1}, we have
\be
\sup_{\w\in\calW}\left|\Psi_{t,2}^{\dag}(\w)\right|&=&\sup_{\w\in\calW}\left|\sum_{s=1}^\sT\left[\bmu_s^*(\w)-\wh\bmu_s(\w)\right]^\top\left[\bmu_s^*(\w)-\wh\bmu_s(\w)\right]K_{s,t}\right|\n\\
&\leq&\sup_{\w\in\calW}\sum_{m=1}^\sM\sum_{k=1}^\sM w^mw^k\left|\sum_{s=1}^\sT\sum_{i=1}^\sN\left(\mu_{i,s}^{(m)*}-\wh\mu_{i,s}^{(m)}\right)\left(\mu_{i,s}^{(k)*}-\widehat\mu_{i,s}^{(k)}\right)K_{s,t}\right|\n\\
&\leq&\sum_{s=1}^\sT\sum_{i=1}^\sN K_{s,t}\max_{1\leq m,k\leq \sM}\left|\left(\mu_{i,s}^{(m)*}-\widehat\mu_{i,s}^{(m)}\right)\left(\mu_{i,s}^{(k)*}-\widehat\mu_{i,s}^{(k)}\right)\right|\n\\
&\leq&\sum_{s=1}^\sT\sum_{i=1}^\sN K_{s,t}\left(\max_{1\leq m\leq \sM}\left\|\Z_{i,s-1}^{(m)}\right\|\left\|\widehat\btheta_s^{(m)}-\btheta_{s,\ast}^{(m)}\right\|\right)^2\n\\
&\leq&\sum_{s=1}^\sT\sum_{i=1}^\sN K_{s,t}\left\|\Z_{i,s-1}\right\|^2\sum_{m=1}^{\sM}\left\|\widehat\btheta_s^{(m)}-\btheta_{s,\ast}^{(m)}\right\|^2\n\\
&=&O_P\left(\bar\rho^2\sT\sN\sM h\zeta^2\right),\label{eqA.6}
\ee
where $\zeta$ is the rate defined in Assumption \ref{ass:1}(i). On the other hand, note that
\be
{\sf E}\left[\sup_{\w\in{\mathscr W}}\sum_{s=1}^\sT\left\|\bmu_s-\bmu_s^*(\w)\right\|^2 K_{s,t}\right]
&\leq&{\sf E}\left[\max_{1\leq m\leq \sM}\sum_{s=1}^\sT\left\|\bmu_s-\bmu_{s,\ast}^{(m)}\right\|^2 K_{s,t}\right]\n\\
&\leq& \sum_{m=1}^\sM \sum_{s=1}^\sT \sum_{i=1}^{\sN} K_{s,t} \mE \left| \mu_{i,s} - \mu_{i,s}^{\m\ast} \right|^2 \n\\
&\leq& \sum_{m=1}^{M} \sum_{s=1}^\sT \sum_{i=1}^{\sN} K_{s,t} \mE \| \Z_{i,s-1} \|^2 \cdot \| \btheta_s - \bPi^{\m\top} \btheta_{s,\ast}^\m \|^2 \n\\
&=& O\left( \bar\rho \sM\sN\sT h\right )
\ee
by Assumption \ref{ass:1}(\romannumeral2)(\romannumeral3), which, together with (\ref{eqA.6}) and the Cauchy-Schwarz inequality, leads to
\be
\sup_{\w\in{\mathscr W}}\left|\Psi_{t,2}^{\ddag}(\w)\right|&=&\sup_{\w\in\calW}\sum_{s=1}^{\sT}\left[\bmu_{s}-\bmu_s^*(\w)\right]^{\top}\left[\bmu_s^*(\w)-\widehat\bmu_s(\w)\right]K_{s,t}\n\\
&\leq&\sup_{\w\in{\mathscr W}}\left[\sum_{s=1}^\sT\|\bmu_s-\bmu_s^*(\w)\|^2{K_{s,t}}\right]^{1/2}\left[\sum_{s=1}^\sT\|\bmu_s^*(\w)-\widehat\bmu_s(\w)\|^2{K_{s,t}}\right]^{1/2}\n\\
&=&O_P\left((\bar\rho \sM\sN \sT h)^{1/2}\right)\times O_P\left(\bar\rho \zeta(\sT\sN\sM h)^{1/2}\right)\n\\
&=&O_P\left(\bar{\rho}^{3/2} \zeta \sM\sN\sT h\right).\label{eqA.7}
\ee
By virtue of (\ref{eqA.6}), (\ref{eqA.7}) and Assumption \ref{ass:2}(ii), we may show that
\begin{eqnarray}
\sup_{\w\in\calW}\frac{|\Psi_{t2}(\w)|}{ L_t^*(\w)}&\leq&\sup_{\w\in\calW}\frac{|\Psi_{t2}^\dag(\w)|}{ L_t^*(\w)}+\sup_{\w\in\calW}\frac{|\Psi_{t2}^\ddag(\w)|}{ L_t^*(\w)}\notag\\
&=&O_P\left(\xi_t^{-1} \bar{\rho}^2 \zeta^2 \sM\sN\sT h + \xi_t^{-1}\bar\rho^{3/2} \zeta\sM\sN \sT h\right) \notag\\
&=&O_P\left(\xi_t^{-1}  \bar{\rho}^2 \zeta \sM\sN\sT h\right)=o_P(1),\notag
\end{eqnarray}
completing the proof of (\ref{eqA.4}).

We finally turn to the proof of (\ref{eqA.3}). Note that
\be
{\sf E}\left[\sup_{\w\in\calW}\left|\Psi_{t,1}^\dag(\w)\right|^2\right]&=&{\sf E}\left[\sup_{\w\in\calW}\left|\sum_{s=1}^\sT\left(\Y_s-\bmu_s\right)^\top\left[\bmu_s-\bmu_s^*(\w)\right]K_{s,t}\right|^2\right]\n\\
&\leq&{\sf E}\left[\sup_{\w\in\calW}\left(\sum_{m=1}^\sM w^m\left|\sum_{s=1}^\sT\left(\Y_s-\bmu_s\right)^\top\left(\bmu_s-\bmu_{s,\ast}^{(m)}\right)K_{s,t}\right|\right)^2\right]\n\\
&\leq&{\sf E}\left[\left(\max_{1\leq m\leq \sM}\left|\sum_{s=1}^\sT\left(\Y_s-\bmu_s\right)^\top\left(\bmu_s-\bmu_{s,\ast}^{(m)}\right)K_{s,t}\right|\right)^2\right]\n\\
&\leq&\sum_{m=1}^\sM{\sf E}\left[\left|\sum_{s=1}^\sT\beps_s^\top\left(\bmu_s-\bmu_{s,\ast}^{(m)}\right)K_{s,t}\right|^2\right]\n\\
&=&\sum_{m=1}^\sM{\sf E}\left[\left|\sum_{s=1}^\sT\sum_{i=1}^\sN\epsilon_{i,s}\left(\mu_{i,s}-\mu_{i,s}^{(m)*}\right)K_{s,t}\right|^2\right]\n\\
&=& \sum_{m=1}^\sM\sum_{s=1}^\sT\sum_{i=1}^\sN K_{s,t}^2{\sf E}\left[ \epsilon_{i,s}^2 \left(\mu_{i,s}-\mu_{i,s}^{(m)*}\right)^2 \right] \n\\
&\leq& \sum_{m=1}^\sM\sum_{s=1}^\sT\sum_{i=1}^\sN K_{s,t}^2{\sf E} \epsilon_{i,s}^2 \cdot \mE \| \Z_{i,s-1} \|^2 \cdot \| \btheta_s - \bPi^{\m\top} \btheta_{s,\ast}^\m \|^2 \n\\
&=&O(\bar{\rho} \sM \sN \sT h),\label{eqA.8}
\ee
by Assumption \ref{ass:1}(\romannumeral3). On the other hand, since
\[
{\sf E}\left[\sum_{s=1}^\sT\left\|(\Y_s-\bmu_s)K_{s,t}^{1/2}\right\|^2\right]= {\sf E}\left[\sum_{s=1}^\sT\left\|\beps_s K_{s,t}^{1/2}\right\|^2\right]
=\sum_{s=1}^\sT K_{s,t}{\sf E}\left[{\beps_s^\top\beps_s}\right]=O(\sN\sT h),
\]
by (\ref{eqA.6}) and the Cauchy-Schwarz  inequality, we have
\be
\sup_{\w\in\calW}\left|\Psi_{t,1}^{\ddag}(\w)\right|&=&\sup_{\w\in\calW}\left|\sum_{s=1}^\sT(\Y_s-\bmu_s)^\top[\bmu_s^*(\w)-\widehat\bmu_s(\w)]K_{s,t}\right|\n\\
&\leq& \left[\sum_{s=1}^\sT\|\Y_s-\bmu_s\|^2K_{s,t}\right]^{1/2}\sup_{\w\in\calW}\left[\sum_{s=1}^\sT\|\bmu_s^*(\w)-\widehat\bmu_s(\w)\|^2K_{s,t}\right]^{1/2}\n\\
&=& O_P\left((\sN\sT h)^{1/2}\right)\times O_P\left(\bar\rho \zeta(\sT\sN\sM h)^{1/2}\right)\n\\
&=& O_P\left(\bar\rho \zeta\sN\sT \sM^{1/2}h\right).\label{eqA.9}
\ee
Combining (\ref{eqA.8}), (\ref{eqA.9}) and Assumptions \ref{ass:1}(i) and \ref{ass:2}(ii), we readily have that
\begin{eqnarray}
\sup_{\w\in\calW}\frac{\left|\Psi_{t,1}(\w)\right|}{ L_t^*(\w)}&\leq& \sup_{\w\in\calW}\frac{\left|\Psi_{t,1}^\dag(\w)\right|}{ L_t^*(\w)}+\sup_{\w\in\calW}\frac{\left|\Psi_{t,1}^\ddag(\w)\right|}{ L_t^*(\w)}\n\\
&\leq&O_P\left(\xi_t^{-1}\left[(\bar{\rho} \sM \sN\sT h)^{1/2}+\bar\rho \zeta\sM^{1/2}\sN\sT h\right]\right)\n\\
&=&O_P\left(\xi_t^{-1}\bar\rho \zeta\sM^{1/2}\sN\sT h\right)=o_P(1),\n
\end{eqnarray}
completing the proof of (\ref{eqA.3}).

\end{proof}


\begin{proof}[\bf Proof of Theorem \ref{thm:3.2}]

Let
$$
{\sf PTVMA}^{*}_t(\w)=\frac{1}{\sT h}{\sf PTVMA}_t(\w)-\frac{1}{\sT h}\sum_{i=1}^\sN\sum_{s=1}^\sT\sigma_{i}^{2}(s/T)K_{s,t},
$$
where the second term on the right side is unrelated to $\w$. Hence,
\[
\widehat{\w}_t=\mathop{\arg\min}\limits_{\w\in \calW}  {\sf PTVMA}_t(\w)=\mathop{\arg\min}\limits_{\w\in \calW}{\sf PTVMA}_t^{*}(\w).
\]

We first show the in-sample point-wise weak consistency, i.e., $\widehat{\w}_t$ converges to $\w_t^*$ in probability. For any $\w\in\mathscr{W}$, we have
\be
&&\sup_{\w\in\calW}\frac{1}{\sN}\left\vert {\sf PTVMA}_t^*(\w) - (\sT h)^{-1}{\sf E}[L_{t}(\w)] \right\vert \notag\\
&=&\sup_{\w\in\calW} \left| \frac{1}{\sN \sT h}\sum_{s=1}^\sT \Big[ \sum_{i=1}^\sN \sigma_{i}^{2}(s/\sT)(u_{is}^2-1)K_{s,t}+\left\|\bmu_s-\widehat{\bmu}_s(\w)\right\|^2K_{s,t}+2(\Y_s-\bmu_s)^\top[\bmu_s-\widehat{\bmu}_s(\w)]K_{s,t}\right. \notag\\
&&\left. - {\sf E}\|\bmu_{s}-\widehat{\bmu}_{s}(\w)\|^2K_{s,t} \Big] + \frac{\lambda}{\sN\sT h} \sum_{m=1}^\sM w^m\rho_m \right| \notag\\
&\leq&\sup_{\w\in\calW}\bigg\{\bigg| \frac{1}{\sN \sT h}\sum_{s=1}^\sT \bigg[\left\|\bmu_s-\widehat{\bmu}_s(\w)\right\|^2K_{s,t} - \|\bmu_s-\bmu_s^*(\w)\|^2K_{s,t}\bigg] \bigg|+ \n\\
&&\bigg| \frac{1}{\sN \sT h}\sum_{s=1}^\sT K_{s,t}\bigg[ \|\bmu_s-\bmu_s^*(\w)\|^2 - {\sf E}\|\bmu_s-\bmu_s^*(\w)\|^2 \bigg] \bigg| + \n\\
&&\bigg| \frac{1}{\sN \sT h}\sum_{s=1}^\sT K_{s,t}\left[{\sf E}\|\bmu_s-\bmu_s^*(\w)\|^2 - {\sf E}\|\bmu_{s}-\widehat{\bmu}_{s}(\w)\|^2\right]  \bigg| + \n\\
&&\left|\frac{2}{\sN \sT h}\Psi_{t,1}(\w)\right|+ \left| \frac{1}{\sN \sT h}\lambda\sum_{m=1}^\sM w^m\rho_m\right|\bigg\}+\left| \frac{1}{\sN \sT h}\sum_{s=1}^\sT\sum_{i=1}^\sN \sigma_{i}^{2}(s/\sT)(u_{is}^2-1)K_{s,t}\right| \n\\
&=:&\Pi_1 + \Pi_2 + \Pi_3 + \Pi_4 + \Pi_5+\Pi_{6},\label{eqA.10}
\ee
where $\Psi_{t,1}(\w)$ is defined in the proof of Theorem \ref{thm:3.1}.

From the proof of Theorem \ref{thm:3.1}, $\bar{\rho}\sM\zeta\rightarrow0$ in Assumption \ref{ass:2}(ii) and $\lambda\bar{\rho}=o(\sN Th)$ in Assumption \ref{ass:3}(iii), we may show that
\begin{eqnarray}
\Pi_1&=&O_P\left(\frac{\bar\rho \zeta\sM\sN\sT h}{\sN\sT h}\right)=O_P\left(\bar\rho \zeta\sM\right)=o_P(1),\label{eqA.11}\\
\Pi_3&=&O\left(\frac{\bar\rho \zeta\sM\sN\sT h}{\sN\sT h}\right)=O\left(\bar\rho \zeta\sM\right)=o(1),\label{eqA.12}\\
\Pi_4&=&o_P\left(\frac{\bar\rho \zeta\sM\sN\sT h}{\sN\sT h}\right)=o_P\left(\bar\rho \zeta\sM\right)=o_P(1),\label{eqA.13}\\
\Pi_5&=&O\left(\frac{\lambda\bar{\rho}}{\sN\sT h}\right)=o(1).\label{eqA.14}
\end{eqnarray}
Meanwhile, it follows from Assumption \ref{ass:3}(ii)(iii) that
\begin{equation}\label{eqA.15}
\Pi_2=O_P\left(\bar\rho^2(\sN\sT h)^{-1/2}\right)=o_P(1),\quad \Pi_6=O_P\left((\sN\sT h)^{-1/2}\right)=o_P(1).
\end{equation}
Combining (\ref{eqA.10})--(\ref{eqA.15}), we readily have that
\[
\sup_{\w\in\calW}\frac{1}{\sN}\left|{\sf PTVMA}_t^*(\w) - (\sT h)^{-1}{\sf E}[L_{t}(\w)]\right| = o_P(1),
\]
which, together with the argmin continuity theorem and the fact that ${\sf PTVMA}_t^*(\w)$ has a unique minimizer on $\mathscr{W}$, completes the proof of weak consistency, i.e., $\widehat{\w}_t\cp \w_t^*$.

We next turn to the proof of (\ref{eq3.3}). Letting $\widehat\bmu_{t} = (\wh\bmu_{t}^{(1)},\ldots,\wh\bmu_{t}^{(\sM)})^\top$ and
\[
{\boldsymbol\Delta}_t=\frac{1}{\sN \sT h}\sum_{s=1}^\sT\widehat\bmu_{s}\widehat\bmu_{s}^\top K_{s,t},
\]
we write
\be
\frac{1}{\sN Th}{\sf E}[L_{t}(\w)] &=& \frac{1}{\sN\sT h}{\sf E}\left(\sum_{s=1}^\sT[\bmu_{s}-\widehat{\bmu}_{s}(\w)]^\top[\bmu_{s}-\widehat{\bmu}_{s}(\w)]K_{s,t}\right)\n \\
&=& \frac{1}{\sN \sT h}{\sf E}\left[ \sum_{s=1}^\sT \bmu_{s}^\top\bmu_{s}K_{s,t} - 2\w^\top\sum_{s=1}^\sT\widehat\bmu_{s}\bmu_{s}K_{s,t} +\w^\top\sum_{s=1}^\sT\widehat\bmu_{s}\widehat\bmu_{s}^\top K_{s,t}\w\right]\n  \\
&=&\w^\top{\sf E}\left[{\boldsymbol\Delta}_t\right]\w - 2\w^\top{\sf E}\left[\frac{1}{\sN Th}\sum_{s=1}^\sT\widehat\bmu_{s}\bmu_{s}K_{s,t} \right] + c_\ast,\n
\ee
for some positive constant $c_\ast$ which is unrelated to $\w$. Letting
\[
{\boldsymbol\Delta}_t^\ast=\frac{1}{\sN Th}\sum_{s=1}^\sT\widetilde\bmu_{s}^\ast\widetilde\bmu_{s}^{\ast\top} K_{s,t},
\]
it follows from Assumption \ref{ass:3}(i) that $\lambda_{\min}({\sf E}[{\boldsymbol\Delta}_t^\ast])$ is larger than $\underline\kappa$. Furthermore, by Assumption \ref{ass:1}(i) and $\bar{\rho}\zeta\sM \rightarrow0$ in Assumption \ref{ass:2}(ii), we may show that
\[
\lambda_{\min}({\sf E}[{\boldsymbol\Delta}_t])\geq \lambda_{\min}({\sf E}[{\boldsymbol\Delta}_t^\ast])-O_P\left(\bar{\rho}\zeta\sM+\bar{\rho}\zeta^2\sM\right)\geq \underline\kappa/2,
\]
indicating that ${\sf E} L_{t}[(\mathbf{w})]$ is strongly convex, and consequently, for any $\w\in\calW$,
\begin{equation}\label{eqA.16}
\frac{1}{\sN \sT h}\left[{\sf E} L_{t}(\w) -{\sf E} L_{t}(\w_t^*)\right] >\frac{1}{2}\underline{\kappa}\|\w-\w_t^*\|^2.
\end{equation}
Since $\widehat{\w}_t$ is the minimizer of ${\sf PTVMA}_t^*(\w)$, we readily have that
\be
0 &\geq&{\sf PTVMA}_t^*(\widehat\w_t) - \sf{\sf PTVMA}_t^*(\w_t^*)  \n\\
&=&\left({\sf PTVMA}_t^*(\widehat\w_t) - \frac{1}{\sT h}{\sf E}[L_{t}(\widehat\w_t)]\right) + \left(\frac{1}{\sT h}{\sf E}[L_{t}(\widehat\w_t)]-\frac{1}{\sT h}{\sf E}[L_{t}(\w_t^*)]\right)-\n\\
&& \left({\sf PTVMA}_t^*(\w_t^*)-\frac{1}{\sT h}{\sf E} [L_{t}(\w_t^*)]\right).\label{eqA.17}
\ee
Then, by virtue of (\ref{eqA.10})--(\ref{eqA.15}) and (\ref{eqA.17}), we obtain
\be
&&\frac{1}{\sN}\left(\frac{1}{\sT h}{\sf E} [L_{t}(\widehat\w_t)] - \frac{1}{\sT h}{\sf E} [L_{t}(\w_t^*)]\right)\n\\
&\leq & \frac{1}{\sN}\left[{\sf PTVMA}_t^*(\w_t^*) - \frac{1}{\sT h}{\sf E}[ L_{t}(\w_t^*)]\right] - \frac{1}{\sN}\left[{\sf PTVMA}_t^*(\widehat\w_t) - \frac{1}{\sT h}{\sf E}[ L_{t}(\widehat\w_t)]\right]\n \\
&\leq& 2\supw \frac{1}{\sN} \left|{\sf PTVMA}_t^*(\w) -\frac{1}{\sT h}{\sf E} [L_{t}(\w)] \right|\n \\
&=&O_P\left(\frac{\bar\rho(\lambda+ \zeta\sM\sN\sT h)}{\sN\sT h}+\bar\rho^2(\sN\sT h)^{-1/2}\right).\n
\ee
This, together with (\ref{eqA.16}) with $\widehat\w_t$ replacing $\w$, completes the proof of (\ref{eq3.3}).
\end{proof}


\begin{proof}[\bf Proof of Theorem \ref{thm:3.3}]

Similar to the proof of Theorem \ref{thm:3.2}, we have
\begin{equation}\label{eqA.18}
\frac{1}{\sN}{\sf PTVMA}_t^*(\w)=\frac{1}{\sN \sT h}L_{t}^*(\w)+O_P\left(\frac{\bar\rho(\lambda+ \zeta\sM\sN\sT h)}{\sN\sT h}+\bar\rho^2(\sN\sT h)^{-1/2}\right).
\end{equation}

For $\w=(w^1,\ldots,w^{\sM})\in{\calW}$, we define $\w^\circ=(w^1_\circ,\ldots,w^\sM_\circ)$ as a weight vector with $w_\circ^m=0$ for the correctly-specified models ($m\in{\calD}$) and $w_\circ^m=w^m/(1-\tau(\w))$ for all the other misspecified models ($m\notin{\calD}$), where $\tau(\w)=\sum_{m\in\calD}w^m$. Note that
\be\label{eqA.19}
\mu_{i,t}^{(m)*}-\mu_{i,t}=0,\ \ 1\leq i\leq \sN\quad \text{and}\quad \frac{1}{\sN}\left\|\bmu_{t}-\bmu_{t,\ast}^{(m)}\right\|^2=0\ \ \text{for $m\in{\calD}$}.
\ee
By (\ref{eqA.19}) and the definition of $\w^\circ$, we readily have that
\begin{eqnarray}
\frac{1}{\sN\sT h}L_{t}^*(\w)&=&\frac{1}{\sN\sT h}\sum_{s=1}^\sT \|\bmu_{s}-\bmu_{s}^*(\w)\|^2K_{s,t} \n\\
&=&\frac{1}{\sN\sT h}\left[\sum_{s=1}^\sT \left\|\sum_{m\notin {\calD}}{w}^m (\bmu_{s}-\bmu_{s,\ast}^{(m)}) \right\|^2K_{s,t}\right] \n\\
&=&\frac{[1-\tau(\w)]^2}{\sN\sT h}\left[\sum_{s=1}^\sT \left\|\sum_{m\notin {\calD}}(1-\tau(\w))^{-1}{w}^m (\bmu_{s}-\bmu_{s,\ast}^{(m)}) \right\|^2K_{s,t}\right]  \n\\
&=&\frac{[1-\tau(\w)]^2}{\sN\sT h}L_{t}^*(\w^\circ).\label{eqA.20}
\end{eqnarray}
Hence, replacing $\w$ with $\wh\w_t$ and using (\ref{eqA.18}) and (\ref{eqA.20}), we obtain
\begin{equation}\label{eqA.21}
\frac{1}{\sN}{\sf PTVMA}_t^*(\wh{\w}_t)=\frac{[1 - \tau(\wh\w_t)]^2}{\sN\sT h} L_{t}^*(\wh{\w}_t^\circ)+O_P\left(\frac{\bar\rho(\lambda+ \zeta\sM\sN\sT h)}{\sN Th}+\bar\rho^2(\sN\sT h)^{-1/2}\right),
\end{equation}
where $\wh{\w}_t^\circ$ is defined similarly to ${\w}^\circ$ but with $w^m$, $m\notin\calD$, and $\tau(\w)$ replaced by $\wh w_t^m$ and $\tau(\wh\w_t)=\sum_{m\in\calD}\wh w_t^m$, respectively.

Let $\check\w$ be a weight vector with $\sum_{m\in{\calD}}\check w^m=1$. By (\ref{eqA.18}) and (\ref{eqA.20}), we also have
\begin{equation}\label{eqA.22}
\frac{1}{\sN}{\sf PTVMA}_t^*(\check\w)=O_P\left(\frac{\bar\rho(\lambda+ \zeta\sM\sN\sT h)}{\sN\sT h}+\bar\rho^2(\sN\sT h)^{-1/2}\right).
\end{equation}
As $\wh{\w}_t=\arg\min_{\w\in\calW}{\sf PTVMA}_t^*(\w)$, we obtain
\[
\frac{1}{\sN}{\sf PTVMA}_t^*(\check\w)\geq\frac{1}{\sN}{\sf PTVMA}_t^*(\wh{\w}_t),
\]
which, together with (\ref{eqA.21}) and (\ref{eqA.22}), indicates that
\[
\frac{[1-\tau(\wh\w_t)]^2}{\sN\sT h} L_{t}^*(\wh{\w}_t^\circ)=O_P\left(\frac{\bar\rho(\lambda+ \zeta\sM\sN\sT h)}{\sN\sT h}+\bar\rho^2(\sN\sT h)^{-1/2}\right).
\]
This, together with the condition (\ref{eq3.4}), indicates that we must have $\tau(\widehat{\w}_t)\stackrel{P}\rightarrow 1 $, completing the proof of Theorem \ref{thm:3.3}.
\end{proof}


\begin{proof}[\bf Proof of Theorem \ref{thm:3.4}]

By Lemma 1 in \cite{GZWCZ19}, it suffices to show that
\begin{equation}\label{eqA.23}
\mathop{\sup}\limits_{\w\in \calW}\left | \frac{R_{\sT+1}(\w)  }{R_{\sT+1}^{*} (\w) }-1  \right |=o(1),
\end{equation}
and
\begin{equation}\label{eqA.24}
\mathop{\sup}\limits_{\w\in \calW}\left | \frac{{{{\sf PTVMA}_\sT^*(\w)}}  }{R_{\sT+1}^{*} (\w) }-1  \right |=o_P(1).
\end{equation}

We first verify (\ref{eqA.23}). Note that
\be\label{eqA.25}
\sup\limits_{\w\in \calW}\left\Vert\widehat{\Y}_{\sT+1}(\w)-\Y_{\sT+1}^{*}(\w)\right\Vert =O_P\left(\bar{\rho}\zeta (\sM\sN)^{1/2}\right),
\ee
by Assumption \ref{ass:1}(i). By Assumption \ref{ass:4}(ii), we may show that
\be
&&\sup_\sT{\sf E}\left\{\xi_{\sT+1}^{*-1}\max_{1\leq m,m'\leq \sM}\left\|\wh{\Y}_{\sT+1}^{(m)}-\Y_{\sT+1}^{(m)*}\right\|\left\|\wh{\Y}_{\sT+1}^{(m')}-\Y_{\sT+1}^{(m')*}\right\|\right\}^{1+\delta_1/2}\n\\
&\leq&\sup_\sT{\sf E}\left\{\xi_{\sT+1}^{*-1/2}\max_{1\leq m\leq \sM}\left\|\wh{\Y}_{\sT+1}^{(m)}-\Y_{\sT+1}^{(m)*}\right\|\right\}^{2+\delta_1}<\infty.\label{eqA.26}
\ee
This indicates that $\xi_{\sT+1}^{*-1}\max_{1\leq m,m'\leq \sM}\|\wh{\Y}_{\sT+1}^{(m)}-\Y_{\sT+1}^{(m)*}\|\|\wh{\Y}_{\sT+1}^{(m')}-\Y_{\sT+1}^{(m')*}\|$ is uniformly integrable over $\sT$.  Thus, by Assumption \ref{ass:4}(ii), the weight constraint, i.e., $\sum_{i=1}^\sM w^i=1$, and (\ref{eqA.26}), we have
\be
&&\xi _{\sT+1}^{*-1}\sup_{\w\in \calW} \left| \left[\Y_{\sT+1}-\wh{\Y}_{\sT+1}\left ( \w\right )     \right] ^{\top} \left[\Y_{\sT+1}-\wh{\Y}_{\sT+1}\left ( \w\right )     \right]- \left[ \Y_{\sT+1}-\Y_{\sT+1}^{*}\left( \w \right)  \right] ^{\top}\left[ \Y_{\sT+1}-\Y_{\sT+1}^{*}\left( \w \right)  \right] \right| \n\\
&=&\xi_{\sT+1}^{*-1}\sup_{\w\in \calW} \left| \left[ \wh{\Y}_{\sT+1}(\w)- \Y_{\sT+1}^{*}(\w)  \right]^{\top} \left[ \wh{\Y}_{\sT+1}\left (\w\right )+\Y_{\sT+1}^{*}(\w)-2\Y_{\sT+1} \right]  \right| \n\\
&=&\xi_{\sT+1}^{*-1}\sup_{\w\in \calW}\left|\sum_{m=1}^\sM\sum_{m'=1}^\sM w^mw^{m'}\left[\wh{\Y}_{\sT+1}^{(m)}-{\Y}_{T+1}^{(m)*}\right]^\top\left[\wh{\Y}_{\sT+1}^{(m')}+{\Y}_{\sT+1}^{(m')*}-2\Y_{\sT+1}\right]\right|\n\\
&\leq&\xi_{\sT+1}^{*-1}\max_{1\leq m,m'\leq \sM}\left\|\wh{\Y}_{\sT+1}^{(m)}-{\Y}_{\sT+1}^{(m)*}\right\|\left\|\wh{\Y}_{\sT+1}^{(m')}+{\Y}_{\sT+1}^{(m')*}-2\Y_{\sT+1}\right\|\sup_{\w\in\calW}\left(\sum_{m=1}^\sM w^m\sum_{m'=1}^\sM w^{m'}\right)\n\\
&=& \xi_{\sT+1}^{*-1}\max_{1\leq m,m'\leq \sM}\left\|\wh{\Y}_{\sT+1}^{(m)}-{\Y}_{\sT+1}^{(m)*}\right\|\left\|\wh{\Y}_{\sT+1}^{(m')}-\Y_{\sT+1}^{(m')*}-2\beps_{\sT+1}^{(m')*}\right\|\n\\
&\leq& \xi_{\sT+1}^{*-1}\max_{1\leq m,m'\leq \sM}\left\|\wh{\Y}_{\sT+1}^{(m)}-{\Y}_{\sT+1}^{(m)*}\right\|\left\|\wh{\Y}_{\sT+1}^{(m')}-{\Y}_{\sT+1}^{(m')*}\right\|+2 \xi_{\sT+1}^{*-1}\max_{1\leq m,m'\leq \sM}\left\|\wh{\Y}_{\sT+1}^{(m)}-\Y_{\sT+1}^{(m)*}\right\|\left\|\beps_{\sT+1}^{(m')*}\right\|,\n
\ee
implying that $\xi_{\sT+1}^{*-1}\sup_{\w\in\calW}\left|\left\|\Y_{\sT+1}-\wh{\Y}_{\sT+1}(\w)\right\|^2-\left\|\Y_{\sT+1}-{\Y}^*_{\sT+1}(\w)\right\|^2\right|$ is uniformly integrable. This, together with (\ref{eqA.25}), leads to
\be
\sup\limits_{\w\in \calW}\left | \frac{R_{\sT+1}(\w)  }{R_{\sT+1}^{\ast} (\w) }-1  \right | 
&\leq&\xi_{\sT+1}^{*-1}\mathop{\sup}\limits_{\w\in\calW}\left | {\sf E}\left [  \left \|\Y_{\sT+1}-\wh{\Y}_{\sT+1}\left (\w\right )     \right \| ^{2}-\left \| \Y_{\sT+1}-\Y_{\sT+1}^{*}(\w)     \right \|^{2}  \right ]  \right | \n\\
&\leq&{\sf E}\left \{ \xi_{\sT+1}^{\ast -1}\mathop{\sup}\limits_{\w\in \calW}\left | \left \| \Y_{\sT+1}-\wh{\Y}_{\sT+1}(\w)     \right \| ^{2}-\left \| \Y_{\sT+1}-\Y_{\sT+1}^{*}\left ( \w\right )     \right \| ^{2} \right | \right \} \n \\
&=&O\left(\xi_{\sT+1}^{*-1}\sM\sN \bar{\rho}^2\zeta^2+\xi_{\sT+1}^{*-1}\sM\sN \bar{\rho}\zeta\right)=O\left(\xi_{\sT+1}^{*-1}\sM\sN \bar{\rho}\zeta\right)=o(1),\label{eqA.27}
\ee
completing the proof of (\ref{eqA.23}).

We next turn to the proof of (\ref{eqA.24}). Observe that
\be
&&\sup\limits_{\w\in \calW}\left | \frac{{\sf PTVMA}_\sT^{*} (\w)}{R_{\sT+1}^{*} (\w) }-1  \right |\n\\
&\leq&\xi_{\sT+1}^{*-1}\mathop{\sup}\limits_{\w\in \calW}\left| {\sf PTVMA}_\sT^{*}(\w) -R_{\sT+1}^{*} (\w)    \right|\n\\
&\leq&\xi _{\sT+1}^{*-1}\mathop{\sup}\limits_{\w\in \calW}\left | \frac{1}{\sT h}\sum_{t=1}^{\sT} \left\Vert \Y_{t}-\wh{\bmu}_{t}(\w)\right\Vert^2K_{t,\sT}- {\sf E}\left[\|\Y_{\sT+1}-\Y_{\sT+1}^{*}(\w)\|^2\right] \right|+\xi_{\sT+1}^{*-1}\frac{\lambda}{\sT h}\mathop{\sup}\limits_{\w\in \calW}\left|\sum_{m=1}^\sM w^m\rho_m\right|\n\\
&\leq&\xi_{\sT+1}^{*-1}\mathop{\sup}\limits_{\w\in \calW}\left| \frac{1}{\sT h}\sum_{t=1}^{\sT}\left( \left\Vert \Y_{t}-\wh{\bmu}_{t}(\w)\right\Vert^2- \left\Vert \Y_{t}-\bmu_{t}^{*}(\w) \right\Vert^{2}\right)K_{t,\sT}  \right|+ \n\\
&&\xi_{\sT+1}^{*-1}\mathop{\sup}\limits_{\w\in \calW}\left | \frac{1}{\sT h}\sum_{t=1}^{\sT}\left( \left\Vert\Y_{t}-\bmu_{t}^{*}(\w) \right\Vert^2 - {\sf E}\left[\left\Vert\Y_{t}-\bmu_{t}^{*}(\w)\right\Vert^2\right]\right) K_{t,\sT}  \right |+\n\\
&& \xi_{\sT+1}^{*-1}\mathop{\sup}\limits_{\w\in \calW}\left | \frac{1}{\sT h}\sum_{t=1}^{\sT}{\sf E}\left[|| \Y_{t}-\bmu_{t}^{*}(\w)  ||^{2}\right]K_{t,\sT} - {\sf E}\left[\|\Y_{\sT+1}-\Y_{\sT+1}^{*}(\w)\|^2\right]\right|+o_P(1),\n
\ee
which the last inequality is due to Assumption \ref{ass:4}(iii). Hence, to prove (\ref{eqA.24}), it suffices to show
\be
&&\xi_{\sT+1}^{*-1}\mathop{\sup}\limits_{\w\in \calW}\left| \frac{1}{\sT h}\sum_{t=1}^{\sT}\left( \left\Vert \Y_{t}-\wh{\bmu}_{t}(\w)\right\Vert^2- \left\Vert \Y_{t}-\bmu_{t}^{*}(\w) \right\Vert^{2}\right)K_{t,\sT}  \right|=o_P(1),\label{eqA.28} \\
&&\xi_{\sT+1}^{*-1}\mathop{\sup}\limits_{\w\in \calW}\left | \frac{1}{\sT h}\sum_{t=1}^{\sT}\left( \left\Vert\Y_{t}-\bmu_{t}^{*}(\w) \right\Vert^2 - {\sf E}\left[\left\Vert\Y_{t}-\bmu_{t}^{*}(\w)\right\Vert^2\right]\right) K_{t,\sT}  \right |=o_P(1),\label{eqA.29}\\
&& \xi_{\sT+1}^{*-1}\mathop{\sup}\limits_{\w\in \calW}\left | \frac{1}{\sT h}\sum_{t=1}^{\sT}{\sf E}\left[|| \Y_{t}-\bmu_{t}^{*}(\w)  ||^{2}\right]K_{t,\sT} - {\sf E}\left[\|\Y_{\sT+1}-\Y_{\sT+1}^{*}(\w)\|^2\right]\right|=o(1).\label{eqA.30}
\ee

By Assumptions \ref{ass:1}(i) and \ref{ass:4}(iii), we may prove that
\be
&&\xi _{\sT+1}^{*-1}\mathop{\sup}\limits_{\w\in \calW}\left | \frac{1}{\sT h}\sum_{t=1}^{\sT}K_{t,\sT}\left [ \left \|\Y_t-\wh{\bmu}_t(\w) \right \|^{2}- \left \| \Y_t-\bmu_{t}^{*}(\w)    \right \|^{2} \right ]    \right | \n\\
&=&\xi _{\sT+1}^{\ast -1}\mathop{\sup}\limits_{\w\in \calW}\left |  \frac{1}{\sT h}\sum_{t=1}^{\sT}K_{t,\sT}\bigg[   \wh{\bmu}_t(\w)- \bmu_t^{*}(\w)     \bigg]^\top\bigg[\wh{\bmu}_t(\w)- \bmu_t^{\ast}(\w)+ 2\bmu_t^{\ast}\left ( \w\right )-2\Y_t   \bigg]    \right |\n \\
&\leq& \xi_{\sT+1}^{*-1} \sup_{\w \in \calW} \frac{1}{\sT h}\sum_{t=1}^{\sT}K_{t,\sT}\Big\| \sum_{m=1}^{\sM} w^m \left(\wh{\bmu}_t^{(m)} - \bmu_{t,\ast}^{(m)}\right)\Big\| \Big\|2\left[\Y_t-\bmu_t^*(\w)\right]-\left[\wh{\bmu}_t(\w)-\bmu_t^*(\w)\right]\Big\|\n \\
&=& O_P\left(\xi_{\sT+1}^{*-1}\sM\sN \bar{\rho}\zeta+\xi_{\sT+1}^{*-1}\sM\sN \bar{\rho}^2\zeta^2\right)=O_P\left(\xi_{\sT+1}^{*-1}\sM\sN \bar{\rho}\zeta\right)=o_P(1),\n
\ee
leading to (\ref{eqA.28}).

We next consider the proof of (\ref{eqA.29}). Recall that $ y_{i,t} - \mu_{i,t}^{*}(\w) = \eps_{i,t} + \sum_{m=1}^{M} w^m (\mu_{i,t} - \mu_{i,t}^{\m\ast}) = \eps_{i,t} + \Z_{i,t-1}^\top \sum_{m=1}^{M} w^m (\btheta_{t} - \bPi^{\m\top} \btheta_{t,\ast}^{\m}) $. By Assumptions \ref{ass:1}(\romannumeral3), \ref{ass:3}(\romannumeral2), \ref{ass:4}(iii)(iv), we may show that
\begin{align*}
&\quad \xi _{\sT+1}^{*-1}\mathop{\sup}\limits_{\w\in \calW}\left| \frac{1}{\sT h}\sum_{t=1}^{\sT}\left(\left \| \Y_t-\bmu_t^{*}(\w) \right \|^{2} - {\sf E}\left[\left \| \Y_t-\bmu_t^{*}(\w) \right \|^{2}\right]\right) K_{t,\sT} \right|   \\
&\leq \xi _{\sT+1}^{*-1}\mathop{\sup}\limits_{\w\in \calW}\left| \frac{1}{\sT h} \sum_{t=1}^{\sT} \sum_{i=1}^{\sN} \left(\left| \eps_{i,t} + \Z_{i,t-1}^\top \bdel_t (\w) \right|^{2} - {\sf E} \left| \eps_{i,t} + \Z_{i,t-1}^\top \bdel_t (\w) \right|^{2} \right) K_{t,\sT} \right|   \\
&\leq \xi _{\sT+1}^{*-1}\mathop{\sup}\limits_{\w\in \calW} \left| \frac{1}{\sT h} \sum_{t=1}^{\sT} \sum_{i=1}^{\sN} K_{t,\sT} \left( \eps_{i,t}^2 - \mE \eps_{i,t}^2 \right) \right|   %
+ \xi _{\sT+1}^{*-1}\mathop{\sup}\limits_{\w\in \calW} \left| \frac{2}{\sT h} \sum_{t=1}^{\sT} \sum_{i=1}^{\sN} K_{t,\sT} \eps_{i,t} \Z_{i,t-1}^\top \bdel_t (\w) \right|   \\
&\quad + \xi _{\sT+1}^{*-1}\mathop{\sup}\limits_{\w\in \calW} \left| \frac{1}{Th} \bdel_T^\top (\w) \left[ \sum_{t=1}^\sT\sum_{i=1}^\sN K_{t,\sT}\left(\Z_{i,t-1}\Z_{i,t-1}^\top-{\sf E}\left[\Z_{i,t-1}\Z_{i,t-1}^\top\right]\right) \right] \bdel_T (\w) \cdot (1 + O_p(h)) \right|   \\
&= O_P\left(\xi_{\sT+1}^{*-1}(\sT h)^{-1/2} {\bar{\rho}}^2\sN^{1/2}\right)=o_P(1)
\end{align*}

Finally, we observe that
\begin{align*}
&\quad \xi_{\sT+1}^{*-1}\mathop{\sup}\limits_{\w\in \calW} \left| \frac{1}{\sT h}\sum_{t=1}^{\sT}{\sf E} \left[\left\|\Y_{t}-\bmu_{t}^{*}(\w) \right\|^{2}\right] K_{t,\sT} - {\sf E}\left\| \Y_{\sT+1}-\Y_{\sT+1}^{*}(\w) \right\|^{2} \right|   \\
&= \xi_{\sT+1}^{*-1}\mathop{\sup}\limits_{\w\in \calW} \left| \frac{1}{\sT h} \sum_{t=1}^{\sT} \sum_{i=1}^{\sN} K_{t,\sT} \mE \left| \eps_{i,t} + \Z_{i,t-1}^\top \bdel_t (\w) \right|^{2} - \sum_{i=1}^{\sN} {\sf E} \left| \eps_{i,\sT+1} + \Z_{i,\sT}^\top \bdel_{\sT} (\w) \right|^{2} \right|   \\
&\leq \xi_{\sT+1}^{*-1}\mathop{\sup}\limits_{\w\in \calW} \left| \sum_{i=1}^{\sN} \left( \frac{1}{\sT h} \sum_{t=1}^{\sT} K_{t,\sT} \mE \eps_{i,t}^2 - \mE \eps_{i,\sT+1}^2 \right) \right|   \\
&\quad + \xi_{\sT+1}^{*-1}\mathop{\sup}\limits_{\w\in \calW} \left| \sum_{i=1}^{\sN} \bdel_{\sT}^\top (\w) \left( \frac{1}{\sT h} \sum_{t=1}^{\sT} K_{t,\sT} \mE \Z_{i,t-1}\Z_{i,t-1}^\top - \mE \Z_{i,\sT}\Z_{i,\sT}^\top \right) \bdel_{\sT} (\w) \cdot (1+O(h)) \right|   \\
&= O\left(\bar\rho \xi_{\sT+1}^{*-1} \sN  h\right) = o(1).
\end{align*}
due to Assumption \ref{ass:4}(i)(\romannumeral4). This completes the proof of (\ref{eqA.30}).
\end{proof}


\begin{proof}[\bf Proof of Theorem \ref{thm:3.5}]

As in the proof of Theorem \ref{thm:3.2}, we first show the weak consistency of $\wh{\w}_T$ and then derive its convergence rate. For any $\w\in\calW$, note that
\be
&&\sup_{\w\in\calW}\frac{1}{\sN}\left\| {\sf PTVMA}_\sT^*(\w) - R_{\sT+1}(\w) \right\| \notag\\
&=& \sup_{\w\in\calW} \left| \frac{1}{\sN\sT h}\sum_{t=1}^\sT \left\|\Y_t-\wh{\bmu}_t(\w)\right\|^2 K_{t,\sT} + \frac{\lambda}{\sN\sT h} \sum_{m=1}^\sM w^m\rho_m - \frac{1}{\sN} {\sf E}\|\Y_{\sT+1}-\widehat{\Y}_{\sT+1}(\w)\|^2 \right| \n\\
&\leq&\sup_{\w\in\calW}\bigg\{\bigg| \frac{1}{\sN\sT h}\sum_{t=1}^\sT \bigg[\left\|\Y_t-\wh{\bmu}_t(\w)\right\|^2K_{t,\sT} - \|\Y_t-\bmu_t^*(\w)\|^2K_{t,\sT}\bigg] \bigg| \n\\
&&  + \bigg| \frac{1}{\sN\sT h}\sum_{t=1}^\sT K_{t,\sT}\bigg[ \|\Y_t-\bmu_t^*(\w)\|^2 - {\sf E}\|\Y_t-\bmu_t^*(\w)\|^2 \bigg] \bigg|+ \n\\
&& + \bigg| \frac{1}{\sN \sT h}\sum_{t=1}^\sT K_{t,\sT}{\sf E}\|\Y_t-\bmu_t^*(\w)\|^2 - \frac{1}{\sN} {\sf E}\|\Y_{\sT+1}-\Y_{\sT+1}^*(\w)\|^2  \bigg| \n\\
&& + \bigg| \frac{1}{\sN}{\sf E}\|\Y_{\sT+1}-\Y_{\sT+1}^*(\w)\|^2 - \frac{1}{\sN} {\sf E}\|\Y_{\sT+1}-\widehat{\Y}_{\sT+1}(\w)\|^2 \bigg| +\left| \frac{\lambda}{\sN\sT h}\sum_{m=1}^\sM w^m\rho_m\right|\bigg\} \n\\
&=:& \Xi_1 + \Xi_2 + \Xi_3 + \Xi_4+\Xi_5 \label{eqA.32}.
\ee
As in the proofs of (\ref{eqA.11}), (\ref{eqA.13})--(\ref{eqA.15}) and (\ref{eqA.30}), we readily have that
\begin{equation}\label{eqA.33}
\Xi_i=o_P(1),\quad i=1,\ldots,5.
\end{equation}
It follows from (\ref{eqA.32}) and (\ref{eqA.33}) that
\[
\sup_{\w\in\calW}\frac{1}{\sN}\left| {\sf PTVMA}_\sT^*(\w) - R_{\sT+1}(\w) \right| = o_P(1).
\]
Since $R_{\sT+1}$ has a unique minimizer on $\calW$, the argmin continuity theorem implies that $\widehat{\w}_\sT\cp \w_\sT^*$, completing the proof of weak consistency property.

We next give the proof of (\ref{eq3.5}). Let
$$\wt{\boldsymbol\Delta}_\sT=\frac{1}{\sN}{\sf E}\left[\wt\Y_{\sT+1} \wt\Y_{\sT+1}^\top\right],\quad \wt\Y_{\sT+1} = (\wh\Y_{\sT+1}^{(1)},\ldots,\wh\Y_{\sT+1}^{(\sM)})^\top.$$
Note that, for any $\w\in\calW$,
\begin{eqnarray}
\frac{1}{\sN}R_{\sT+1}(\w) &=& \frac{1}{\sN}{\sf E}\left(\left[\Y_{\sT+1}-\widehat{\Y}_{\sT+1}(\w)\right]^\top\left[\Y_{\sT+1}-\widehat{\Y}_{\sT+1}(\w)\right]\right) -\frac{1}{\sN\sT h}\sum_{i=1}^\sN\sum_{s=1}^\sT\sigma_{i}^{2}(s/\sT)K_{s,\sT} \notag\\
&=& \frac{1}{\sN}{\sf E}\left[\Y_{\sT+1}^\top\Y_{\sT+1} - 2(\w^\top\wt\Y_{\sT+1})\Y_{\sT+1} + \w^\top\wt\Y_{\sT+1}\wt\Y_{\sT+1}^\top\w\right] - \frac{1}{\sN\sT h}\sum_{i=1}^\sN\sum_{s=1}^\sT\sigma_{i}^{2}(s/\sT)K_{s,\sT} \notag\\
&=& \w^\top \wt{\boldsymbol\Delta}_T \w - 2\frac{1}{\sN}\w^\top{\sf E}\big[\wt\Y_{\sT+1}\Y_{\sT+1}\big] + c_\diamond,
\end{eqnarray}
where $c_\diamond$ is a constant unrelated to $\w$. Write
\[
\wt{\boldsymbol\Delta}_\sT^\ast=\frac{1}{\sN}{\sf E}\left[\wt\Y_{\sT+1}^\ast \wt\Y_{\sT+1}^{\ast\top}\right],
\]
whose minimum eigenvalue is larger than $\underline\kappa_\ast>0$ by Assumption \ref{ass:5}(i). Using the argument in the proof of Theorem \ref{thm:3.2}, we readily have that
\[
\lambda_{\min}\left(\wt{\boldsymbol\Delta}_\sT\right)\geq \lambda_{\min}\left(\wt{\boldsymbol\Delta}_\sT^\ast\right)-o_P(1)\geq \underline\kappa_\ast/2,
\]
indicating that $R_{\sT+1}(\mathbf{w})$ is strongly convex. For any $\w\in\calW$, the strong convexity of $R_{\sT+1}(\w)$ leads to
\begin{equation}\label{eqA.35}
\frac{1}{\sN}\left[R_{\sT+1}(\w) - R_{\sT+1}(\w_\sT^*)\right] >\frac{\underline{\kappa}_\ast}{2}\left\|\w-\w_\sT^*\right\|^2.
\end{equation}
As $\widehat{\w}_\sT$ is the minimizer of ${\sf PTVMA}_\sT^*(\w)$, we have
\begin{eqnarray}
0 &\geq& {\sf PTVMA}_\sT^*(\widehat\w_\sT) - {\sf PTVMA}_\sT^*(\w_\sT^*)\notag  \\
&=& {\sf PTVMA}_\sT^*(\widehat\w_\sT) - R_{\sT+1}(\widehat\w_\sT) + R_{\sT+1}(\widehat\w_\sT) - R_{\sT+1}(\w_\sT^*) + R_{\sT+1}(\w_\sT^*) - {\sf PTVMA}_\sT^*(\w_\sT^*).\notag
\end{eqnarray}
This, together with (\ref{eqA.35}) with $\widehat\w_T$ replacing $\w$, leads to
\be
\left\|\widehat\w_\sT-\w_\sT^*\right\|^2&\leq&\frac{2}{\underline\kappa_\ast}\cdot\frac{1}{\sN}\left[R_{\sT+1}(\widehat\w_\sT) - R_{\sT+1}(\w_\sT^*)\right]\n\\
&\leq&\frac{2}{\underline\kappa_\ast}\cdot\frac{1}{\sN}\left\{\left[{\sf PTVMA}_\sT^*(\w_\sT^*) - R_{\sT+1}(\w_\sT^*)\right] - \left[{\sf PTVMA}_T^*(\widehat\w_\sT) - R_{\sT+1}(\widehat\w_\sT)\right]\right\}\n \\
&\leq& \frac{4}{\underline\kappa_\ast}\cdot\frac{1}{\sN} \sup_{\w\in\calW}\left| {\sf PTVMA}_\sT^*(\w) - R_{\sT+1}(\w) \right|\n \\
&=&O_P\left(\sM \bar{\rho}\zeta+\lambda\bar\rho(\sN \sT h)^{-1} + (\sN\sT h)^{-1/2}\sM^2 +\bar{\rho} h \right),\label{eqA.36}
\ee
where the last equality is due to the rates obtained in the proofs of (\ref{eqA.27})--(\ref{eqA.30}). We complete the proof of (\ref{eq3.5}) using (\ref{eqA.36}).\end{proof}

\begin{proof}[\bf Proof of Theorem \ref{thm:3.6}]

The proof is analogous to the proof of Theorem \ref{thm:3.3} with minor amendments. Details are omitted to save the space.
\end{proof}

\begin{proof}[\bf Proof of Proposition \ref{prop:4.1}]

Denote the empirical $p$-values using the estimated residuals:
\begin{eqnarray}
\wh{p}_{i, \sT+1}&=&\frac{1}{\sT h+1}\sum_{t=\sT- \sT h}^{\sT}I\{|\wh{\epsilon}_{i,t}|>|\wh{\epsilon}_{i,\sT+1})|\},\notag\\
 \wh{q}_{i,\sT+1}&=&\frac{1}{\sT h+1}\sum_{t=\sT- \sT h}^{\sT}I\{|\wh{u}_{i,t}|>|\wh{u}_{i,\sT+1}|\},\notag
\end{eqnarray}
 and denote the empirical $p$-values using true errors:
\begin{eqnarray}
\wt{p}_{i,\sT+1}&=&\frac{1}{\sT h+1}\sum_{t=\sT- \sT h}^{\sT}I\{|{\epsilon}_{i,t}|>|{\epsilon}_{i,\sT+1}|\},\notag\\
\wt{q}_{i,\sT+1}&=&\frac{1}{\sT h+1}\sum_{t=\sT- \sT h}^{\sT}I\{|{u}_{i,t}|>|{u}_{i,\sT+1}|\},\notag
\end{eqnarray}
where $\wh{u}_{i,t}=\sigma_{i}^{-1}(\tau_t)\wh{\epsilon}_{i,t}$, and $I\{\cdot\}$ is an indicator function.

Note that $\wh{p}_{i,\sT+1}>\alpha$ is equivalent to $|\wh{\epsilon}_{i,\sT+1}|< |\wh{\epsilon}_i^{\alpha}|$, and thus
\begin{align}
Y_{i, \sT+1}&\in \left[Y_{i, \sT+1}-\wh{\epsilon}_{i, \sT+1}-\wh{\epsilon}_i^{\alpha},\ Y_{i,\sT+1}-\wh{\epsilon}_{i, \sT+1}+\wh{\epsilon}_i^{\alpha}\right]\notag\\
&=\left[\wh{Y}_{i,\sT+1}(\wh{\w}_\sT)-\wh{\epsilon}_i^{\alpha},\ \wh{Y}_{i,\sT+1}(\wh{\w}_\sT)+\wh{\epsilon}^{\alpha}\right]=C_{i}^{\alpha}(\Z_{\sT})\notag
\end{align}
by the definition of $\wh{Y}_{i,\sT+1}(\wh{\w}_\sT)$ in (\ref{eq2.7}). Then we readily have that
\begin{equation}\label{eqA.37}
{\sf P}\left(\wh{p}_{i,\sT+1}>\alpha\right)={\sf P}\left(Y_{i,\sT+1}\in C_{i}^{\alpha}(\Z_{\sT})\right).
\end{equation}

Write the empirical distribution functions of $\{u_{i,t}\}_{t=\sT-\sT h}^\sT$ and $\{\wh{u}_{i,t}\}_{t=\sT-\sT h}^\sT$ as
\[
\wt{\sf F}(x)=\frac{1}{\sT h+1}\sum_{t=\sT- \sT h}^{\sT} I\{|u_{i,t}|\leq x\}
\]
and
\[
\wh{\sf F}(x)=\frac{1}{\sT h+1}\sum_{t=\sT- \sT h}^{\sT} I\{|\wh{u}_{i,t}|\leq x\},
\]
respectively. Then, we readily have that
\[
\wh{\sf F}(|\wt{u}_{i,\sT+1}|)=1-\wh{q}_{i,\sT+1}\ \ \text{and}\ \ \wt{\sf F}(|u_{i,\sc \sT+1}|)=1-\wt{q}_{i,\sT+1}.
\]

With Assumption \ref{ass:6}(ii) and Dvoretzky–Kiefer–Wolfowitz inequality \citep[e.g.,][]{D08}, we have
\begin{equation}\label{eqA.38}
{\sf P}\left(\sup_x\left|\wt{\sf F}(x)-{\sf F}(x)\right| >z\right) \leq 2e^{-2\sT h z^2},
\end{equation}
for any $z>0$. Letting $a_\sT=(\sT h)^{-1/3}$, it follows from (\ref{eqA.38}) that
\be
{\sf E}\left(\sup_x|\wt{\sf F}(x)-{\sf F}(x)|\right)&=&\int_{0}^{\infty}{\sf P}\left( \sup_x|\wt{\sf F}(x)-{\sf F}(x)| \geq z \right) dz \n\\
&\leq& a_\sT+ \int_{a_\sT}^{\infty} 2e^{-2\sT h z^2} dz
\n\\
&\leq& a_\sT + O\left(e^{-2\sT h a_\sT^2}\right)\n\\
&\leq& 2a_\sT. \label{eqA.39}
\ee
Define
\[
{\mathscr S}=\left\{t=\sT-\sT h,\ldots,\sT,  \big\vert |\wh{u}_{i,t}|-|u_{i,t}|\big\vert\geq \delta_{i,\sT}^{1/2}\right\},\quad \delta_{i,\sT}^2=\frac{1}{\sT h+1}\sum_{t=\sT-\sT h}^{\sT}\left(|\wh{u}_{i,t}|-|u_{i,t}|\right)^2.
\]
For any random variable $ X $, we have
\be
\left|\wh{\sf F}(X)-\wt{\sf F}(X)\right|&\leq& \frac{1}{\sT h+1}\sum_{t=\sT-\sT h}^{\sT}\left|I\left\{|\wh{u}_{i,t}|\leq X\right\}-I\left\{|{u}_{i,t}|\leq X\right\}\right|\n\\
&\leq& \frac{1}{\sT h+1} \left(\sum_{t\in{\mathscr S}}+\sum_{t\notin{\mathscr S}}\right) I\left\{\big||{u}_{i,t}|-X\big| \leq \big| |\wh{u}_{i,t}| - |{u}_{i,t}| \big| \right\} \n\\
&\leq& \frac{1}{\sT h+1} |{\mathscr S}| + \frac{1}{\sT h+1}\sum_{t=\sT-\sT h}^{\sT} I\left\{\Big||{u}_{i,t}|-X\Big|\leq \delta_{i,\sT}^{1/2}\right\} \n\\
&\leq& \frac{1}{\sT h+1} |{\mathscr S}| + \wt{\sf F} (X+\delta_{i,\sT}^{1/2}) - \wt{\sf F} (X-\delta_{i,\sT}^{1/2}), \label{eqA.40}
\ee
where the second inequality is due to $\big|I\left\{a\leq x\right\}-I\left\{b\leq x\right\}\big|\leq I\left\{|b-x|\leq |a-b|\right\}$, and $|{\mathscr S}|$ denotes the cardinality of ${\mathscr S}$.  Furthermore, by the definition of $\delta_{i,\sT}^2$, we may show that
\be
\frac{1}{\sT h+1} |{\mathscr S}|&=&\frac{1}{\sT h+1}\sum_{t=\sT-\sT h}^{\sT}I\left\{ \big||\wt{u}_{i,t}|-|u_{i,t}|\big| \geq \delta_{i,\sT}^{1/2} \right\}\n\\
&\leq&\delta_{i,\sT}^{-1/2} \cdot \frac{1}{\sT h+1}\sum_{t=\sT-\sT h}^{\sT} \big| |\wt{u}_{i,t}|-|u_{i,t}| \big| \n\\
&\leq&\delta_{i,\sT}^{-1/2} \delta_{i,\sT}=\delta_{i,\sT}^{1/2}, \label{eqA.41}
\ee
which, together with (\ref{eqA.40}), leads to
\begin{equation}\label{eqA.42}
\left|\wh{\sf F}(X)-\wt{\sf F}(X)\right|\leq \delta_{i,\sT}^{1/2}+\wt{\sf F} (X+\delta_{i,\sT}^{1/2}) - \wt{\sf F} (X-\delta_{i,\sT}^{1/2}).
\end{equation}
With (\ref{eqA.39}) and (\ref{eqA.42}), using Assumption \ref{ass:6}(ii) and the basic inequality $\big|I\left\{a\leq x\right\}-I\left\{b\leq x\right\}\big|\leq I\left\{|b-x|\leq |a-b|\right\}$ again, we may show that
\be
&&\left|{\sf P}\left(\wh{q}_{i,\sT+1}\leq \alpha\right)-\alpha\right|\n\\
&=&\left|{\sf P}\left(\wh{\sf F}(|\wh{u}_{i,\sT+1}|)\geq 1-\alpha\right)-{\sf P}\left({\sf F}(|u_{i,\sT+1}|)\geq 1-\alpha\right)\right|\n\\
&\leq&\mE\left|I\left\{1-\wh{\sf F}(|\wh{u}_{i,\sT+1})\leq \alpha\right\}-I\left\{1-{\sf F}(|u_{i,\sT+1}|)\leq \alpha\right\}\right|\n\\
&\leq&{\sf P}\left(\big|{\sf F}(|u_{i,\sT+1}|)-(1-\alpha)\big|\leq\big|\wh{\sf F}(|\wh{u}_{i,\sT+1}|)-{\sf F}(|u_{i,\sT+1}|)\big|\right)\n\\
&\leq& 2 \mE \left|\wh{\sf F}(|\wh{u}_{i,\sT+1})-{\sf F}(|u_{i,\sT+1}|)\right| \n\\
&\leq& 2 \mE\left( \Big|\wh{\sf F}(|\wh{u}_{i,\sT+1}|) - \wt{\sf F}(|\wh{u}_{i,\sT+1}|)\Big| + \Big|\wt{\sf F}(|\wh{u}_{i,\sT+1}|) - {\sf F}(|\wh{u}_{i,\sT+1})\Big| +  \Big|{\sf F}(|\wh{u}_{i,\sT+1}|) - {\sf F}(|u_{i,\sT+1}|)\Big| \right) \n\\
&\leq& 2 \mE \left(\delta_{i,\sT}^{1/2}\right) + 2 \mE\left( \Big| \wt{\sf F}(|\wh{u}_{i,\sT+1}|+\delta_{i,\sT}^{1/2}) - {\sf F}(|\wh{u}_{i,\sT+1}| + \delta_{i,\sT}^{1/2}) \Big|\right) +\n\\
&& 2 \mE\left( \Big| \wt{\sf F}(|\wh{u}_{i,\sT+1}|-\delta_{i,\sT}^{1/2}) - {\sf F}(|\wh{u}_{i,\sT+1}| - \delta_{i,\sT}^{1/2}) \Big|\right)+ 2 \mE\left( \Big|{\sf F}(|\wh{u}_{i,\sT+1}|+\delta_{i,\sT}^{1/2}) - {\sf F}(|\wh{u}_{i,\sT+1}|-\delta_{i,\sT}^{1/2})\Big |\right)+ \n\\
&&2 \mE \left(  \Big|\wt{\sf F}(|\wh{u}_{i,\sT+1}|) - {\sf F}(|\wh{u}_{i,\sT+1})\Big|\right)+ 2 \mE \left(\Big|{\sf F}(|\wh{u}_{i,\sT+1}|) - {\sf F}(|u_{i,\sT+1}|)\Big|\right)\n\\
&\leq& (2+4M_{\sf F}) \mE \left(\delta_{i,\sT}^{1/2}\right) + 6 \mE \left(\sup_x|\wt{\sf F}(x)-{\sf F}(x)|\right) +  2M_{\sf F} \mE \Big| |\wh{u}_{i,\sT+1}| - |u_{i,\sT+1}| \Big| \n\\
&\leq& (2+4M_{\sf F}) \mE \left(\delta_{i,\sT}^{1/2}\right) + 12 a_\sT+  2M_{\sf F}  \mE(\psi_{i,\sT+1}),\label{eqA.43}
\ee
where $M_{\sf F}$ is defined in Assumption \ref{ass:6}(ii).

On the other hand, note that $\sigma_i(\tau_t)-\sigma_i(\tau_{\sT+1}) = O(h)$, $t=\sT-\sT h, \ldots,\sT$, by the smoothness condition in Assumption \ref{ass:6}(i), where $\tau_t=t/\sT$. Define
\[
{\mathscr S}_1=\left\{t=\sT-\sT h,\ldots,\sT, \ \ \big\vert |\wh{u}_{i,t}|-|u_{i,t}|\big\vert\leq \left[{\sf E}\left(\delta_{i,\sT}^{2}\right)\right]^{1/4}\right\}
\]
and
\[
{\mathscr S}_2=\left\{\big\vert |\wh{u}_{i,\sT+1}|-|u_{i,\sT+1}|\big\vert\leq \left[{\sf E}\left(\psi_{i,\sT+1}\right)\right]^{1/2}\right\},\quad {\mathscr S}_3=\left\{t=\sT-\sT h,\ldots,\sT,\ \  |u_{i,t}|\leq h^{-1/5} \right\}.
\]
Then, by Assumption \ref{ass:6}(i)(ii), we may show that
\be
&&\left|\wh{p}_{i,\sT+1}-\wh{q}_{i,\sT+1}\right|\n\\ &=& \left|\frac{1}{\sT h+1}\sum_{t=\sT-\sT h}^{\sT}I\left\{|\wh{\epsilon}_{i,t}|>|\wh{\epsilon}_{i,\sT+1}|\right\}-\frac{1}{\sT h+1}\sum_{t=\sT-\sT h}^{\sT}I\left\{|\wh{u}_{i,t}|>|\wh{u}_{i,\sT+1}|\right\}\right| \n\\
&=&\left|\frac{1}{\sT h+1}\sum_{t=\sT-\sT h}^{\sT}I\left\{[\sigma_i(\tau_t)-\sigma_i(\tau_{\sT+1})]|\wh{u}_{i,t}|+\sigma_i(\tau_{\sT+1})\left(|\wh{u}_{i,t}|-|\wh{u}_{i,\sT+1}|\right)>0\right\}\right.-\n\\
&&\left.\frac{1}{\sT h+1}\sum_{t=\sT-\sT h}^{\sT}I\{\sigma_i(\tau_{\sT+1})\left(|\wh{u}_{i,t}|-|\wh{u}_{i,\sT+1}|\right)>0\}\right|\n\\
&\leq&\frac{1}{\sT h+1}\sum_{t=\sT-\sT h}^{\sT}\Bigg| I\left\{[\sigma_i(\tau_t)-\sigma_i(\tau_{\sT+1})]|\wh{u}_{i,t}|+\sigma_i(\tau_{\sT+1})\left(|\wh{u}_{i,t}|-|\wh{u}_{i,\sT+1}|\right)>0\right\}-\notag\\
&&I\{\sigma_i(\tau_{\sT+1})\left(|\wh{u}_{i,t}|-|\wh{u}_{i,\sT+1}|\right)>0\}\Bigg|\n\\
&\leq & \frac{1}{\sT h+1}\sum_{t=\sT-\sT h}^{\sT}I\{\left|\sigma_i(\tau_t)-\sigma_i(\tau_{\sT+1})\right|\left|\wh{u}_{i,t}\right|\geq \sigma_i(\tau_{\sT+1})\big||\wh{u}_{i,t}|-|\wh{u}_{i,\sT+1}|\big|\}\n\\
&\leq &  \frac{1}{\sT h+1}\sum_{t=\sT-\sT h}^{\sT}I\big\{ \underline\sigma\big| | u_{i,t}|-|u_{i,\sT+1}|\big|\leq \left|\sigma_i(\tau_t)-\sigma_i(\tau_{\sT+1})\right|\left|u_{i,t}\right|+\n\\
&&3\overline\sigma\left|\wh{u}_{i,t}-u_{i,t}\right|+\overline\sigma\big |\wh{u}_{i,\sT+1}-u_{i,\sT+1}\big|\big\}\n\\
&\leq& \frac{1}{\sT h+1}\left(\sum_{t\in{\mathscr S}_1\cap {\mathscr S}_3}+\sum_{t\notin{\mathscr S}_1^c}+\sum_{t\notin{\mathscr S}_3^c}\right)I\big\{ \underline\sigma\big| | u_{i,t}|-|u_{i,\sT+1}|\big|\leq \left|\sigma_i(\tau_t)-\sigma_i(\tau_{\sT+1})\right|\left|u_{i,t}\right|+\n\\
&&3\overline\sigma\left|\wh{u}_{i,t}-u_{i,t}\right|+\overline\sigma\left|\wh{u}_{i,\sT+1}-u_{i,\sT+1}\right|,\ \ {\mathscr S}_2\big\}+ I\big\{{\mathscr S}_2^c\big\}\n\\
&\leq& \frac{1}{\sT h+1}\sum_{t=\sT-\sT h}^{\sT}I\big\{ \underline\sigma\big| | u_{i,t}|-|u_{i,\sT+1}|\big|\leq \left|\sigma_i(\tau_t)-\sigma_i(\tau_{\sT+1})\right|h^{-1/5}+3\overline\sigma\left[{\sf E}\left(\delta_{i,\sT}^{2}\right)\right]^{1/4}+ \n\\
&&\overline\sigma\left[{\sf E}\left(\psi_{i,\sT+1}\right)\right]^{1/2}\big\}+\frac{1}{\sT h+1}\left(|{\mathscr S}_1^c|+|{\mathscr S}_3^c|\right)+I\big\{{\mathscr S}_2^c\big\},\label{eqA.44}
\ee
where ${\mathscr S}_k^c$ denotes the complement of ${\mathscr S}_k$, $k=1,2,3$. As in the proof of (\ref{eqA.41}), with Assumption \ref{ass:6}(ii) and the Markov inequality, there exists a positive constant $c_1$ such that
\be
&&{\sf E} \left(|{\mathscr S}_1^c|\right)\leq c_1(\sT h+1) \left[{\sf E}\left(\delta_{i,\sT}^{2}\right)\right]^{1/4},\notag\\
&&{\sf E}\left(|{\mathscr S}_3^c|\right)\leq c_1(\sT h+1)h^{4/5},\notag\\
&&{\sf P}\left({\mathscr S}_2^c\right)\leq c_1\left[{\sf E}\left(\psi_{i,\sT+1}\right)\right]^{1/2},\notag
\ee
which, together with (\ref{eqA.44}), lead to
\begin{equation}\label{eqA.45}
\left| {\sf P}\left(\wh{p}_{i,\sT+1}\leq\alpha\right)- {\sf P}\left(\wh{q}_{i,\sT+1}\leq\alpha\right)\right|\leq c_2\left(\left[{\sf E}\left(\delta_{i,\sT}^{2}\right)\right]^{1/4}+ \left[{\sf E}\left(\psi_{i,\sT+1}\right)\right]^{1/2} + h^{4/5}\right),
\end{equation}
where $c_2$ is a positive constant. With (\ref{eqA.43}) and (\ref{eqA.45}), as ${\sf E}\left(\psi_{i,\sT+1}\right)<1$, we have
\begin{eqnarray}
&&\left|{\sf P}\left(\wh{p}_{i,\sT+1}>\alpha\right)-(1-\alpha)\right|\notag\\
&=&\left|{\sf P}\left(\wh{p}_{i,\sT+1}\leq\alpha\right)-\alpha\right|\notag\\
&\leq& \left| {\sf P}\left(\wh{p}_{i,\sT+1}\leq\alpha\right)- {\sf P}\left(\wh{q}_{i,\sT+1}\leq\alpha\right)\right|+\left|{\sf P}\left(\wh{q}_{i,\sT+1}\leq \alpha\right)-\alpha\right|\notag\\
&\leq& c_\dag\left(\left[{\sf E}\left(\delta_{i,\sT}^{2}\right)\right]^{1/4}+\left[{\sf E}\left(\psi_{i,\sT+1}\right)\right]^{1/2}+h^{4/5}+(\sT h)^{-1/3}\right),\label{eqA.46}
\end{eqnarray}
where $c_\dag$ is a sufficiently large positive constant. By virtue of (\ref{eqA.37}) and (\ref{eqA.46}), we complete the proof of (\ref{eq4.1}) in Proposition \ref{prop:4.1}.

We next turn to the proof of (\ref{eq4.2}). Define
\[
\Delta_{i,\sT} = \frac{1}{\sT h+1}\sum_{t=\sT-\sT h}^{\sT}\big( |\wh{\epsilon}_{i,t}|-|\epsilon_{i,t}| \big)^2\quad \text{and}\quad \Psi_{i,\sT+1} =  \big| |\wh{\epsilon}_{i,\sT+1}|-|\epsilon_{i,\sT+1}| \big| .
\]
Observe that
\[
\Delta_{i,\sT} =\frac{1}{\sT h+1}\sum_{t=\sT-\sT h}^{\sT} \sigma_i^2 (\tau_t) \big( |\wh{u}_{i,t}|-|u_{i,t}| \big)^2   %
\geq \underline{\sigma}^2 \delta_{i,\sT}^2,
\]
by Assumption \ref{ass:6}(i), and similarly, $\Psi_{i,\sT+1}\geq \underline\sigma \psi_{i,\sT+1}$. In view of (\ref{eq4.1}), to prove (\ref{eq4.2}), it suffices to show that
\begin{equation}\label{eqA.47}
{\sf E}(\Delta_{i,\sT}) = o(1)\ \ \text{and}\ \ {\sf E}\left(\Psi_{i,\sT+1}\right) = o(1).
\end{equation}

Note that
\begin{eqnarray}
\Delta_{i,\sT} &\leq& \frac{1}{\sT h+1}\sum_{t=\sT-\sT h}^{\sT}  \big( \wh{\epsilon}_{i,t} - \epsilon_{i,t} \big)^2\notag   \\
&=& \frac{1}{\sT h+1}\sum_{t=\sT-\sT h}^{\sT} \left( \sum_{m=1}^{\sM} \wh{w}_{\sT}^{m} \Z_{i,t-1}^{(m)\top} \wh{\btheta}_t^{(m)} - \Z_{i,t-1}^{\top} \btheta_{t} \right)^2 \notag  \\
&\leq& \frac{2}{\sT h+1}\sum_{t=\sT-\sT h}^{\sT} \left[ \sum_{m \in \calD} \wh{w}_{\sT}^{m} \left( \Z_{i,t-1}^{(m)\top} \wh{\btheta}_t^{(m)} - \Z_{i,t-1}^{\top}\btheta_{t} \right) \right]^2 + \notag\\
&& \frac{2}{\sT h+1}\sum_{t=\sT-\sT h}^{\sT} \left[ \sum_{m \notin \calD} \wh{w}_{\sT}^{m} \left( \Z_{i,t-1}^{(m)\top} \wh{\btheta}_t^{(m)} - \Z_{i,t-1}^{\top}\btheta_{t} \right) \right]^2 \notag  \\
&=:& \Omega_{1} + \Omega_{2}.\notag
\end{eqnarray}
For $\Omega_{1}$, noting that $\bPi^{\m\top} \btheta_{t,*}^{(m)}=\btheta_{t}$ for $m\in\calD$ which is not empty, we have
\begin{eqnarray}
\Omega_1 &=& \frac{2}{\sT h+1}\sum_{t=\sT-\sT h}^{\sT} \left[ \sum_{m \in \calD} \wh{w}_{\sT}^{m} \left( \Z_{i,t-1}^{(m)\top} \wh{\btheta}_t^{(m)} - \Z_{i,t-1}^{(m)\top} \btheta_{t,*}^{(m)} + \Z_{i,t-1}^{(m)\top} \btheta_{t,*}^{(m)} - \Z_{i,t-1}^{\top}\btheta_{t} \right) \right]^2 \notag  \\
&\leq& \frac{2}{\sT h+1}\sum_{t=\sT-\sT h}^{\sT}  \left[\sum_{m \in \calD} \wh{w}_{\sT}^{m} \left\| \Z_{i,t-1}^{(m)} \right\| \cdot \left\| \wh{\btheta}_t^{(m)} - \btheta_{t,*}^{(m)} \right\| \right]^2   \notag  \\
&=&  O_P\left(\bar{\rho}\zeta^2\right)\cdot  \frac{1}{\sT h+1}\sum_{t=\sT-\sT h}^{\sT}  \left( \sum_{m \in \calD} (\wh{w}_{\sT}^{m})^2 \sum_{m\in\calD} \left\| \Z_{i,t-1}^{(m)} \right\|^2\right) \notag \\
&=&  O_P\left(\bar{\rho}\zeta^2\right) \cdot \frac{1}{\sT h+1}\sum_{t=\sT-\sT h}^{\sT}  \sum_{m\in\calD} \left\| \Z_{i,t-1}^{(m)} \right\|^2 \notag\\
&=& O_P\left(\sM\bar{\rho}\zeta^2\right)=o_P(1). \label{eqA.48}
\end{eqnarray}
For $\Omega_{2}$, using the argument in the proofs of Theorems \ref{thm:3.3} and \ref{thm:3.5} as well as Assumption \ref{ass:6}(iii), we can similarly prove that
\begin{eqnarray}
\Omega_2&\leq& O_P\left(\bar{\rho}\zeta^2\right) \cdot \frac{1}{\sT h+1}\sum_{t=\sT-\sT h}^{\sT}  \left(\sum_{m \notin \calD} \wh{w}_{\sT}^{m} \left\|\Z_{i,t-1}^{(m)}\right\|\right)^2+ O_P(\bar{\rho}) \cdot \frac{1}{\sT h+1}\sum_{t=\sT-\sT h}^{\sT} \left(\sum_{m \notin \calD} \wh{w}_{\sT}^{m} \right)^2\notag\\
&=& o_P\left(\sM\bar{\rho}\zeta^2\right) +O_P\left(\bar{\rho} \widetilde\xi_{\sT+1}^{\ast-1} \left[ \bar\rho \sM\sN \zeta + \bar\rho\lambda(\sT h)^{-1} + \sN \bar{\rho} h + \sM^2\sN^{1/2}(\sT h)^{-1/2} \right] \right) \notag\\
&=& o_P(1). \label{eqA.49}
\end{eqnarray}
By virtue of (\ref{eqA.48}) and (\ref{eqA.49}), we complete the proof of the first assertion in (\ref{eqA.47}).

Finally, we note that
\begin{eqnarray}
\Psi_{i,\sT+1} &\leq& \left| \sum_{m=1}^\sM \wh{w}_{\sT}^{m} \left( \Z_{i,\sT}^{(m)\top} \wh{\btheta}_\sT^{(m)} - \Z_{i,\sT}^{\top} \btheta_{\sT} + \Z_{i,\sT}^{\top} \btheta_{\sT} - \Z_{i,\sT}^{\top} \btheta_{\sT+1} \right) \right| \notag  \\
&\leq& \left| \sum_{m=1}^\sM \wh{w}_{\sT}^{m} \left( \Z_{i,\sT}^{(m)\top} \wh{\btheta}_\sT^{(m)} - \Z_{i,\sT}^{\top} \btheta_{\sT} \right) \right| + \sum_{m=1}^\sM \wh{w}_{\sT}^{m} \|\Z_{i,\sT}\| \cdot \|\btheta_{\sT+1} - \btheta_{\sT}\|  \notag\\
&=:& \Omega_3+\Omega_4.\notag
\end{eqnarray}
Similarly the proofs of (\ref{eqA.48}) and (\ref{eqA.49}), we may show that
\begin{equation}\label{eqA.50}
{\sf E}(\Omega_3)=o(1).
\end{equation}
By the smoothness condition in Assumption \ref{ass:6}(iii), we also have
\begin{equation}\label{eqA.51}
{\sf E}(\Omega_4)=o(1).
\end{equation}
A combination of (\ref{eqA.50}) and (\ref{eqA.51}) leads to the second assertion of (\ref{eqA.47}).
\end{proof}


\section{Construction of networks in the empirical study}\label{app:B}
\renewcommand\theequation{B.\arabic{equation}}

\noindent \textbf{Global production networks.} We use the origin of value added embodied in final demand to construct the global production network. The annual Trade in Value Added (TiVA) measure is treated as constant within each year and assigned to all months in that year. For the in-sample analysis, the adjacency matrix of the production-network layer is constructed by
\begin{align*}
\omega_{\text{In},ij}^{(1)} = I \left\{\frac{1}{\sT} \sum_{t=1}^{\sT} \text{Production}_{ij,t} > c_{\omega,1} \right\},
\end{align*}
where $ \text{Production}_{ij,t} $ denotes the estimated final demand of economy $ i $ for final goods and services sourced from economy $ j $ in month $ t $, and the threshold $ c_{\omega,1} $ is set as the 60th percentile of $ \sum_{t=1}^{\sT}\text{Production}_{ij,t} / \sT $.

\smallskip

Given the 2–3 year publication lag of the TiVA data, we use lagged input-output information to construct two production-network layers for the out-of-sample prediction. For month $ t $ in the test set, two adjacency matrices of the production network layers are constructed by
\begin{align*}
\omega_{\text{Out},ij,t}^{(1)} = I \left\{  \text{Production}_{ij,t-3} > c_{\omega,2} \right\} \text{ and }   %
\omega_{\text{Out},ij,t}^{(2)} = I \left\{ \frac{1}{L_\text{Roll}} \sum_{m=t-3-L_\text{Roll}+1}^{t-3} \text{Production}_{ij,m} > c_{\omega,3} \right\},
\end{align*}
where $ L_\text{Roll} $ denotes the length of rolling window and $ c_{\omega,2} $ and $ c_{\omega,3} $ are the thresholding parameters which are set as the 70th percentile of $ \text{Production}_{ij,t-3} $ and $ \sum_{m=t-3-L_\text{Roll}+1}^{t-3} \text{Production}_{ij,m} / L_\text{Roll} $, respectively. Here, $ \omega_{\text{Out},ij,t}^{(1)} $ and $ \omega_{\text{Out},ij,t}^{(2)} $ reflect the latest information and the average level on the production network, respectively.

\medskip

\noindent \textbf{Global equity networks.} We use the MSCI database and \cite{DY14}'s VAR decomposition framework to quantify the cross-country financial connectedness. For the in-sample analysis, the adjacency matrix of the equity layer is constructed by
\begin{align*}
\omega_{\text{In},ij}^{(2)} = I \left\{ \text{Equity}_{ij} > c_{\omega,4} \right\},
\end{align*}
where $ \text{Equity}_{ij} $ denotes the $ (i, j) $-th entry of \cite{DY14}'s spillover matrix estimated using the full sample, and $ c_{\omega,4} $ is the 60th percentile of $ \text{Equity}_{ij} $.

\smallskip

For the out-of-sample analysis, we consider three equity-network layers to capture short-, medium-, and long-term connectedness in the global market. For month $ t $ in the test set, the three adjacency matrices of the equity network layers are constructed by
\be
\omega_{\text{Out},ij,t}^{(3)} &=& I \left\{ \text{Equity}_{\text{short:}ij,t} > c_{\omega,5} \right\}, \notag\\   %
\omega_{\text{Out},ij,t}^{(4)} &=& I \left\{ \text{Equity}_{\text{medium:}ij,t} > c_{\omega,6} \right\}, \notag\\   %
\omega_{\text{Out},ij,t}^{(5)} &=& I \left\{ \text{Equity}_{\text{long:}ij,t} > c_{\omega,7} \right\}, \notag  %
\ee
where $ \text{Equity}_{\text{short:}ij,t} $, $ \text{Equity}_{\text{medium:}ij,t} $, $ \text{Equity}_{\text{long:}ij,t} $ denote the $ (i, j) $-th elements of \cite{DY14}'s spillover matrices generated by MSCI indices over windows $ [t-2, t] $, $ [t-11, t] $, and the full training sample, respectively. The thresholds $c_{\omega,5}$, $c_{\omega,6} $ and $c_{\omega,7} $ are set as the 70th percentile of $ \text{Equity}_{\text{short:}ij,t} $, the 30th percentile of $ \text{Equity}_{\text{medium:}ij,t} $ and $ \text{Equity}_{\text{long:}ij,t} $, respectively.

\medskip

\noindent \textbf{Trade and policy networks.} We use monthly bilateral export flows from the IMF's Direction of Trade Statistics (DOTS) to construct trade networks. For the in-sample analysis, the adjacency matrix of the trade-network layer is constructed by
\begin{align*}
\omega_{\text{In},ij}^{(3)} = I \left\{ \frac{1}{\sT} \sum_{t=1}^{\sT} \text{Trade}_{ij,t} > c_{\omega,8} \right\},
\end{align*}
where $ \text{Trade}_{ij,t} $ denotes the FOB export value from economy $ j $ to $ i $ in month $ t $, and $ c_{\omega,8} $ is set to be the 60th percentile of $ \sum_{t=1}^{\sT} \text{Trade}_{ij,t} / \sT $. For the out-of-sample analysis, we consider three trade-network layers to accommodate the 2--3 month publication lag and mitigate seasonal volatility in shipment values. For the month $ t $ in the test set, the three adjacency matrices of the trade network layers are constructed by
\be
\omega_{\text{Out},ij,t}^{(6)} &=& I \left\{ \text{Trade}_{ij,t-12} > c_{\omega,9} \right\}, \notag\\
\omega_{\text{Out},ij,t}^{(7)} &=& I \left\{ \frac{1}{3} \sum_{k=1}^{3} \text{Trade}_{ij,t-12k} > c_{\omega,10} \right\},\notag\\
\omega_{\text{Out},ij,t}^{(8)} &=& I \left\{ \frac{1}{L_\text{Roll}-3} \sum_{m=t-L_\text{Roll}+1}^{t-3} \text{Trade}_{ij,m} > c_{\omega,11} \right\},\notag
\ee
where $c_{\omega,9}$, $c_{\omega,10} $ and $c_{\omega,11} $ are set as the 60th percentile of $ \text{Trade}_{ij,t-12} $, $ \sum_{k=1}^{3} \text{Trade}_{ij,t-12k} / 3 $ and $ \sum_{m=t-L_\text{Roll}+1}^{t-3} $ $ \text{Trade}_{ij,m} / (L_\text{Roll}-3) $, respectively.

\smallskip

We use ideal-point distances to measure bilateral political connectedness. For year $ t $, the political distance between countries $ i $ and $ j $ is defined as the inverse of their absolute ideal-point distance, i.e.,
\begin{align*}
\text{Policy}_{ij,t} = \left| \text{IdealPoint}_{i,t} - \text{IdealPoint}_{j,t} \right|^{-1}.
\end{align*}
Then, the annual ideal-point distances are assigned to all months within the corresponding year. For the in-sample analysis, the adjacency matrix of the policy-network layer is constructed by
\begin{align*}
\omega_{\text{In},ij}^{(4)} = I \left\{ \frac{1}{\sT} \sum_{t=1}^{\sT} \text{Policy}_{ij,t} > c_{\omega,12} \right\},
\end{align*}
where $ c_{\omega,12} $ is set to be the 60th percentile of $ \sum_{t=1}^{\sT} \text{Trade}_{ij,t} / \sT $. For the out-of-sample analysis, we consider two policy-network layers to accommodate one-year publication lag in UN voting data. For the month $ t $ in the test set, the two adjacency matrices of the policy network layers are constructed by
\begin{align*}
\omega_{\text{Out},ij,t}^{(9)} = I\left\{ \text{Policy}_{ij,t-12} > c_{\omega,13} \right\}, \;   %
	\omega_{\text{Out},ij,t}^{(10)} = I\left\{ \frac{1}{L_\text{Roll} - 12} \sum_{m=t-L_\text{Roll}+1}^{t-12} \text{Policy}_{ij,m} > c_{\omega,14} \right\},
\end{align*}
where thresholding parameters $c_{\omega,13}$ and $c_{\omega,14}$ are set as the 60th percentile of $ \text{Policy}_{ij,t-12} $ and $ \sum_{m=t-L_\text{Roll}+1}^{t-12} \text{Policy}_{ij,m} / (L_\text{Roll} - 12) $, respectively.

\bigskip


\end{document}